\newcommand*{\secref}[1]{\S\ref{#1}}
\newcommand*{\email}[1]{%
  \normalsize\href{mailto:#1}{\texttt{#1}}\par
}
\newcommand*{\citep}[1]{\parencite{#1}}
\newcommand*{\citet}[1]{\textcite{#1}}
\def\mdash{{\hbox{-}}}
\newcommand{\adjunction}{\@ifstar\named@adjunction\normal@adjunction}
\newcommand{\normal@adjunction}[4]{%
  #1\colon #2%
  \mathrel{\vcenter{%
    \offinterlineskip\m@th
    \ialign{%
      \hfil$##$\hfil\cr
      \longrightharpoonup\cr
      \noalign{\kern-.3ex}
      \smallbot\cr
      \longleftharpoondown\cr
    }%
  }}%
  #3 \noloc #4%
}
\newcommand{\named@adjunction}[4]{%
  #2%
  \mathrel{\vcenter{%
    \offinterlineskip\m@th
    \ialign{%
      \hfil$##$\hfil\cr
      \scriptstyle#1\cr
      \noalign{\kern.1ex}
      \longrightharpoonup\cr
      \noalign{\kern-.3ex}
      \smallbot\cr
      \longleftharpoondown\cr
      \scriptstyle#4\cr
    }%
  }}%
  #3%
}
\newcommand{\longrightharpoonup}{\relbar\joinrel\rightharpoonup}
\newcommand{\longleftharpoondown}{\leftharpoondown\joinrel\relbar}
\newcommand\noloc{%
  \nobreak
  \mspace{6mu plus 1mu}
  {:}
  \nonscript\mkern-\thinmuskip
  \mathpunct{}
  \mspace{2mu}
}
\newcommand{\smallbot}{%
  \begingroup\setlength\unitlength{.15em}%
  \begin{picture}(1,1)
  \roundcap
  \polyline(0,0)(1,0)
  \polyline(0.5,0)(0.5,1)
  \end{picture}%
  \endgroup
}
\let\op=\relax
\let\innerprod=\relax
\newcommand{\innerprod}[2]{\left\langle #1 {,} \: #2 \right\rangle}
\def\op{\ensuremath{^{\,\mathrm{op}}}}
\newcommand{\nn}{{\mathbb{N}}}
\newcommand{\rr}{{\mathbb{R}}}
\newcommand{\Tt}{{\mathbb{T}}}
\newcommand{\Cat}[1]{\mathbf{#1}}
\newcommand{\cat}[1]{\mathcal{#1}}
\newcommand{\Fun}[1]{\mathsf{#1}}
\newcommand{\Kl}{\mathcal{K}\mspace{-2mu}\ell}
\DeclareMathOperator*{\E}{\mathbb{E}}
\newcommand{\para}{\Cat{Para}}
\newcommand{\Ea}{{\mathcal{E}}}
\newcommand{\Fa}{{\mathcal{F}}}
\newcommand{\Ha}{{\mathcal{H}}}
\newcommand{\Pa}{{\mathcal{P}}}
\DeclareMathOperator{\id}{\mathsf{id}}
\newcommand{\xto}[1]{\xrightarrow{#1}}
\newcommand{\mathoverlap}[2]{\mathpalette\mathoverlap@{{#1}{#2}}}
\newcommand{\mathoverlap@}[2]{\mathoverlap@@{#1}#2}
\newcommand{\mathoverlap@@}[3]{\ooalign{$\m@th#1#2$\crcr\hidewidth$\m@th#1#3$\hidewidth}}
\newcommand{\klcirc}{\bullet} %
\newcommand*{\smallklcirc}{\raisebox{0.18ex}{\scalebox{0.66}{$\klcirc$}}}
\newcommand{\klto}{\mathoverlap{\rightarrow}{\smallklcirc\,}}
\newcommand{\xklto}[1]{\mathoverlap{\xrightarrow{#1}}{\smallklcirc\,}}
\def\lenscirc{\baro}
\newcommand{\lensto}{\mathrel{\ooalign{\hfil$\mapstochar\mkern5mu$\hfil\cr$\to$\cr}}}
\newcommand{\xlensto}[1]{\mathoverlap{\xrightarrow{#1}}{\raisebox{0.375ex}{\scalebox{1.0}[0.33]{$|$}}\,}}
\providecommand*{\xmapstofill@}{%
  \arrowfill@{\mapstochar\relbar}\relbar\rightarrow
}
\providecommand*{\xmapsto}[2][]{%
  \ext@arrow 0395\xmapstofill@{#1}{#2}%
}
\def\slashedarrowfill@#1#2#3#4#5{%
  $\m@th\thickmuskip0mu\medmuskip\thickmuskip\thinmuskip\thickmuskip
   \relax#5#1\mkern-7mu%
   \cleaders\hbox{$#5\mkern-2mu#2\mkern-2mu$}\hfill
   \mathclap{#3}\mathclap{#2}%
   \cleaders\hbox{$#5\mkern-2mu#2\mkern-2mu$}\hfill
   \mkern-7mu#4$%
}
\def\rightslashedarrowfill@{%
  \slashedarrowfill@\relbar\relbar\mapstochar\rightarrow}
\newcommand\xslashedrightarrow[2][]{%
  \ext@arrow 0055{\rightslashedarrowfill@}{#1}{#2}}
\theoremstyle{definition}
\newtheorem{defn}{Definition}[section]
\newtheorem{notation}[defn]{Notation}
\newtheorem{ex}[defn]{Example}
\newtheorem{rmk}[defn]{Remark}
\newtheorem*{rmk*}{Remark}
\newtheorem{prop}[defn]{Proposition}
\newtheorem{prop*}{Proposition}
\newtheorem{lemma}[defn]{Lemma}
\newtheorem{thm}[defn]{Theorem}
\newtheorem{cor}[defn]{Corollary}
\newtheorem*{thm*}{Theorem}
\newtheorem*{cor*}{Corollary}
\definecolor{darkblue}{rgb}{0,0,0.7} 
\newcommand{\red}[1]{{\color{red} #1}}
\author{Toby St. Clere Smithe}
\affil{University of Oxford \\ \& \\ Topos Institute \\ \email{toby@topos.institute}}
\tikzstyle{xshiftu}=[shift = {(#1, 0)}]
\tikzstyle{yshiftu}=[shift = {(0, #1)}]
\tikzstyle{dot}=[inner sep=0.25mm,minimum width=1mm,minimum height=1mm,draw,shape=circle,text depth=-0.2mm]
\tikzstyle{white dot}=[dot,fill=white, draw=black]
\tikzstyle{action}=[dot,fill=white,scale=0.667,inner sep=0.5mm]
\tikzstyle{copier}=[dot,fill=white,scale=2.0]
\tikzstyle{black copier}=[dot,fill=black,scale=2.0]
\tikzstyle{box}=[fill=white, draw=black, shape=rectangle]
\tikzstyle{medium box}=[fill=white, draw=black, shape=rectangle, minimum width=1.5cm, minimum height=0.66cm]
\tikzstyle{arrow box}=[fill=white, draw, shape=rectangle,minimum height=5mm,yshift=-0.5mm,minimum width=5mm]
\tikzstyle{effect}=[regular polygon, regular polygon sides=3,draw]
\tikzstyle{state0}=[regular polygon, regular polygon sides=3,draw,shape border rotate=0]
\tikzstyle{state90}=[regular polygon, regular polygon sides=3,draw,shape border rotate=90]
\tikzstyle{state180}=[regular polygon, regular polygon sides=3,draw,shape border rotate=180]
\tikzstyle{state270}=[regular polygon, regular polygon sides=3,draw,shape border rotate=270]
\tikzstyle{scalar}=[diamond,draw,inner sep=1pt]
\tikzstyle{discarder}=[my ground,draw,inner sep=0pt,minimum width=4.2pt,minimum height=11.2pt,anchor=input,rotate=90]
\tikzstyle{discarder0}=[my ground,draw,inner sep=0pt,minimum width=4.2pt,minimum height=11.2pt,anchor=input,rotate=0]
\tikzstyle{pointy1}=[->]
\tikzstyle{midpoint1}=[-, {postaction={decorate,decoration={markings, mark=at position .5 with {\arrow{>}}}}}]
\tikzstyle{midpointy1pointy}=[->, {postaction={decorate,decoration={markings, mark=at position .5 with {\arrow{>}}}}}]
\tikzstyle{dashed1}=[-, dashed]
\tikzstyle{dotted1}=[-, dotted]
\tikzstyle{dash-pointy}=[->, dashed]
\newsavebox\sbground
\savebox\sbground{%
  \begin{tikzpicture}[baseline=0pt]
    \draw (0,-.1ex) to (0,.85ex)
    node[ground IEC,draw,anchor=input,inner sep=0pt,
    minimum width=3.15pt,minimum height=8.4pt,rotate=90] {};
  \end{tikzpicture}%
}
\newsavebox\sbcopier
\savebox\sbcopier{%
  \begin{tikzpicture}[baseline=0pt]
    \node[copier,scale=0.7] (a) at (0,3.8pt) {};
    \draw (a) -- +(-90:.21);
    \draw (a) -- +(45:.21);
    \draw (a) -- +(135:.21);
  \end{tikzpicture}}
\newsavebox\bsbcopier
\savebox\bsbcopier{%
  \begin{tikzpicture}[baseline=0pt]
    \node[black copier,scale=0.7] (a) at (0,3.8pt) {};
    \draw (a) -- +(-90:.21);
    \draw (a) -- +(45:.21);
    \draw (a) -- +(135:.21);
  \end{tikzpicture}}
\newcommand{\bcopier}{\mathord{\usebox\bsbcopier}}
\newcommand{\deloop}[1]{\mathbf{B}#1}
\newcommand{\Sum}{\sum\limits}
\def\cctx{\mathbb{C}\Fun{tx}}
\newcommand{\efb}[3]{\llparenthesis\,{#1}\,|\,{#2}\,|\,{#3}\,\rrparenthesis}
\newcommand{\blhom}[1]{\ldbrack{#1}\rdbrack}
\newcommand{\BLens}[1]{\Cat{BayesLens}_{\cat{#1}}} %
\newcommand{\PSGame}[1]{\Cat{PSGame}_{#1}}
\def\PSpc{\Pa\Cat{\mdash Spc}}
\def\PFdCartSpc{\Pa\Cat{\mdash FdCartSpc}}
\def\FdCartSpc{\Cat{FdCartSpc}}
\def\Poly{\Cat{Poly}}
\newcommand{\pMCoalgT}[3]{\Cat{Coalg}_{#2}^{#3}({#1})}
\newcommand{\PMCoalgT}[1]{\pMCoalgT{p}{\cat{C}}{#1}}
\newcommand{\MCoalgT}[1]{\pMCoalgT{-}{\cat{C}}{#1}}
\newcommand{\Coalg}{\Cat{Coalg}}
\newcommand{\Hier}{\Cat{Hier}}
\newcommand{\HierE@nostar}{{\Cat{Hier}|_{\cat{E}}}}
\newcommand{\HierE@star}[1]{{\Cat{Hier}_{#1}|_{\cat{E}}}}
\newcommand{\HierE}{\@ifstar{\HierE@star}{\HierE@nostar}}
\def\ltri{\triangleleft}
\DeclareRobustCommand\Equiv{\mathrel{%
  \mathchoice
    {\Equiv@\textfont\displaystyle{.43}}
    {\Equiv@\textfont\textstyle{.43}}
    {\Equiv@\scriptfont\scriptstyle{.45}}
    {\Equiv@\scriptscriptfont\scriptscriptstyle{.5}}
}}
\newcommand{\Equiv@}[3]{%
  \rlap{\raisebox{#3\fontdimen5#12}{$\m@th#2 = $}}%
  \raisebox{-#3\fontdimen5#12}{$\m@th#2 = $}%
}
\def\Comon{\Cat{Comon}}
\def\gauss{\Cat{Gauss}}
\def\DiffSys{\Cat{DiffSys}}
\def\DiffHier{\Cat{DiffHier}}
\def\CartDiffSys{\Cat{CartDiffSys}}
\def\CartDiffHier{\Cat{CartDiffHier}}
\def\H{\Cat{H}}
\def\eP{\Cat{P}}
\def\Laplace{\Fun{L}}
\def\Hebb{\Fun{H}}
\date{\today}
\title{Compositional Active Inference II: \\
  \LARGE Polynomial Dynamics. Approximate Inference Doctrines.}
\begin{document}

\maketitle

\begin{abstract}
  We develop the compositional theory of active inference by introducing \textit{activity}, functorially relating statistical games to the dynamical systems which play them, using the new notion of approximate inference doctrine.
  In order to exhibit such functors, we first develop the necessary theory of dynamical systems, using a generalization of the language of polynomial functors to supply compositional interfaces of the required types: with the resulting polynomially indexed categories of coalgebras, we construct monoidal bicategories of differential and dynamical ``hierarchical inference systems'', in which approximate inference doctrines have semantics.
  We then describe ``externally parameterized'' statistical games, and use them to construct two approximate inference doctrines found in the computational neuroscience literature, which we call the `Laplace' and the `Hebb-Laplace' doctrines: the former produces dynamical systems which optimize the posteriors of Gaussian models; and the latter produces systems which additionally optimize the parameters (or `weights') which determine their predictions.
\end{abstract}

\section{Introduction}

In the first paper in this series \parencite{Smithe2021Compositional1}, we introduced a compositional framework in which to make sense of the `statistical games' played by adaptive and cybernetic systems, with a view to generalizing and contextualizing the free energy principle that lies at the heart of theories of active inference \parencite{Parr2022Active}.
Yet, these statistical games are but one aspect of an active adaptive system, and if a theory of active inference is to be a theory of anything, then it must also acknowledge \textit{activity}!
As a starting point, the framework of statistical games accounts for systems that are open to their environment, and whose predictive performance is accordingly contextual, but the next step --- and the step taken in this paper --- is to animate these statistical games, constructing dynamical systems that play these games, and that can be correspondingly embodied in a changing world.
The behaviours of these model systems can then be compared with observations of natural adaptive systems, and the models can then be refined accordingly.

It is a remarkable fact that our most infamous natural adaptive system, the mammalian brain, seems in part to exemplify the hierarchical bidirectional structure of statistical games:
certain neural circuits in sensory cortex exhibit forward-looking predictions alongside backward-looking corrections that together can be modelled as a kind of dynamical Bayesian inference process, and which appear to couple together to approximate hierarchically structured Bayesian networks \parencite{Bastos2012Canonical}.
Understanding this resemblance is one of the principal motivations for this work.

Since the brain is best understood as an `open' (\textit{i.e.}, embodied and interacting) dynamical system, this resemblance seems to imply a functorial relationship between a category of statistical models on the one hand and a category of open dynamical systems on the other:
the functor would take an appropriately defined statistical model or statistical game, and return a dynamical system that could be understood as playing the game (or inverting the model); the functoriality of this relationship would ensure that the compositional (including hierarchical) structure of the model would be recapitulated in the compositional structure of the resulting dynamical system.

Exhibiting functors of this type, which collectively we call \textit{approximate inference doctrines}, is the task of Section \ref{sec:doctrines}, and indeed we find that the aforementioned neural circuit models arise precisely in this way.
Not only does this explain the mathematical origin of the structure of these circuits, but it simplifies the job of modelling, as one no longer needs to perform a complicated computation for each model:
instead, it is sufficient to obtain the dynamics for each factor of the model, and compose them according to the rules of the category.
(In this paper, we focus on functors \textit{from} statistical models \textit{to} dynamical systems.
One claim of the free energy framework is that it furnishes a universal way to understand adaptive dynamical systems in terms of Bayesian inference \parencite{Friston2019free}, suggesting functors in the opposite direction which we might hypothesize to be appropriately adjoint.
Understanding this relationship is the subject of future work.)

\paragraph{Overview of this paper}

Before we can exhibit any such functors, we need to lay the appropriate mathematical groundwork.
For our purposes, there are two overlapping aspects: a mathematical language in which to talk about stochastic interacting systems; and a definition of open dynamical system that can be expressed in this language and that can be cast into the the relevant compositional form.

In \secref{sec:poly} therefore, we introduce the category of polynomial functors as our choice of language for interaction.
We think of a polynomial as playing a formal role akin to that of the notion of Markov blanket in the informal active inference literature, as it defines the shape or boundary or interface of a type of system; morphisms of polynomials describe how information flows between the boundaries of coupled systems.
In \secref{sec:poly-stoch}, we generalize the usual category of polynomials in order to capture stochastic interactions and the flow of probabilistic information.

Then, in \secref{sec:dyn}, we turn our attention to dynamics.
We begin the section by defining a general notion of dynamical system on an interface using the language of polynomials.
We then package these systems up into categories indexed by polynomials: each category represents a collection of ways that an interface may be animated.
Subsequently, in \secref{sec:dyn-hier}, we bring these categories together with the category of polynomials itself to construct a new collection of categories of hierarchical bidirectional dynamical systems which have the necessary compositional structure to define approximate inference doctrines; then, in \secref{sec:dyn-diff}, we present corresponding categories of \textit{differential} systems, which often form a useful intermediate step on the way to dynamical systems, and show how to obtain dynamical systems from them.

Finally, in \secref{sec:doctrines}, we introduce approximate inference doctrines, concentrating on two that are neuroscientifically relevant.
We begin the section by introducing two pieces of auxiliary technology: categories of Gaussian channels (\secref{sec:doctrines-gauss}, to capture the two neuroscientific doctrines); and parameterized statistical games (\secref{sec:doctrines-para}, to capture parameter learning like synaptic plasticity).
This puts us in the position at last to define two doctrines: the Laplace doctrine (\secref{sec:doctrines-laplace}) for Gaussian channels; and the Hebb-Laplace doctrine (\secref{sec:doctrines-hebb-laplace}) for parameterized Gaussian channels, where not only is the model inverted but the parameters are learnt, too.

\section{Polynomial functors: a language for interacting systems} \label{sec:poly}

In order to be considered \textit{adaptive}, a system must have something to adapt to.
This `something' is often what we call the system's \textit{environment}, and we say that the system is \textit{open} to its environment.
The interface or boundary separating the system from its environment can be thought of as `inhabited' by the system: the system is embodied by its interface of interaction; the interface is animated by the system.
In this way, the system can affect the environment, by changing the shape or configuration of its interface\footnote{
Such changes can be very general: consider for instance the changes involved in producing sound (\textit{e.g.}, rapid vibration of tissue) or light (\textit{e.g.}, connecting a luminescent circuit, or the molecular interactions involved therein).
}; through the coupling, these changes are propagated to the environment.
In turn, the environment may impinge on the interface: its own changes, mediated by the coupling, arrive at the interface as immanent signals; and the type of signals to which the system is alive may depend on the system's configuration (as when an eye can only perceive if its lid is open).
Thus, information flows across the interface.

The mathematical language capturing this kind of inhabited interaction is that of \textit{polynomial functors}, which we adopt following \textcite{Spivak2021Polynomial}.
Informally, a polynomial functor is determined by a type or set of possible configurations, along with, for each possible configuration, a corresponding type or set of possible immanent signals (`inputs').
We will often write $p$ to denote a polynomial, $p(1)$ its possible configurations, and for each $i:p(1)$, $p[i]$ for the corresponding inputs.

In this section, we introduce the basic theory of polynomial functors; in the following subsection, we extend the theory to allow for more general kinds of interaction, to allow for explicitly probabilistic information flows.
Taking a broader view, in this paper we only make use of a fragment of the richness of polynomial interaction: just enough to build open and hierarchical dynamical systems that can perform inference within a single system.
Later in this series, we will expand our use of the language to treat multiple interacting active inference systems, to provide something like a theory of ``polynomial life'', building on our earlier work \parencite{Smithe2021Polynomial}.
Now, however, we begin by introducing the formal definition of the classical category of polynomial functors.

\begin{defn}
  Let $\cat{E}$ be a locally Cartesian closed category (such as $\Cat{Set}$), and denote by $y^A$ the representable copresheaf $y^A := \cat{E}(A, -) : \cat{E} \to \cat{E}$.
  A \emph{polynomial functor} $p$ is a coproduct of representable functors, written $p := \sum_{i : p(1)} y^{p_i}$, where $p(1) : \cat{E}$ is the indexing object.
  The category of polynomial functors in $\cat{E}$ is the full subcategory $\Poly_\Ea \hookrightarrow [\cat{E}, \cat{E}]$ of the \(\cat{E}\)-copresheaf category spanned by coproducts of representables.
  A morphism of polynomials is therefore a natural transformation.
\end{defn}

\begin{rmk} \label{rmk:poly-bundles} %
  Every polynomial functor $P : \cat{E} \to \cat{E}$ corresponds to a bundle $p : E \to B$ in $\cat{E}$, for which $B = P(1)$ and for each $i : P(1)$, the fibre $p_i$ is $P(i)$.
  We will henceforth elide the distinction between a copresheaf $P$ and its corresponding bundle $p$, writing $p(1) := B$ and $p[i] := p_i$, where $E = \sum_i p[i]$.
  A natural transformation $f : p \to q$ between copresheaves therefore corresponds to a map of bundles.
  In the case of polynomials, by the Yoneda lemma, this map is given by a `forwards' map $f_1 : p(1) \to q(1)$ and a family of `backwards' maps $f^\# : q[f_1(\mdash)] \to p[\mdash]$ indexed by $p(1)$, as in the left diagram below.
  Given $f : p \to q$ and $g : q \to r$, their composite $g \circ f : p \to r$ is as in the right diagram below.
  \begin{equation*}
    \begin{tikzcd}
      E & {f^*F} & F \\
      B & B & C
      \arrow["{f^\#}"', from=1-2, to=1-1]
      \arrow[from=1-2, to=1-3]
      \arrow["q", from=1-3, to=2-3]
      \arrow["p"', from=1-1, to=2-1]
      \arrow[from=2-1, to=2-2, Rightarrow, no head]
      \arrow["{f_1}", from=2-2, to=2-3]
      \arrow[from=1-2, to=2-2]
      \arrow["\lrcorner"{anchor=center, pos=0.125}, draw=none, from=1-2, to=2-3]
    \end{tikzcd}
    \qquad\qquad
    \begin{tikzcd}
      E & {f^*g^*G} & G \\
      B & B & D
      \arrow["{(gf)^\#}"', from=1-2, to=1-1]
      \arrow[from=1-2, to=1-3]
      \arrow["r", from=1-3, to=2-3]
      \arrow["p"', from=1-1, to=2-1]
      \arrow[from=2-2, to=2-1, Rightarrow, no head]
      \arrow["{g_1 \circ f_1}", from=2-2, to=2-3]
      \arrow[from=1-2, to=2-2]
      \arrow["\lrcorner"{anchor=center, pos=0.125}, draw=none, from=1-2, to=2-3]
    \end{tikzcd}
  \end{equation*}
  where $(gf)^\#$ is given by the \(p(1)\)-indexed family of composite maps
  $r[g_1(f_1(\mdash))] \xto{f^\ast g^\#} q[f_1(\mdash)] \xto{f^\#} p[\mdash]$.
\end{rmk}

We now recall a handful of useful facts about polynomials and their morphisms, each of which is explained in \textcite{Spivak2021Polynomial} and summarized in \textcite{Spivak2022reference}.

\begin{prop}
  Polynomial morphisms $p \to y$ correspond to sections $p(1) \to \sum_i p[i]$ of the corresponding bundle $p$.
\end{prop}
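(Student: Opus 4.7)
The plan is to unpack the data of a polynomial morphism $p \to y$ using the description recalled in \cref{rmk:poly-bundles}, and observe that it exactly matches the data of a section of the bundle $p : \sum_i p[i] \to p(1)$.

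First, I would note that $y = y^1$ is the representable copresheaf at the terminal object, so that as a polynomial it has indexing object $y(1) = 1$ and, writing $*$ for the unique element of $1$, exponent $y[*] = 1$; equivalently, the bundle corresponding to $y$ is $\id_1 : 1 \to 1$. Then, by the description of polynomial morphisms in \cref{rmk:poly-bundles}, a natural transformation $f : p \to y$ is given by a forwards map $f_1 : p(1) \to y(1) = 1$ together with a $p(1)$-indexed family of backwards maps $f^\# : y[f_1(\mdash)] \to p[\mdash]$, \textit{i.e.}, a family of maps $1 \to p[i]$ indexed by $i : p(1)$.

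The forwards part $f_1$ is uniquely determined (it is the terminal map), so the only content of $f$ is the backwards family, which in $\cat{E}$ (locally Cartesian closed, hence Cartesian) amounts to choosing, for each $i : p(1)$, an element $s(i) : p[i]$. Assembling these into a single morphism gives a map $s : p(1) \to \sum_i p[i]$ satisfying $p \circ s = \id_{p(1)}$, that is, a section of $p$. Conversely, any section $s$ decomposes as such a family, yielding a backwards map and hence a polynomial morphism $p \to y$. These two assignments are mutually inverse by construction, establishing the claimed correspondence.

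The argument is essentially a direct unpacking of definitions, so there is no real obstacle; the only point to be careful about is the identification $y = y^1$ and the resulting triviality of the forwards component, after which the backwards component is manifestly a section.
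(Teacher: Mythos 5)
Your proof is correct and is exactly the standard unpacking that the paper itself relies on (it states this proposition without proof, deferring to the cited references on polynomial functors): identify $y$ with $y^1$, note the forwards component is forced, and read the backwards component as a section. The only cosmetic point is that your element-wise phrasing ``choose $s(i):p[i]$'' should, for general locally Cartesian closed $\cat{E}$, be read via the bundle picture of Remark \ref{rmk:poly-bundles}: the backwards map is a morphism $f_1^\ast 1 \cong p(1) \to \sum_i p[i]$ over $p(1)$, which is precisely a section, so the argument goes through unchanged without assuming a well-pointed category.
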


\begin{prop}
  There is an embedding of \(\cat{E}\) into \(\Poly_\Ea\) given by taking objects \(X : \cat{E}\) to the linear polynomials \(Xy : \Poly_\Ea\) and morphisms \(f : X\to Y\) to morphisms \((f,\id_X) : Xy \to Yy\).
\end{prop}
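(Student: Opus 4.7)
The plan is to define the assignment $\iota \colon \cat{E} \to \Poly_\Ea$ on objects by $\iota(X) := Xy$ and on morphisms by $\iota(f) := (f, \id_X)$, and then to verify that it is a fully faithful functor. The crucial structural observation is that, under the bundle correspondence of \Cref{rmk:poly-bundles}, the linear polynomial $Xy$ corresponds to the identity bundle $\id_X \colon X \to X$, so that every fibre $(Xy)[x]$ is the terminal object $1$ of $\cat{E}$. Once this is noted, each step reduces to a short calculation.

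First I would check functoriality. The identity $\id_X \colon X \to X$ is sent to $(\id_X, \id_X)$, which is the identity morphism on $Xy$ in $\Poly_\Ea$ (the forward component is $\id_X$ on positions, and the backward component on each fibre $1 \to 1$ is forced to be the identity). For composition, given $f \colon X \to Y$ and $g \colon Y \to Z$, the composition formula of \Cref{rmk:poly-bundles} gives
\begin{equation*}
  (g, \id_Y) \circ (f, \id_X) \;=\; \bigl(g \circ f,\ \id_X \circ f^{\ast}\id_Y\bigr) \;=\; (g \circ f, \id_X) \;=\; \iota(g \circ f),
\end{equation*}
since the pullback of $\id_Y$ along $f$ is again an identity.

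For fullness, any morphism $\phi \colon Xy \to Yy$ in $\Poly_\Ea$ consists, by \Cref{rmk:poly-bundles}, of a forward map $\phi_1 \colon X \to Y$ together with a family $\phi^{\#}_x \colon (Yy)[\phi_1(x)] \to (Xy)[x]$ indexed by $x \colon X$. Both source and target of each $\phi^{\#}_x$ are the terminal object $1$, so $\phi^{\#}_x$ is forced to be the identity and the whole backward component coincides with $\id_X$; hence $\phi = \iota(\phi_1)$. Faithfulness is then immediate: if $\iota(f) = \iota(g)$, reading off the forward components gives $f = g$. The only step that requires any care is ensuring that the composition formula yields the identity backward map, and this follows from the fact that in an LCCC the pullback of an identity is an identity; there is no real obstacle.
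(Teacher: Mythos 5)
Your proof is correct and is essentially the expected argument: the paper states this proposition without proof, deferring to the cited references on polynomial functors, and your verification via the bundle picture (a linear polynomial $Xy$ is the identity bundle on $X$, so the fibres are terminal and the backward components are forced, with functoriality following from the composition formula of Remark \ref{rmk:poly-bundles}) is exactly the standard route. Note that you actually establish full faithfulness, which is slightly stronger than the bare claim of an embedding (faithful and injective on objects), but this is true and costs nothing.
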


\begin{prop} \label{prop:poly-tensor}
  There is a symmetric monoidal structure \((\otimes, y)\) on \(\Poly_\Ea\) that we call \textnormal{tensor}, and which is given on objects by \(p\otimes q := \sum_{i:p(1)}\sum_{j:q(1)} y^{p[i]\times q[j]}\) and on morphisms \(f := (f_1, f^\#) : p\to p'\) and \(g := (g_1, g^\#) : q\to q'\) by \(f\otimes g := (f_1 \times g_1, f^\#\times g^\#)\).
\end{prop}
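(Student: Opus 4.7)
The plan is to lift the Cartesian symmetric monoidal structure $(\times, 1)$ on $\cat{E}$ to $\Poly_\Ea$ using the bundle description of Remark \ref{rmk:poly-bundles}. On objects, $p \otimes q$ as written is manifestly a polynomial, with position object $p(1) \times q(1)$ and direction at $(i,j)$ given by $p[i] \times q[j]$; the unit is $y = y^1$, the linear polynomial with one position and trivial direction. The whole proof amounts to checking that each piece of data for a symmetric monoidal category — morphism action, associator, unitors, braiding, coherence — is induced componentwise by the corresponding piece of data for $(\cat{E}, \times, 1)$, applied separately to positions (covariantly) and to fibres (contravariantly).

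First I would verify that the prescribed $f \otimes g$ really is a polynomial morphism in the sense of Remark \ref{rmk:poly-bundles}: its forwards component $f_1 \times g_1 : p(1)\times q(1) \to p'(1)\times q'(1)$ is a morphism in $\cat{E}$, and for each $(i,j)$ the backwards component $f^\#_i \times g^\#_j : p'[f_1(i)] \times q'[g_1(j)] \to p[i] \times q[j]$ is well-typed by bifunctoriality of $\times$. Functoriality of $\otimes$ on each argument then reduces to bifunctoriality of $\times$, together with the composition formula $(gf)^\# = f^\# \circ f^\ast g^\#$ from Remark \ref{rmk:poly-bundles}: one checks that pulling back commutes with Cartesian product of fibres, so that the backwards parts of $(f'\otimes g')\circ(f\otimes g)$ and $(f'f)\otimes(g'g)$ coincide. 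The unitors $p\otimes y \cong p \cong y\otimes p$, associator $(p\otimes q)\otimes r \cong p\otimes(q\otimes r)$, and braiding $p\otimes q \cong q\otimes p$ are all defined by packaging, on positions, the corresponding structural isomorphisms of $(\cat{E}, \times, 1)$, and on directions the same isomorphisms read backwards fibrewise; naturality of each follows from naturality of the underlying Cartesian structure.

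Finally, the pentagon, triangle, and hexagon coherences reduce to their analogues in $(\cat{E}, \times, 1)$, applied on the nose to both the position components and the direction components. The step that requires the most care — and which I would flag as the main obstacle — is tracking the contravariant behaviour of the direction maps under composition: one must ensure that the pullbacks appearing in $f^\ast g^\#$ interact correctly with the symmetry and associativity isomorphisms of $\times$. Because products in $\cat{E}$ are preserved by pullback along any morphism in $\cat{E}$ (indeed $\cat{E}$ is locally Cartesian closed, hence in particular has pullback-stable finite products in each slice), these interactions are strictly compatible, and the axioms for $(\Poly_\Ea, \otimes, y)$ to be a symmetric monoidal category follow from those for $(\cat{E}, \times, 1)$.
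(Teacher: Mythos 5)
Your proof is correct, and it is essentially the standard argument: the tensor is the Dirichlet/parallel product, and all of the structure and coherence data is inherited componentwise from $(\cat{E},\times,1)$, acting covariantly on positions and contravariantly (fibrewise) on directions. Note that the paper itself does not prove this proposition — it is recalled as a known fact with a citation to Spivak's work on polynomial functors — so your write-up supplies the verification that the paper defers to the literature; the one place you are more cautious than necessary is the appeal to pullback-stability of products, since in the fibrewise description of composition from Remark \ref{rmk:poly-bundles} the backwards part of a composite is just composition of fibre maps at matching positions, which commutes with $\times$ by bifunctoriality alone.
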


\begin{prop}
  $(\Poly_\Ea, \otimes, y)$ is symmetric monoidal closed, with internal hom denoted $[{-},{=}]$.
  Explicitly, we have $[p,q] = \sum_{f:p\to q} y^{\sum_{i:p(1)} q[f_1(i)]}$.
  Given an object $A : \cat{E}$, we have $[Ay, y] \cong y^A$.
\end{prop}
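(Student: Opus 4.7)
The plan is to establish the natural currying bijection $\Poly_\Ea(r \otimes p, q) \cong \Poly_\Ea(r, [p,q])$ with $[p,q]$ given by the stated formula. First I would unpack both sides using the description of polynomial morphisms from Remark~\ref{rmk:poly-bundles} together with the tensor formula of Proposition~\ref{prop:poly-tensor}. A morphism $\phi : r \otimes p \to q$ amounts to a pair $(\phi_1, \phi^\#)$ where $\phi_1 : r(1) \times p(1) \to q(1)$ is a forward map and $\phi^\#_{k,i} : q[\phi_1(k,i)] \to r[k] \times p[i]$ is an $r(1) \times p(1)$-indexed family of backward maps. Dually, using $[p,q](1) = \Poly_\Ea(p,q)$ and $[p,q][f] = \sum_{i : p(1)} q[f_1(i)]$, a morphism $\psi : r \to [p,q]$ consists of a forward map $\psi_1 : r(1) \to \Poly_\Ea(p, q)$ and, for each $k : r(1)$, a backward map $\psi^\#_k : \sum_{i : p(1)} q[(\psi_1(k))_1(i)] \to r[k]$.

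I would then construct the bijection explicitly. Given $(\phi_1, \phi^\#)$, define $\psi_1(k) : p \to q$ to have forward component $i \mapsto \phi_1(k, i)$ and backward component $\pi_2 \circ \phi^\#_{k,i} : q[\phi_1(k,i)] \to p[i]$, and set $\psi^\#_k(i, b) := \pi_1 \phi^\#_{k,i}(b)$. Conversely, given $(\psi_1, \psi^\#)$, define $\phi_1(k, i) := (\psi_1(k))_1(i)$ and $\phi^\#_{k,i}(b) := \langle \psi^\#_k(i, b),\, (\psi_1(k))^\#_i(b) \rangle$. These assignments are manifestly mutually inverse. Naturality in $r$ is then a matter of expanding the composition formula of Remark~\ref{rmk:poly-bundles} on both sides and checking that precomposition by $g : r' \to r$ on the $\psi$-side corresponds to precomposition by $g \otimes \id_p$ on the $\phi$-side. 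Since symmetry of $\otimes$ is already in hand by Proposition~\ref{prop:poly-tensor}, this suffices to show $(\Poly_\Ea, \otimes, y)$ is symmetric monoidal closed.

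For the final identification $[Ay, y] \cong y^A$, observe that $Ay$ corresponds under the bundle representation to $\id_A : A \to A$, and $y$ to $\id_1 : 1 \to 1$. By the earlier proposition identifying $\Poly_\Ea(p, y)$ with the set of sections of the bundle $p$, we have $\Poly_\Ea(Ay, y) \cong 1$, the unique section of $\id_A$ being $\id_A$ itself; this corresponds to the morphism $f : Ay \to y$ whose forward component is the terminal map $A \to 1$. Hence $[Ay, y](1) = 1$, and its single fibre is $[Ay, y][f] = \sum_{a : A} y[f_1(a)] = \sum_{a : A} 1 = A$, so $[Ay, y] \cong y^A$ as claimed.

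The principal obstacle I anticipate is the naturality verification, which is essentially bookkeeping rather than conceptual difficulty: one must carefully unfold both the explicit composition formula in $\Poly_\Ea$ and the pairing/projection structure of the backward maps, then check that the resulting composites agree on both sides of the bijection. The bijection of underlying hom-sets, by contrast, is essentially forced by currying together with the forward/backward characterization of polynomial morphisms.
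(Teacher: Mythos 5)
Your argument is correct: the paper states this proposition without proof (citing Spivak's work), and your explicit currying bijection $\Poly_\Ea(r\otimes p, q)\cong\Poly_\Ea(r,[p,q])$, the naturality bookkeeping, and the computation of $[Ay,y]\cong y^A$ via the sections-of-$\id_A$ observation are exactly the standard argument found in that reference. The only caveat is that for a general locally Cartesian closed $\cat{E}$ (rather than $\Set$) the sum indexing $[p,q]$ should be read over an internal object of polynomial morphisms, with your elementwise reasoning interpreted in the internal language of $\cat{E}$ --- but this matches the level of formality at which the paper itself states the formula.
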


\begin{prop}
  The composition of polynomial functors $q \circ p : \cat{E}\to\cat{E}\to\cat{E}$ induces a monoidal structure on $\Poly_\Ea$, which we denote $\ltri$, and call `composition' or `substitution'.
  Its unit is again $y$.
  Famously, $\ltri$-comonoids correspond to categories and their comonoid homomorphisms are cofunctors \parencite{Ahman2016Directed}.
  If $\Tt$ is a monoid, then the comonoid structure on $y^\Tt$ corresponds witnesses it as the category $\deloop{\Tt}$.
  Monomials of the form $Sy^S$ can be equipped with a canonical comonoid structure witnessing the codiscrete groupoid on $S$.
\end{prop}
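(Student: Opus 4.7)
The plan is to verify the composite polynomial is indeed polynomial, inherit the monoidal structure from ordinary functor composition, and then unpack the comonoid correspondence, much of which is standard and cited to \textcite{Ahman2016Directed}. Since every polynomial is by definition a sum of representables, the primary computational step is to show that $q \circ p$ is still a sum of representables; the rest of the monoidal structure is then inherited.

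First I would compute, for $p = \sum_{i : p(1)} y^{p[i]}$ and $q = \sum_{j : q(1)} y^{q[j]}$,
\[
(q \circ p)(X) \;=\; \sum_{j : q(1)} p(X)^{q[j]} \;=\; \sum_{j : q(1)} \sum_{\phi : q[j] \to p(1)} X^{\sum_{k : q[j]} p[\phi(k)]}
\]
where the inner rewriting of $p(X)^{q[j]}$ uses local Cartesian closure of $\cat{E}$ (distributing a $q[j]$-indexed product over a sum). This exhibits $q \triangleleft p$ as a polynomial with positions $\sum_{j : q(1)} p(1)^{q[j]}$ and direction fibre $\sum_{k : q[j]} p[\phi(k)]$ over $(j, \phi)$. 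Since ordinary functor composition is strictly associative and unital with $y = \id_{\cat{E}}$ serving as identity, the associators and unitors are inherited: I would only need to remark that these structure 2-cells are themselves polynomial morphisms because all components they use are coproducts and representables, which $\Poly_\Ea$ is closed under.

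For the comonoid part, I would spell out the data that a comonoid $(c, \epsilon, \delta)$ in $(\Poly_\Ea, \triangleleft, y)$ unpacks to. By the earlier proposition, $\epsilon : c \to y$ is precisely a section of the bundle $c : E \to B$, i.e.\ a map $\eta : B \to E$ with $c \circ \eta = \id_B$, which we interpret as assigning an \emph{identity direction} $\eta_i \in c[i]$ to each position $i : B$. The comultiplication $\delta : c \to c \triangleleft c$, expanded via the formula above, amounts to giving (a) for each position $i$ and direction $d : c[i]$, a codomain position $\delta_1(i, d) : B$, and (b) a \emph{composition} map $c[\delta_1(i, d)] \to c[i]$ whose fibres over $d$ partition $c[i]$ through $d$. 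Coassociativity and counitality then translate verbatim to the associativity and unit laws of a category with object set $B$ and hom-set at $i$ equal to $\sum_{j} \{ d \in c[i] \mid \delta_1(i, d) = j \}$; this is the Ahman--Uustalu theorem, from which the cofunctor statement follows by a parallel unpacking of comonoid homomorphisms.

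The example verifications are then direct: for a monoid $\Tt$, $y^\Tt$ has a unique position and fibre $\Tt$, so the comonoid data is exactly a distinguished element (identity) and a map $\Tt \to \Tt \times \Tt$-style splitting (multiplication), recovering $\deloop{\Tt}$; for $Sy^S$, positions and direction fibres are both $S$, so the section $s \mapsto (s, s)$ and the obvious ``pair'' comultiplication yield the unique comonoid structure, realising the codiscrete groupoid. The main obstacle is not conceptual but notational: correctly matching the quantifier-heavy composite polynomial formula against the categorical axioms without drowning in indices. Working in the bundle picture of Remark \ref{rmk:poly-bundles} and internal language of the locally Cartesian closed category $\cat{E}$ keeps this bookkeeping manageable.
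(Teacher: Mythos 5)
Your proposal is correct and takes essentially the route that the paper itself leaves implicit: this proposition is stated without proof, deferring to \textcite{Spivak2021Polynomial} and \textcite{Ahman2016Directed}, and your argument (distributivity to show $q \circ p$ is again a coproduct of representables, the strict monoidal structure restricting from $[\cat{E},\cat{E}]$, and the unpacking of $(\epsilon,\delta)$ into identities, codomain assignment and composition) is exactly the standard development in those sources. Two small caveats: the distributivity step uses not only local cartesian closure but also that the coproducts of $\cat{E}$ are disjoint and pullback-stable (true in $\Cat{Set}$ and the intended examples); and your assertion that the comonoid structure on $Sy^S$ is \emph{unique} is both unnecessary and false in general --- comonoid structures on $Sy^S$ correspond to categories with object set $S$ whose out-homs are each identified with $S$, of which the codiscrete groupoid is only the canonical choice, which is all the proposition claims.
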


\subsection{Generalized polynomials for stochastic feedback} \label{sec:poly-stoch}

The category of polynomial functors $\Poly_{\cat{E}}$ introduced above for a locally Cartesian closed category $\cat{E}$ can be considered as a category of `deterministic' polynomial interaction; notably, morphisms of polynomials, which encode the coupling of systems' interfaces, do not explicitly incorporate any kind of randomness or uncertainty.
Even if the universe is deterministic, however, the finiteness of systems and their general inability to perceive the totality of their environments make it a convenient modelling choice to suppose that systems' interactions may be uncertain; this will be useful not only in allowing for stochastic interactions between systems, but also to define stochastic dynamical systems `internally' to a category of polynomials.

To reach the desired generalization, we begin by recalling that $\Poly_{\cat{E}}$ is equivalent to the category of Grothendieck lenses for the self-indexing of $\cat{E}$ \parencite{Spivak2020Poly,Spivak2021Polynomial}: $\Poly_{\cat{E}} \cong \int \cat{E}/-\op$, where the opposite is taken pointwise on each $\cat{E}/B$; this is the formal basis for Remark \ref{rmk:poly-bundles}.
We define our categories of generalized polynomials from this perspective, by considering categories indexed by their ``deterministic subcategories'': this allows us to define categories of Grothendieck lenses which behave like $\Poly_{\cat{E}}$ (when restricted to the deterministic case), but also admit uncertain inputs.

\begin{notation}
  Suppose $\cat{C}$ is a symmetric monoidal category.
  We write $\Comon(\cat{C})$ to denote the subcategory of commutative comonoids and comonoid homomophisms in $\cat{C}$.
\end{notation}

\begin{ex}
  Suppose $\Pa:\cat{E}\to\cat{E}$ is a probability monad\footnote{
  By `probability monad', we mean a monad $\Pa$ on $\cat{E}$ taking each object $X$ to an object $\Pa X$ that behaves like a `space of probability distributions on $X$'.
  The monad multiplication performs a `weighted average' of distributions, and the monad unit returns the point or `Dirac delta' distribution on each element.
  For more information on and a number of examples of probability monads, we refer the reader to \textcite{Jacobs2018probability}.
  We will often write $\Pa$ to denote a generic probability monad.
  } on $\cat{E}$.
  Then every object in $\Kl(\Pa)$ is equipped with a canonical comonoid structure (the copy-discard structure \parencite[{\S 2}]{Cho2017Disintegration}), and $\Cat{Comon}\bigl(\Kl(\Pa)\bigr)$ is the wide subcategory of `deterministic' channels.
  Intuitively, this follows almost by definition: a deterministic process is one that has no informational side-effects; that is to say, whether we copy a state before performing the process on each copy, or perform the process and then copy the resulting state, or whether we perform the process and then marginalize, or just marginalize, makes no difference to the resulting state.
  This is just what it means for the process to be a comonoid homomorphism; in other words, deterministic processes introduce no new correlations.
  In fact, $\Cat{Comon}\bigl(\Kl(\Pa)\bigr) \cong \cat{E}$.
\end{ex}

With these ideas in mind, we make the following definitions.

\begin{defn} \label{def:poly-c-idx}
  Suppose $(\cat{C},\otimes,I)$ is a copy-delete category such that $\Comon(\cat{C})$ is finitely complete and $I$ is terminal in $\Comon(\cat{C})$.
  Define an indexed category $\Fun{P} : \Comon(\cat{C})\op\to\Cat{Cat}$ as follows.
  For each object $B:\Comon(\cat{C})$, the category $\Fun{P}(B)$ has as objects the homomorphisms $E\to B$ of $\Comon(\cat{C})$ such that for any other homomorphism $A\to B$, the pullback $A\times_B E$ satisfies the universal property in $\cat{C}$.
  Given a morphism $f : C\to B$, the functor $\Fun{P}(f) : \Fun{P}(B)\to \Fun{P}(C)$ is given by pullback: $\Fun{P}(f) := f^*$; this is well-defined by the universal property.
\end{defn}

\begin{defn} \label{def:poly-c}
  Suppose each functor $\Fun{P}(f) : \Fun{P}(B)\to \Fun{P}(C)$ has a left adjoint, denoted $\Sigma_f$.
  We define the category $\Poly_{\cat{C}}$ of polynomials in $\cat{C}$ to be the category of $\Fun{P}$-lenses: $\Poly_{\cat{C}} := \int \Fun{P}\op$, where the opposite is taken pointwise.
\end{defn}

\begin{ex}
  When $\cat{C}$ is any locally Cartesian closed category such as $\Cat{Set}$, equipped with its Cartesian monoidal structure, Definition \ref{def:poly-c-idx} recovers its self-indexing and hence $\Poly_{\cat{C}}$ is the usual category of polynomials in $\cat{C}$.
\end{ex}

\begin{ex} \label{ex:poly-m-func}
  Suppose $\cat{E}$ is a finitely complete category and $M$ is a monoidal monad on $\cat{E}$.
  Denote by $\iota$ the identity-on-objects inclusion $\cat{E} \hookrightarrow \Kl(M)$ given on morphisms by post-composing with the unit $\eta$ of the monad structure.
  Setting $\cat{C} = \Kl(M)$, we find that for $B : \cat{E}$, $\Fun{P}(B)$ is the full subcategory of $\Kl(M)/B$ on those objects $\iota p : E\klto B$ which correspond to maps $E \xto{p} B \xto{\eta_B} MB$ in the image of $\iota$.
  Given a morphism $f : C\to B$ in $\cat{E}$, the functor $\Fun{P}(f)$ takes objects $\iota p : E\klto B$ to $\iota(f^*p) : f^*E\klto C$ where $f^*p$ is the pullback of $p$ along $f$ in $\cat{E}$, included into $\Kl(M)$ by $\iota$.
  Now suppose that $\alpha$ is a morphism $(E, \iota p:E\klto B) \to (F, \iota q:F\klto B)$ in $\Fun{P}(B)$, and note that since we must have $\iota q\klcirc \alpha = \iota p$, $\alpha$ must correspond to a family of maps $\alpha_x : p[x] \to Mq[x]$ for $x : B$.
  Therefore, $\Fun{P}(f)(\alpha)$ can be defined pointwise as $\Fun{P}(f)(\alpha)_y := \alpha_{f(y)} : p[f(y)] \to Mq[f(y)]$ for $y : C$.
\end{ex}

\begin{notation} \label{not:poly-c}
  For any such monoidal monad $M$ where $\cat{E}$ has dependent sums, we will write $\Poly_M$ as shorthand denoting the corresponding generalized category of polynomials $\Poly_{\Kl(M)}$.
  Since every category $\cat{C}$ corresponds to a trivial monad which we can also denote by $\cat{C}$, this notation subsumes that of Definition \ref{def:poly-c}.
\end{notation}

\begin{rmk}
  We can think of $\Poly_M$ as a dependent version of the category of $M$-monadic lenses, in the sense of \textcite[{\S 3.1.3}]{Clarke2020Profunctor}.
\end{rmk}

Unwinding Example \ref{ex:poly-m-func} further, we find that the objects of $\Poly_M$ are the same polynomial functors as constitute the objects of $\Poly_{\cat{E}}$.
The morphisms $f : p \to q$ are pairs $(f_1, f^\#)$, where $f_1 : B\to C$ is a map in $\cat{E}$ and
$f^\#$ is a family of morphisms $q[f_1(x)]\klto p[x]$ in $\Kl(M)$, making the following diagram commute:

\begin{equation*}
  \begin{tikzcd}
    \sum_{x:B} Mp[x] & \sum_{b:B} q[f_1(x)] & \sum_{y:C} q[y] \\
    B & B & C
    \arrow["{f^\#}"', from=1-2, to=1-1]
    \arrow[from=1-2, to=1-3]
    \arrow["q", from=1-3, to=2-3]
    \arrow["{\eta_B}^* p"', from=1-1, to=2-1]
    \arrow[from=2-1, to=2-2, Rightarrow, no head]
    \arrow["{f_1}", from=2-2, to=2-3]
    \arrow[from=1-2, to=2-2]
    \arrow["\lrcorner"{anchor=center, pos=0.125}, draw=none, from=1-2, to=2-3]
  \end{tikzcd}
\end{equation*}

Our principal example of interest is of this form, being $\Poly_{\Pa}$ for a probability monad $\Pa$ on $\cat{E}$\footnote{
Ideally, $\cat{E}$ would also be locally Cartesian closed, so that $\Poly_{\Pa}$ recapitulates much of the basic structure of $\Poly_{\Cat{Set}}$ (see Remark \ref{rmk:poly-m-pi-types}): such examples include the category $\Cat{QBS}$ of quasi-Borel spaces equipped with the quasi-Borel distribution monad \parencite{Heunen2017Convenient}, or the category $\Cat{Set}$ equipped with the finitely-supported distribution monad.}.
We we consider each such category $\Poly_{\Pa}$ to be a category of \textit{polynomials with stochastic feedback}.

\begin{rmk}
  By assuming that the category $\cat{C}$ has a monoidal structure $(\otimes, I)$, its corresponding generalized category of polynomials $\Poly_{\cat{C}}$ inherits a tensor akin to that defined in Proposition \ref{prop:poly-tensor}, and which we also denote by $(\otimes, I)$:
  the definition only differs by substituting the structure $(\otimes, I)$ on $\cat{C}$ for the product $(\times, 1)$ on $\cat{E}$.
  This follows from the monoidal Grothendieck construction: $\Fun{P}$ is lax monoidal, with laxator taking $p : \Fun{P}(B)$ and $q : \Fun{P}(C)$ to $p\otimes q : \Fun{P}(B\otimes C)$.

  On the other hand, for $\Poly_{\cat{C}}$ also to have an internal hom $[q,r]$ requires each fibre of $\Fun{P}$ to be closed with respect to the monoidal structure.
  In cases of particular interest, $\Comon(\cat{C})$ will be locally Cartesian closed, and restricting $\Fun{P}$ to its self-indexing produces fibres which are thus Cartesian monoidal closed.
  In these cases, we can think of the broader fibres of $\Fun{P}$, and thus $\Poly_{\cat{C}}$ itself, as being `deterministically' closed.
  This means, for the stochastic example $\Poly_\Pa$, we get an internal hom satisfying the adjunction $\Poly_\Pa(p\otimes q, r) \cong \Poly_\Pa(p, [q,r])$ only when the backwards components of morphisms $p\otimes q \to r$ are `uncorrelated' between $p$ and $q$.
\end{rmk}

\begin{rmk} \label{rmk:poly-m-pi-types}
  For $\Poly_{\cat{C}}$ to behave faithfully like the usual category of polynomial functors, we should want the substitution functors $\Fun{P}(f) : \Fun{P}(C) \to \Fun{P}(B)$ to have right adjoints as well as left.
  As in the preceding remark, these only obtain in restricted circumstances; we will consider the case of $\Poly_M$ for a monad $M$, writing $f^*$ to denote the functor $\Fun{P}(f)$.

  Denote the putative right adjoint by $\Pi_f : \Fun{P}(B) \to \Fun{P}(C)$, and for $\iota p : E\klto B$ suppose that $(\Pi_f E)[y]$ is given by the set of `partial sections' $\sigma : f^{-1}\{y\} \to ME$ of $p$ over $f^{-1}\{y\}$ as in the commutative diagram:
  \[\begin{tikzcd}
	& {f^{-1}\{y\}} & {\{y\}} \\
	ME & B & C
	\arrow[from=1-2, to=1-3]
	\arrow[from=1-3, to=2-3]
	\arrow[from=1-2, to=2-2]
	\arrow["f", from=2-2, to=2-3]
	\arrow["\lrcorner"{anchor=center, pos=0.125}, draw=none, from=1-2, to=2-3]
	\arrow["{{\eta_B}^*p}", from=2-1, to=2-2]
	\arrow["\sigma"', curve={height=12pt}, from=1-2, to=2-1]
  \end{tikzcd}\]
  Then we would need to exhibit a natural isomorphism $\Fun{P}(B)(f^*D, E) \cong \Fun{P}(C)(D, \Pi_f E)$.
  But this will only obtain when the `backwards'  components $h^\#_y : D[y]\to M(\Pi_f E)[y]$ are in the image of $\iota$---otherwise, it is not generally possible to pull $f^{-1}\{y\}$ out of $M$.
\end{rmk}

\section{Open dynamical systems on polynomial interfaces} \label{sec:dyn}

Having constructed $\Poly_{\cat{C}}$, we are now in a position to construct, for each $p : \Poly_{\cat{C}}$, a category of open dynamical systems $\Coalg^\Tt_{\cat{C}}(p)$ with interface $p$, and we can even state the definition entirely in the language of $\Poly_{\cat{C}}$.
Here, $\Tt$ is a monoid object $(\Tt, +, 0)$ in $\Comon(\cat{C})$ that represents time, which is necessary in general to ensure that the dynamics can `flow' appropriately; slightly more formally, we will need to ensure that evolving the dynamics for time $t:\Tt$ and then $s:\Tt$ produces the same trajectory as evolving it for time $t+s$, and that evolving it for no time $0:\Tt$ induces no change.
If we choose $\cat{C} = \Kl(\Pa)$ for $\Pa$ a probability monad, we obtain categories of stochastic systems that we call \textit{open Markov processes}, although we develop the theory in a more general context (allowing for other types of transition, as as nondeterministic).

We first give a concise definition, internal to $\Poly_{\cat{C}}$, before unpacking it into a more elementary form.

\begin{defn} \label{def:poly-dyn}
  An open dynamical system with interface $p : \Poly_{\cat{C}}$, state space $S : \cat{C}$ and time $(\Tt, +, 0)$ is a polynomial morphism $\beta : Sy^S \to [\Tt y, p]$ such that, for any section $\sigma : p \to y$, the induced morphism
  \begin{gather*}
    Sy^S \xto{\beta} [\Tt y, p] \xto{[\Tt y, \sigma]} [\Tt y, y] \xto{\sim} y^\Tt
  \end{gather*}
  is a $\ltri$-comonoid homomorphism.
\end{defn}

Unpacking this definition gives us the following characterization:

\begin{prop} \label{prop:pM-coalg}
  An open dynamical system $\beta : Sy^S \to [\Tt y, p]$ in $\Poly_{\cat{C}}$ consists in a triple \((S, \beta^o, \beta^u)\) of a state space \(S : \cat{C}\) and two morphisms \(\beta^o : \Tt \times S \to p(1)\) in $\Comon(\cat{C})$ and \(\beta^u : \sum_{t:\Tt} \sum_{s:\mathbb{S}} p[\vartheta^o(t,s)] \to S\) in $\cat{C}$, such that, for any section \(\sigma : p(1) \to \sum_{i:p(1)} p[i]\) of \(p\), the morphisms \(\beta^\sigma : \Tt \times S \to S\) given by
  \[
  \Sum_{t:\Tt} S \xto{\beta^o(t)^\ast \sigma} \Sum_{t:\Tt} \Sum_{s:S} p[\beta^o(t, s)] \xto{\beta^u} S
  \]
  form an object in the functor category \(\Cat{Cat}\big(\deloop{\Tt}, \cat{C}\big)\), where \(\deloop{\Tt}\) is the delooping of \(\Tt\).
  We call the closed system \(\beta^\sigma\), induced by a section \(\sigma\) of \(p\), the   \textnormal{closure} of \(\beta\) by \(\sigma\).
  Equivalently, we can say that $\beta^\sigma : \Tt \times S \to S$ forms an action of the monoid $\Tt$ on $S$ in $\cat{C}$.
\end{prop}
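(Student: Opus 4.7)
The plan is to unfold Definition \ref{def:poly-dyn} in two stages: first, expand the polynomial morphism $\beta : Sy^S \to [\Tt y, p]$ via the explicit formula for the internal hom to extract the triple $(S, \beta^o, \beta^u)$; second, translate the $\ltri$-comonoid-homomorphism condition on the induced composite $Sy^S \to y^\Tt$ into the statement that $\beta^\sigma$ is a $\Tt$-action on $S$, using the Ahman-Uustalu correspondence \parencite{Ahman2016Directed} between $\ltri$-comonoids and small categories.

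\textbf{Unpacking $\beta$.} The explicit formula $[p,q] = \sum_{f : p\to q} y^{\sum_{i:p(1)} q[f_1(i)]}$ shows that the positions of $[\Tt y, p]$ are polynomial morphisms $\Tt y \to p$, which since $\Tt y$ has trivial fibres are simply maps $\phi : \Tt \to p(1)$; and that the fibre of $[\Tt y, p]$ over such a $\phi$ is $\sum_{t : \Tt} p[\phi(t)]$. The morphism $\beta$ therefore decomposes into a forward part $\beta_1 : S \to \Comon(\cat{C})(\Tt, p(1))$ in $\Comon(\cat{C})$, which transposes to $\beta^o : \Tt \times S \to p(1)$ in $\Comon(\cat{C})$, together with a backward part $\beta^\#_s : \sum_{t:\Tt} p[\beta^o(t,s)] \to S$ in $\cat{C}$ for each $s : S$, which collects into a single $\beta^u : \sum_{t:\Tt}\sum_{s:S} p[\beta^o(t,s)] \to S$. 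This yields the announced triple.

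\textbf{Unpacking the closure condition.} Given a section $\sigma : p \to y$, the morphism $[\Tt y, \sigma]$ sends a position $\phi : \Tt \to p(1)$ of $[\Tt y, p]$ to $\sigma \circ \phi : \Tt \to \sum_i p[i]$, and the canonical isomorphism $[\Tt y, y] \xto{\sim} y^\Tt$ discards the forward data. The composite $Sy^S \to y^\Tt$ therefore has trivial forward component and a backward component that computes precisely to $\beta^u \circ \beta^o(\mdash)^\ast \sigma : \Tt \times S \to S$, namely the map $\beta^\sigma$ in the statement. Under the Ahman-Uustalu correspondence, $Sy^S$ corresponds to the codiscrete groupoid on $S$ and $y^\Tt$ to $\deloop{\Tt}$, so a $\ltri$-comonoid homomorphism between them is a cofunctor. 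Since the codiscrete groupoid has a unique morphism between any two objects, the data of such a cofunctor reduces to specifying, for each $s : S$ and $t : \Tt$, the codomain $\beta^\sigma(t,s) : S$ of the unique lift of $t$; and the two cofunctor axioms collapse to $\beta^\sigma(0,s) = s$ and $\beta^\sigma(t + t', s) = \beta^\sigma(t', \beta^\sigma(t, s))$. These are exactly the axioms of a $\Tt$-action on $S$, equivalently of an object of $\Cat{Cat}(\deloop{\Tt}, \cat{C})$.

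\textbf{The main obstacle.} The only delicate point is the bookkeeping of which components of $\beta$ must lie in $\Comon(\cat{C})$ and which are permitted to live in all of $\cat{C}$: the forward $\beta^o$ must be a comonoid homomorphism (so that $\beta$ is a bona fide morphism in $\Poly_{\cat{C}}$), while $\beta^u$ is allowed to live in $\cat{C}$ and, in the stochastic case $\Poly_\Pa$, carries probabilistic transitions. Following this distinction through the computation of $[\Tt y, p]$ requires care, in view of the `deterministic' nature of the internal closure highlighted in the discussion preceding Remark \ref{rmk:poly-m-pi-types}; once it is in place, both the unpacking and the collapse of the cofunctor condition to the monoid action axioms are mechanical.
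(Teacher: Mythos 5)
Your proposal is correct and follows essentially the route the paper intends: the paper presents this proposition as a direct unpacking of Definition \ref{def:poly-dyn}, using exactly the ingredients you invoke (the explicit formula for the internal hom $[p,q]$, the isomorphism $[Ay,y]\cong y^A$, and the Ahman--Uustalu identification of $\ltri$-comonoids with categories, $Sy^S$ with the codiscrete groupoid and $y^\Tt$ with $\deloop{\Tt}$, so that the comonoid-homomorphism condition becomes the monoid-action axioms for $\beta^\sigma$). Your write-up simply makes this unpacking, including the $\Comon(\cat{C})$-versus-$\cat{C}$ bookkeeping, more explicit than the paper does.
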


Open dynamical systems on $p$ form a category, which we denote by $\Coalg_{\cat{C}}^\Tt(p)$.
We can exhibit this category abstractly, by noting that a morphism $Sy^S \to r$ of polynomials is equivalent to a morphism $S \to r(S)$ in $\cat{C}$: that is, to an $r$-coalgebra; morphisms of open dynamical systems then correspond to coalgebra homomorphisms, and this gives us a category.
For our purposes here, however, it is more illuminating to exhibit $\Coalg_{\cat{C}}^\Tt(p)$ explicitly.

\begin{prop} \label{prop:pM-coalg-cat}
  Open dynamical systems on $p$ with time \(\Tt\) form a category, denoted $\Coalg_{\cat{C}}^\Tt$.
  Its morphisms are defined as follows.
  Let \(\vartheta := (X, \vartheta^o, \vartheta^u)\) and \(\psi := (Y, \psi^o, \psi^u)\) be two such systems.
  A morphism \(f : \vartheta \to \psi\) consists in a morphism \(f : X \to Y\) in $\cat{C}$ such that, for any time \(t : \Tt\) and global section \(\sigma : p(1) \to \Sum_{i:p(1)} p[i]\) of \(p\), the following square commutes:
  \[\begin{tikzcd}
	X & {\Sum_{x:X} p[\vartheta^o(t, x)]} & X \\
	Y & {\Sum_{y:Y} p[\psi^o(t, y)]} & Y
	\arrow["{\vartheta^o(t)^\ast \sigma}", from=1-1, to=1-2]
	\arrow["{\vartheta^u(t)}", from=1-2, to=1-3]
	\arrow["f"', from=1-1, to=2-1]
	\arrow["f", from=1-3, to=2-3]
	\arrow["{\psi^o(t)^\ast \sigma}"', from=2-1, to=2-2]
	\arrow["{\psi^u(t)}"', from=2-2, to=2-3]
  \end{tikzcd}\]
  The identity morphism \(\id_\vartheta\) on \(\vartheta\) is given by the identity morphism \(\id_X\) on its state space \(X\).
  Composition of morphisms is given by composition of the morphisms of the state spaces.
\end{prop}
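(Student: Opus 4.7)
The plan is to verify the three category axioms directly from the explicit square-commutation condition in the statement, taking advantage of the abstract observation (made in the paragraph preceding the proposition) that an open dynamical system on $p$ is equivalently an $r$-coalgebra in $\cat{C}$, where $r(S) = [\Tt y, p](S)$, and that the intended morphisms will be precisely coalgebra homomorphisms preserving the comonoid-homomorphism constraint of \Cref{def:poly-dyn}. Viewed through that lens, the content of the proposition is that the explicit condition displayed in the square is equivalent to the coalgebra-homomorphism condition, and that restricting to systems whose closures are monoid actions is stable under composition.

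First I would interpret the displayed square. Because the middle vertical arrow is not specified, the relevant commutativity is that of the outer rectangle, i.e.\ the identity of morphisms
\[
f \circ \vartheta^u(t) \circ \vartheta^o(t)^\ast\sigma \;=\; \psi^u(t) \circ \psi^o(t)^\ast\sigma \circ f
\]
in $\cat{C}$, for every $t:\Tt$ and every section $\sigma$ of $p$. Under Proposition~\ref{prop:pM-coalg} this says exactly that $f$ intertwines the closures $\vartheta^\sigma(t,-)$ and $\psi^\sigma(t,-)$ for each $\sigma$, which is the coalgebra-homomorphism condition at each section. This sets up the bookkeeping for the category axioms.

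Second, I would check the identity and composition laws. For the identity $\id_X:X\to X$, substituting $f=\id_X$ and $\psi=\vartheta$ makes both composites in the rectangle equal to $\vartheta^u(t)\circ\vartheta^o(t)^\ast\sigma$, so the condition holds trivially. For composition, given $f:\vartheta\to\psi$ and $g:\psi\to\phi$, I would paste the two rectangles horizontally (equivalently, chain the two intertwining equalities):
\[
(g\circ f)\circ\vartheta^u(t)\circ\vartheta^o(t)^\ast\sigma = g\circ\psi^u(t)\circ\psi^o(t)^\ast\sigma\circ f = \phi^u(t)\circ\phi^o(t)^\ast\sigma\circ(g\circ f),
\]
using the condition for $f$ and then for $g$. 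Hence $g\circ f$ is again a morphism in $\Coalg_{\cat{C}}^\Tt$. Associativity and unitality of this composition then descend immediately from the associativity and unitality of composition in $\cat{C}$, since a morphism in $\Coalg_{\cat{C}}^\Tt$ is by definition a morphism of state spaces satisfying an equational condition.

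The only delicate point --- and what I expect to be the main obstacle --- is making sure the pullbacks $\vartheta^o(t)^\ast\sigma$ and $\psi^o(t)^\ast\sigma$ really do live in $\cat{C}$ and behave naturally, so that the horizontal pasting above is legitimate in the generalized setting of $\Poly_{\cat{C}}$. This is guaranteed by \Cref{def:poly-c-idx}, which demands that the relevant pullbacks (taken along homomorphisms $\vartheta^o(t):\Tt\otimes X\to p(1)$ in $\Comon(\cat{C})$) satisfy the universal property in $\cat{C}$; naturality of pullback then ensures that the middle object transforms coherently along $f$ in $\cat{C}$, even though $f$ itself need not be deterministic. With this in hand, everything else is a short unwinding.
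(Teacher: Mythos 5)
Your verification is correct and matches the paper's (implicit) argument: the paper simply records that systems are $[\Tt y,p]$-coalgebras and morphisms are coalgebra homomorphisms, leaving the routine identity/composition/associativity check — which you carry out by pasting the outer rectangles — unstated. One minor remark: your "delicate point" about naturality of pullbacks is not actually needed, since the displayed condition is just an equation between morphisms $X \to Y$ in $\cat{C}$ (the pullbacks are already part of the data guaranteed by Proposition \ref{prop:pM-coalg}), so the chaining is pure equational reasoning.
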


Since open dynamical systems on $p$ are morphisms $Sy^S \to [\Tt y, p]$ of polynomials, there is a natural covariant reindexing of systems along morphisms $p\to q$, given by postcomposing with the map $[\Tt y, p] \to [\Tt y, q]$ induced by the functor $[\Tt y, {-}]$.
This gives \(\MCoalgT{\Tt}\) the structure of an opindexed category \(\Poly_{\cat{C}} \to \Cat{Cat}\), which we spell out in the following proposition.

\begin{prop} \label{prop:pM-coalg-idx}
  \(\PMCoalgT{\Tt}\) extends to an opindexed category, \(\MCoalgT{\Tt} : \Poly_{\cat{C}} \to \Cat{Cat}\).
  Suppose \(\varphi : p \to q\) is a morphism of polynomials.
  We define a corresponding functor
  \(\pMCoalgT{\varphi}{\cat{C}}{\Tt} : \pMCoalgT{p}{\cat{C}}{\Tt} \to \pMCoalgT{q}{\cat{C}}{\Tt}\)
  as follows.
  Suppose \((X, \vartheta^o, \vartheta^u) : \pMCoalgT{p}{\cat{C}}{\Tt}\) is an object (system) in \(\pMCoalgT{p}{\cat{C}}{\Tt}\).
  Then \(\pMCoalgT{\varphi}{\cat{C}}{\Tt}(X, \vartheta^o, \vartheta^u)\) is defined as the triple \((X, \varphi_1 \circ \vartheta^o, \vartheta^u \circ {\vartheta^o}^\ast \varphi^\#) : \pMCoalgT{q}{\cat{C}}{\Tt}\), where the two maps are explicitly the following composites:
  \begin{gather*}
    \Tt \times X \xto{\vartheta^o} p(1) \xto{\varphi_1} q(1) \, ,
    \qquad
    \Sum_{t:\Tt} \Sum_{x:X} q[\varphi_1 \circ \vartheta^o(t, x)] \xto{{\vartheta^o}^\ast \varphi^\#} \Sum_{t:\Tt} \Sum_{x:X} p[\vartheta^o(t, x)] \xto{\vartheta^u} X \, .
  \end{gather*}
  On morphisms, \(\pMCoalgT{\varphi}{\cat{C}}{\Tt}(f) : \pMCoalgT{\varphi}{\cat{C}}{\Tt}(X, \vartheta^o, \vartheta^u) \to \pMCoalgT{\varphi}{\cat{C}}{\Tt}(Y, \psi^o, \psi^u)\) is given by the same underlying map \(f : X \to Y\) of state spaces.
\end{prop}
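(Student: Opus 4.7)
My plan is to leverage the abstract formulation of open dynamical systems as morphisms $\beta : Sy^S \to [\Tt y, p]$ in $\Poly_{\cat{C}}$ (from \defref{def:poly-dyn}), which makes reindexing along $\varphi : p \to q$ essentially automatic via postcomposition with $[\Tt y, \varphi]$.

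First, I will define the action on objects abstractly: given $\vartheta : Sy^S \to [\Tt y, p]$, set $\MCoalgT{\Tt}(\varphi)(\vartheta) := [\Tt y, \varphi] \circ \vartheta : Sy^S \to [\Tt y, q]$, then unpack this via \cref{prop:pM-coalg} to recover the explicit formulas $(X, \varphi_1 \circ \vartheta^o, \vartheta^u \circ {\vartheta^o}^\ast \varphi^\#)$ stated in the proposition. To show this is well-defined as a dynamical system, I need to verify the closure condition: for any section $\sigma : q \to y$, the closure $[\Tt y, \sigma] \circ [\Tt y, \varphi] \circ \vartheta$ is a $\ltri$-comonoid homomorphism. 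But $[\Tt y, -]$ is a functor, so this composite equals $[\Tt y, \sigma \circ \varphi] \circ \vartheta$, which is the closure of $\vartheta$ by the section $\sigma \circ \varphi : p \to y$ of $p$; since $\vartheta$ is itself a dynamical system on $p$, this composite is a comonoid homomorphism by assumption.

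Next, for the action on morphisms, given $f : (X, \vartheta^o, \vartheta^u) \to (Y, \psi^o, \psi^u)$ in $\PMCoalgT{\Tt}$, I will show the same underlying $f : X \to Y$ gives a morphism between the reindexed systems. This amounts to verifying, for each $t : \Tt$ and each section $\sigma$ of $q$, that the corresponding square commutes. I will decompose the square for the reindexed systems into two subdiagrams: the ``backwards'' square involving ${\vartheta^o}^\ast \varphi^\#$ and ${\psi^o}^\ast \varphi^\#$, which commutes by naturality of pullback along $f$ together with $\psi^o \circ (t, f) = f \circ \vartheta^o(t, -)$ lifted appropriately; and the original compatibility square for $f$ with respect to $\vartheta$ and $\psi$ closed by $\sigma \circ \varphi$, which commutes by hypothesis that $f$ is a morphism in $\pMCoalgT{p}{\cat{C}}{\Tt}$.

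Finally, I need to check functoriality. For identities, $(\id_p)_1 = \id_{p(1)}$ and $(\id_p)^\# = \id$, so $\varphi_1 \circ \vartheta^o = \vartheta^o$ and the backwards reparameterization reduces to the identity, yielding $\pMCoalgT{\id_p}{\cat{C}}{\Tt} = \id$. For composition, given $\varphi : p \to q$ and $\psi : q \to r$, I will use the explicit composition formula from \cref{rmk:poly-bundles}, namely $(\psi \circ \varphi)_1 = \psi_1 \circ \varphi_1$ and $(\psi \varphi)^\# = \varphi^\# \circ {\varphi_1}^\ast \psi^\#$, to verify that $\pMCoalgT{\psi \circ \varphi}{\cat{C}}{\Tt} = \pMCoalgT{\psi}{\cat{C}}{\Tt} \circ \pMCoalgT{\varphi}{\cat{C}}{\Tt}$; both sides send $(X, \vartheta^o, \vartheta^u)$ to $(X, \psi_1 \varphi_1 \vartheta^o, \vartheta^u \circ {\vartheta^o}^\ast \varphi^\# \circ (\varphi_1 \vartheta^o)^\ast \psi^\#)$, which agrees via the pullback pasting lemma.

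The main obstacle is not any one step in isolation but the bookkeeping required to translate between the clean abstract presentation (postcomposition with $[\Tt y, \varphi]$) and the explicit pointwise data $(\vartheta^o, \vartheta^u)$; in particular, ensuring that the ``backwards'' family of morphisms in $\cat{C}$ composes correctly under pullback in $\Comon(\cat{C})$ requires invoking both the pullback pasting lemma and the fact (from \cref{def:poly-c-idx}) that these pullbacks satisfy their universal property in $\cat{C}$, not just in $\Comon(\cat{C})$.
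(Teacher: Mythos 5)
Your proposal is correct and takes essentially the paper's own route: the paper gives no separate proof beyond the preceding observation that reindexing is postcomposition with $[\Tt y, \varphi]$ induced by $[\Tt y, -]$, which is exactly your starting point, and your verifications (the closure of the reindexed system by a section $\sigma$ of $q$ is the closure of $\vartheta$ by the section $\sigma \circ \varphi$ of $p$; pullback pasting for functoriality) supply the routine details the paper leaves implicit. One caveat: the auxiliary identity $\psi^o(t, f(-)) = \vartheta^o(t, -)$ you invoke for the ``backwards'' square is neither implied by the paper's definition of morphisms of systems (Proposition \ref{prop:pM-coalg-cat} constrains only the closure composites, so there is in general no middle vertical map with which to paste two subsquares) nor needed, since your other clause---that the reindexed compatibility square is literally the hypothesis square for the section $\sigma \circ \varphi$---already gives the result.
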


It is sometimes useful to relate dynamical systems with different time monoids---for instance, to discretize a continuous-time system, or to adjust the timescale of evolution of a system---and for these purposes we have the following proposition.

\begin{prop} \label{prop:time-change}
  Any map $f:\Tt'\to\Tt$ of monoids induces an indexed functor $\Coalg^\Tt_{\cat{C}} \to \Coalg^{\Tt'}_{\cat{C}}$.
  \begin{proof}
    We first consider the induced functor $\Coalg^\Tt_{\cat{C}}(p) \to \Coalg^{\Tt'}_{\cat{C}}(p)$, which we denote by $\Delta_f^p$.
    Note that we have a morphism $[f y,p] : [\Tt y, p] \to [\Tt' y, p]$ of polynomials by substitution (precomposition).
    A system $\beta$ in $\Coalg^\Tt_{\cat{C}}$ is a morphism $Sy^S \to [\Tt y, p]$ for some $S$, and so we define $\Delta_f^p(\beta)$ to be $[f,p]\circ\beta : Sy^S \to [\Tt y, p] \to [\Tt' y, p]$.
    To see that this satisfies the monoid action axiom, consider that the closure $\Delta_f^p(\beta)^\sigma$ for any section $\sigma : p\to y$ is given by
    \[ \Sum_{t:\Tt'} S \xto{\beta^o(f(t))^\ast \sigma} \Sum_{t:\Tt'} \Sum_{s:S} p[\beta^o(f(t), s)] \xto{\beta^u} S \]
    which is an object in the functor category $\Cat{Cat}(\deloop{\Tt'},\cat{C})$ since $f$ is a monoid homomorphism.
    On morphisms of systems, the functor $\Delta_f^p$ acts trivially.

    To see that $\Delta_f$ collects into an indexed functor, consider that it is defined on each polynomial $p$ by the contravariant action $[f,p]$ of the internal hom $[-,=]$, and that the reindexing $\Coalg^\Tt(\varphi)$ for any morphism $\varphi$ of polynomials is similarly defined by the covariant action $[\Tt y,\varphi]$.
    By the bifunctoriality of $[-,=]$, we have $[\Tt' y,\varphi]\circ[f y,p] = [f y,\varphi] = [f y,q]\circ[\Tt y,\varphi]$, and so $\Coalg^{\Tt'}_{\cat{C}}(\varphi)\circ\Delta_f^p = \Delta_f^q\circ\Coalg^\Tt_{\cat{C}}$.
  \end{proof}
\end{prop}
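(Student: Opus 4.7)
The plan is to construct the indexed functor fibrewise, using the internal hom of $\Poly_{\cat{C}}$ and the monoid homomorphism $f$, and then to check naturality in the polynomial index by invoking the bifunctoriality of $[-,=]$. By \defref{def:poly-dyn}, a $\Tt$-dynamical system on $p$ is a polynomial map $Sy^S \to [\Tt y, p]$ whose closure by every section of $p$ is a $\ltri$-comonoid homomorphism; the strategy is to precompose the ``codomain side'' with a map $[\Tt y, p] \to [\Tt' y, p]$ induced by $f$, thereby transporting $\Tt$-systems to $\Tt'$-systems.

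First I would form the linear polynomial morphism $fy : \Tt' y \to \Tt y$ by embedding $f : \Tt' \to \Tt$ along the inclusion of $\Comon(\cat{C})$ into $\Poly_{\cat{C}}$, and take its contravariant action under the internal hom to obtain $[fy, p] : [\Tt y, p] \to [\Tt' y, p]$. For each $p$ I then define $\Delta_f^p : \Coalg^\Tt_{\cat{C}}(p) \to \Coalg^{\Tt'}_{\cat{C}}(p)$ by setting $\Delta_f^p(\beta) := [fy, p] \circ \beta$ on objects, and as the identity on the underlying state-space morphism for morphisms. Functoriality on each fibre is then immediate once well-definedness on objects is established, since the coherence square of \Cref{prop:pM-coalg-cat} for $\Delta_f^p(\beta)$ is obtained from the one for $\beta$ simply by reparameterising times along $f$.

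The main content is to verify that $\Delta_f^p(\beta)$ really is a $\Tt'$-dynamical system. Using \Cref{prop:pM-coalg}, I would compute the closure of $\Delta_f^p(\beta)$ by an arbitrary section $\sigma : p\to y$ and show that it coincides with $\beta^\sigma \circ (f\times \id_S) : \Tt'\times S \to \Tt\times S \to S$; the monoid-action axioms for this composite then reduce to those for $\beta^\sigma$ together with preservation of unit and multiplication by $f$. I expect this to be the step that carries the real content, since it is the only place where the hypothesis that $f$ is a \emph{monoid} (and not merely object-level) map is used.

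Finally, to upgrade the family $\{\Delta_f^p\}_p$ into an indexed functor I need naturality with respect to reindexing along polynomial morphisms $\varphi : p \to q$. By \Cref{prop:pM-coalg-idx}, $\Coalg^\Tt_{\cat{C}}(\varphi)$ is postcomposition with $[\Tt y, \varphi]$, while $\Delta_f^p$ is postcomposition with $[fy, p]$; bifunctoriality of the internal hom yields
\[ [\Tt' y, \varphi] \circ [fy, p] \;=\; [fy, \varphi] \;=\; [fy, q] \circ [\Tt y, \varphi], \]
which is exactly the required naturality square after postcomposing both sides with $\beta$. The main subtlety I anticipate is making sure that the contravariant action $[fy, -]$ is itself a well-defined morphism of polynomials in the generalised setting of \secref{sec:poly-stoch}; but since $fy$ lies in the image of the deterministic inclusion $\Comon(\cat{C})\hookrightarrow\Poly_{\cat{C}}$, the earlier remarks on ``deterministic closedness'' apply and the bifunctorial identity holds without additional hypotheses.
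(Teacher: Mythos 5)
Your proposal is correct and follows essentially the same route as the paper: defining $\Delta_f^p(\beta) := [fy,p]\circ\beta$ via the contravariant action of the internal hom, checking the monoid-action axiom by observing that the closure by any section $\sigma$ is just $\beta^\sigma$ reparameterised along the monoid homomorphism $f$ (the paper writes this closure out explicitly as $\Sum_{t:\Tt'} S \xto{\beta^o(f(t))^\ast\sigma} \Sum_{t:\Tt'}\Sum_{s:S} p[\beta^o(f(t),s)] \xto{\beta^u} S$), and establishing indexed-functoriality by the same bifunctoriality identity $[\Tt' y,\varphi]\circ[fy,p] = [fy,\varphi] = [fy,q]\circ[\Tt y,\varphi]$. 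Your closing observation about $fy$ lying in the deterministic image, so that the internal-hom action is unproblematic in the generalised setting, is a reasonable extra precaution not spelled out in the paper but consistent with it.
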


\begin{cor} \label{cor:coalg-disc}
  For each $k : \rr$, the canonical inclusion $\iota_k : \nn \hookrightarrow \rr : i \mapsto ki$ induces a corresponding `discretization' indexed functor $\Fun{Disc}_k := \Delta_\iota : \Coalg^\rr_{\cat{C}} \to \Coalg^{\nn}_{\cat{C}}$.
\end{cor}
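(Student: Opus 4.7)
The plan is essentially to invoke \Cref{prop:time-change} directly, with $\Tt' = \nn$ and $\Tt = \rr$ (both regarded as additive monoids in $\Comon(\cat{C})$), and with the monoid homomorphism $f = \iota_k$. Since \Cref{prop:time-change} produces an indexed functor $\Coalg^\Tt_{\cat{C}} \to \Coalg^{\Tt'}_{\cat{C}}$ out of any monoid map, the only real content is to check that $\iota_k : \nn \to \rr$, $i\mapsto ki$, is indeed such a map.

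The verification is immediate: $\iota_k(0) = k\cdot 0 = 0$, so unitality is preserved, and $\iota_k(i+j) = k(i+j) = ki + kj = \iota_k(i) + \iota_k(j)$, so addition is preserved. Thus $\iota_k$ is a monoid homomorphism, and setting $\Fun{Disc}_k := \Delta_{\iota_k}$ gives the claimed indexed functor.

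Unwinding what this does concretely may be worth a sentence of remark, though not strictly required for the proof: an $\rr$-indexed system $\beta : Sy^S \to [\rr y, p]$ is sent by $\Fun{Disc}_k$ to the $\nn$-indexed system whose closures $\Delta_{\iota_k}^p(\beta)^\sigma$ evolve state according to the $k$-spaced samples $\beta^\sigma(ki,-)$ of the continuous-time flow, and $\Fun{Disc}_k$ acts trivially on morphisms of systems. The only step that could plausibly trip one up is ensuring that $\nn$ and $\rr$ do live in $\Comon(\cat{C})$ as required by Definition \ref{def:poly-dyn}; in the cases of interest (e.g. $\cat{C} = \Kl(\Pa)$) this holds because every object of $\Comon(\cat{C})$ carries the canonical copy-discard structure, and the additive monoid operations are morphisms there. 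Given this, no further calculation is needed: the corollary is a one-line application of \Cref{prop:time-change}.
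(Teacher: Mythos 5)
Your proposal is correct and is exactly the argument the paper intends: the corollary is an immediate instance of Proposition \ref{prop:time-change}, applied to the monoid homomorphism $\iota_k : (\nn,+,0) \to (\rr,+,0)$, which is all the paper relies on (it gives no separate proof). Your concluding remarks on the concrete action of $\Fun{Disc}_k$ and on $\nn,\rr$ living in $\Comon(\cat{C})$ are sensible but not needed.
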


Using the tensor product \(\otimes\) of polynomials, we can put systems' interfaces ``in parallel'', and it will be useful to do the same for the systems themselves.
We can do this using the corresponding lax monoidal structure of $\MCoalgT{\Tt}$.

\begin{prop}
  $\MCoalgT{\Tt}$ is lax monoidal $(\Poly_{\cat{C}}, \otimes, y) \to (\Cat{Cat}, \times, 1)$.
  The components $\lambda_{p,q} : \pMCoalgT{p}{\cat{C}}{\Tt}\times\pMCoalgT{q}{\cat{C}}{\Tt} \to \pMCoalgT{p\otimes q}{\cat{C}}{\Tt}$ of the laxator natural transformation $\lambda$ are the functors defined as follows.
  On objects, given $\beta : Xy^X \to [\Tt y, p]$ over $p$ and $\gamma : Yy^Y \to [\Tt, q]$ over $q$, the system $\lambda_{p,q}(\beta,\gamma)$ is the system
  \[
  (X\otimes Y)y^{(X\otimes Y)} \xto{=} Xy^X\otimes Yy^Y \xto{\beta\otimes\gamma} [\Tt y, p]\otimes [\Tt, q] \xto{\upsilon_{p,q}} [\Tt y, p\otimes q]
  \]
  with state space $X\times Y$. The forwards component
  \[
  \upsilon_1 : \Comon(\cat{C})\bigl(\Tt, p(1)\bigr)\times\Comon(\cat{C})\bigl(\Tt, q(1)\bigr) \to \Comon(\cat{C})\bigl(\Tt, p(1)\times q(1)\bigr)
  \]
  of $\upsilon_{p,q}$ forms the product of two trajectories, taking $f : \Tt\to p(1)$ and $g : \Tt\to q(1)$ to
  \[ \upsilon_1(f,g) := \Tt \xto{\bcopier} \Tt\otimes\Tt \xto{f\otimes g} p(1)\otimes q(1) \, . \]
  The backwards components witness simultaneous inputs; in elementwise form, we have
  \begin{align*}
    \upsilon^\#_{f,g} : \sum_{t:\Tt} p[f(t)]\otimes q[g(t)] &\to \sum_{t,t':\Tt} p[f(t)]\otimes q[g(t')] \\
    (t,a,b) &\mapsto (t,t,a,b) \, .
  \end{align*}

  \noindent
  On morphisms $\varphi : \beta \to \beta'$ and $\psi : \gamma \to \gamma'$, $\lambda_{p,q}(\varphi,\psi) : \lambda_{p,q}(\beta,\gamma)\to\lambda_{p,q}(\beta',\gamma')$ is defined by taking the product of the underlying maps of state spaces $\varphi : X\to X'$ and $\psi : Y\to Y'$.
  We will overload the notation, writing $\beta\otimes\gamma$ in place of $\lambda_{p,q}(\beta,\gamma)$, and similarly $\varphi\otimes\psi$ on morphisms.

  \noindent
  Finally, the unitor $\epsilon : \Cat{1} \to \pMCoalgT{y}{\cat{C}}{\Tt}$ is the functor taking the unique object $\star$ in the terminal category $\Cat{1}$ to the (`closed') system $(1, !^o, !^u)$ over $y$ with trivial state space, trivial output map, and trivial update map.
  It sends the unique morphism $\id_\star$ in $\Cat{1}$ to the identity map on $1$.
  \begin{proof}[Proof sketch]
    Firstly, it is straightforward to check that the functors $\lambda_{p,q}$ and $\epsilon$ return well-defined systems and morphisms, and that they are themselves well-defined as functors.
    Next, we check that the functors $\lambda_{p,q}$ collect into a natural transformation.
    This follows almost immediately from the functoriality of $[\Tt y, {-}\otimes{=}] : \Poly_{\cat{C}}\times\Poly_{\cat{C}}\to\Poly_{\cat{C}}$.
    Finally, we check that the axioms of associativity and unitality are satisfied.
    This follows from the associativity and unitality of the monoidal structure $(\otimes, y)$ on $\Poly_{\cat{C}}$.
  \end{proof}
\end{prop}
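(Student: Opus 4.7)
The plan is to reduce the lax monoidal structure on $\MCoalgT{\Tt}$ to the existing monoidal structure $(\otimes,y)$ on $\Poly_{\cat{C}}$ together with the observation that the internal-hom functor $[\Tt y,-] : \Poly_{\cat{C}} \to \Poly_{\cat{C}}$ is itself lax monoidal, with laxator exhibited by $\upsilon$. First I would verify that $\lambda_{p,q}(\beta,\gamma)$ genuinely defines a dynamical system on $p\otimes q$ in the sense of \cref{def:poly-dyn}: its polynomial presentation is the composite $(X\otimes Y)y^{X\otimes Y} \cong Xy^X \otimes Yy^Y \xto{\beta\otimes\gamma} [\Tt y,p]\otimes[\Tt y,q] \xto{\upsilon_{p,q}} [\Tt y,p\otimes q]$. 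Using \cref{prop:pM-coalg} I would then check, for each section $\rho$ of $p\otimes q$, that the induced closure is a $\Tt$-action on $X\otimes Y$: because the backwards component of $\upsilon_{p,q}$ merely diagonalises a time variable via $\bcopier$, the combined update map factors componentwise through $\beta^u$ and $\gamma^u$, and the monoid-action axioms descend from those for $\beta$ and $\gamma$ combined with the coassociativity and cocommutativity of $\bcopier$.

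Next I would verify functoriality of each $\lambda_{p,q}$: given morphisms of systems $\varphi : \beta \to \beta'$ and $\psi : \gamma \to \gamma'$, the commuting square of \cref{prop:pM-coalg-cat} for $\varphi\otimes\psi$ factors through the individual squares for $\varphi$ and $\psi$, again because $\upsilon^\#$ only duplicates the time variable. Identities and composition are preserved since they are formed in $\cat{C}$. The unitor functor $\epsilon$ is visibly well-defined, its image being the unique closed system on $y$ with trivial state space.

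The remaining work is naturality of $\lambda$ together with the associativity and unitality coherence. Naturality amounts to the equality $[\Tt y, \varphi\otimes\psi] \circ \upsilon_{p,q} = \upsilon_{p',q'} \circ ([\Tt y,\varphi]\otimes[\Tt y,\psi])$ for all $\varphi : p\to p'$ and $\psi : q\to q'$; this is precisely the naturality condition for the laxator of $[\Tt y,-]$, and since \cref{prop:pM-coalg-idx} shows reindexing on $\MCoalgT{\Tt}$ is given by postcomposition with $[\Tt y,\varphi]$, the two sides agree. The associativity and unitality pentagons for $\lambda$ similarly pull back through $\upsilon$ to the corresponding coherence diagrams for $(\otimes,y)$ on $\Poly_{\cat{C}}$ and for the lax monoidal structure on $[\Tt y,-]$.

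The main obstacle I expect is the comonoid-homomorphism verification in the first step. Although conceptually $\upsilon_1$ just diagonalises the time parameter via $\bcopier$, establishing that the combined closed trajectory on $X\otimes Y$ is a genuine $\deloop{\Tt}$-action requires careful handling of how sections of $p\otimes q$ distribute across the tensor structure and how the two subsystems' update dynamics compose---this is also where it matters that $\Tt$ lives in $\Comon(\cat{C})$ rather than merely in $\cat{C}$, since $\bcopier$ must be a legitimate morphism in the ambient category of polynomials for the construction of $\upsilon$ to make sense at all.
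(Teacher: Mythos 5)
Your overall plan matches the paper's own proof sketch: reduce everything to the monoidal structure $(\otimes,y)$ on $\Poly_{\cat{C}}$, get naturality of $\lambda$ from the functoriality of $[\Tt y,{-}\otimes{=}]$ (equivalently, naturality of $\upsilon$), and pull the associativity/unitality coherences back to those of $\Poly_{\cat{C}}$. Those parts are fine. The problem is the step you yourself flag as the main obstacle, and your proposed resolution of it does not work as stated: you claim the monoid-action (flow) axioms for the closure of $\lambda_{p,q}(\beta,\gamma)$ ``descend componentwise'' from those of $\beta$ and $\gamma$ using coassociativity and cocommutativity of $\bcopier$. But a section $\sigma$ of $p\otimes q$ assigns to each pair of configurations $(i,j)$ an element of $p[i]\times q[j]$ whose $p[i]$-component may depend on $j$ (and vice versa); such a $\sigma$ is not a pair of sections of $p$ and $q$, so the closure of the tensored system by $\sigma$ is not the product of a closure of $\beta$ with a closure of $\gamma$. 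The hypotheses on $\beta$ and $\gamma$ (Proposition~\ref{prop:pM-coalg}) only constrain inputs that are functions of each system's \emph{own} configuration, and the properties of $\bcopier$ (which merely duplicates the time variable) are orthogonal to this cross-dependence.

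This is not a cosmetic gap: the componentwise argument genuinely fails for such correlated sections. For instance, take $\cat{C}=\Cat{Set}$, $\Tt=\nn$, $p=q=2y^2$ with $p(1)=\{1,2\}$ and $p[i]=\{0,1\}$, and let $\beta=\gamma$ have state space $\{1,2\}$, output $\beta^o(t,x)=x$, and update $\beta^u(0,x,a)=x$, $\beta^u(t,x,a)=f(x,a)$ for $t\geq 1$, where $f(1,0)=1$ and $f(1,1)=f(2,0)=f(2,1)=2$. Every section of $p$ closes $\beta$ to an idempotent map, so the flow condition of Proposition~\ref{prop:pM-coalg} holds; but the section of $p\otimes q$ given by $\sigma(i,1)=(0,1)$, $\sigma(i,2)=(1,1)$ closes the tensored system, for $t\geq 1$, to the map $h$ with $h(1,1)=(1,2)$ and $h(1,2)=(2,2)$, so $h\circ h\neq h$, contradicting the $\nn$-action law that the definition requires for \emph{every} section of $p\otimes q$. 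So the well-definedness of $\lambda_{p,q}$ on objects needs either a genuinely different argument or a restriction (on the admissible sections, or on the class of systems) that rules out such examples; be aware that the paper's proof sketch declares exactly this step ``straightforward'' and supplies no argument, so you cannot lean on it to fill the hole.
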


Note that $\Coalg^\Tt_{\cat{C}}$ really is \textit{lax} monoidal---the laxators are not equivalences---since not all systems over the parallel interface $p\otimes q$ factor into a system over $p$ alongside a system over $q$.

\subsection{Monoidal bicategories of hierarchical inference systems} \label{sec:dyn-hier}

Whereas it is the morphisms (1-cells) of categories of lenses and statistical games that represent open systems, it is the objects (0-cells) of the opindexed categories \(\Coalg^\Tt_{\cat{C}}\)\footnote{or, more precisely, their corresponding opfibrations $\int\Coalg^\Tt_{\cat{C}}$} that play this role; in fact, the objects of $\Coalg^\Tt_{\cat{C}}$ each represent both an open system and its (polynomial) interface.
In order to supply dynamical semantics for statistical games---functors from categories of statistical games to categories of dynamical systems---we need to cleave the dynamical systems from their interfaces, making the interfaces into 0-cells and systems into 1-cells between them, thereby letting the systems' types and composition match those of the games.

To do this, we will associate to each pair of objects $(A,S)$ and $(B,T)$ of a category of Bayesian lenses\footnote{
We will assume that these lenses are non-dependent lenses, as in \textcite{Smithe2021Compositional1}.}
a polynomial $\blhom{Ay^S,By^T}$ whose configurations correspond to lenses and whose inputs correspond to the lenses' inputs.
The categories $\Coalg^\Tt_\Pa\bigl(\blhom{Ay^S,By^T}\bigr)$ will then form the hom-categories of bicategories of hierarchical inference systems, and it is in these bicategories that we will find our dynamical semantics.

\begin{defn} \label{def:blhom}
  Let $\BLens{C}$ be the category of (non-dependent) Bayesian lenses in $\cat{C}$, with $\cat{C}$ enriched in $\Comon(\cat{C})$.
  Then for any pair of objects $(A,S)$ and $(B,T)$ in $\BLens{C}$, we define a polynomial $\blhom{Ay^S,By^T}$ in $\Poly_{\cat{C}}$ by
  \[
  \blhom{Ay^S,By^T} := \sum_{l:\BLens{C}\bigl((A,S),(B,T)\bigr)} y^{\cat{C}(I,A)\times T} \, .
  \]
\end{defn}

\begin{rmk}
  We can think of $\blhom{Ay^S,By^T}$ as an `external hom' polynomial for $\BLens{C}$, playing a role analogous to the internal hom $[p,q]$ in $\Poly_{\cat{C}}$.
  Its `bipartite' structure---with domain and codomain parts---is what enables cleaving systems from their interfaces, which are given by these parts.
  The definition, and the following construction of the monoidal bicategory, are inspired by the operad $\Cat{Org}$ introduced by \textcite{Spivak2021Learners} and generalized by \textcite{Smithe2022Open}.
\end{rmk}

\begin{rmk}
  Note that $\blhom{Ay^S,By^T}$ is strictly speaking a monomial, since it can be written in the form $Iy^J$ for $I = \BLens{C}\bigl((A,S),(B,T)\bigr)$ and $J = \cat{C}(I,A)\times T$.
  However, we have written it in polynomial form with the view to extending it in future work to dependent lenses and dependent optics \parencite{Vertechi2022Dependent,Braithwaite2021Fibre} --- where we will call systems over such external hom polynomials \textit{cilia}, as they ``control optics'' --- and these generalized external homs will in fact be true polynomials.
\end{rmk}

\begin{prop} \label{prop:blhom-funct}
  Definition \ref{def:blhom} defines a functor $\BLens{C}\op\times\BLens{C}\to\Poly_{\cat{C}}$. Suppose $c:=(c_1,c^\#):(Z,R)\lensto(A,S)$ and $d:=(d_1,d^\#):(B,T)\lensto(C,U)$ are Bayesian lenses.
  We obtain a morphism of polynomials $\blhom{c,d}:\blhom{Ay^S,By^T}\to\blhom{Zy^R,Cy^U}$ as follows.
  Since the configurations of $\blhom{Ay^S,By^T}$ are lenses $(A,S)\lensto(B,T)$, the forwards map acts by pre- and post-composition:
  \begin{align*}
    \blhom{c,d}_1 := d\lenscirc(-)\lenscirc c : \BLens{C}\bigl((A,S),(B,T)\bigr) & \to \BLens{C}\bigl((Z,R),(C,U)\bigr) \\
    l & \mapsto d\lenscirc l\lenscirc c
  \end{align*}

  For each such $l$, the backwards map $\blhom{c,d}^\#_l$ has type $\cat{C}(I,Z)\otimes U \to \cat{C}(I,A)\otimes T$ in $\cat{C}$, and is obtained by analogy with the backwards composition rule for Bayesian lenses.
  We define
  \begin{align*}
    \blhom{c,d}^\#_l &:= \cat{C}(I,Z)\otimes U \xto{{c_1}_*\otimes U} \cat{C}(I,A)\otimes U \xto{\bcopier\otimes U} \cat{C}(I,A)\otimes\cat{C}(I,A)\otimes U \cdots \\
    & \qquad \cdots \xto{\cat{C}(I,A)\otimes {l_1}_*\otimes U} \cat{C}(I,A)\otimes\cat{C}(I,B)\otimes U\xto{\cat{C}(I,A)\otimes d^\#\otimes U} \cat{C}(I,A)\otimes\cat{C}(U,T)\otimes U \cdots \\
    & \qquad \cdots \xto{\cat{C}(I,A)\otimes\Fun{ev}_{U,T}} \cat{C}(I,A)\otimes T
  \end{align*}
  where $l_1$ is the forwards part of the lens $l:(A,S)\lensto(B,T)$, and ${c_1}_* := \cat{C}(I,c_1)$ and ${l_1}_* := \cat{C}(I,l_1)$ are the push-forwards along $c_1$ and $l_1$, and $\Fun{ev}_{U,T}$ is the evaluation map induced by the enrichment of $\cat{C}$ in $\Comon(\cat{C})$.
  In the special case where $\cat{C} = \Kl(\Pa)$ and $\Comon(\cat{C}) = \cat{E}$, we can write $\blhom{c,d}^\#_l$ as the following map in $\cat{E}$, depicted as a string diagram:
  \[
  \blhom{c,d}^\#_l \quad = \quad \tikzfig{img/blhom-funct-bkwd}
  \]
  Here, we have assumed that $\Kl(\Pa)(I,A) = \Pa A$, and define $d^\flat : \Pa B\times U\to\Pa T$ to be the image of $d^\# : \Pa B\to\Kl(\Pa)(U,T)$ under the Cartesian closure of $\cat{E}$, and $\mathsf{str}:\Pa A\times\Pa T\to\Pa\bigl(\Pa A\times T)$ the (right) strength of the strong monad $\Pa$.
  \begin{proof}
    We need to check that the mappings defined above respect identities and composition.
    It is easy to see that the definition preserves identities: in the forwards direction, this follows from the unitality of composition in $\BLens{C}$; in the backwards direction, because pushing forwards along the identity is again the identity, and because the backwards component of the identity Bayesian lens is the constant state-dependent morphism on the identity in $\cat{C}$.

    To check that the mapping preserves composition, we consider the contravariant and covariant parts separately. Suppose $b:=(b_1,b^\#):(Y,Q)\lensto(Z,R)$ and $e:=(e_1,e^\#):(C,U)\lensto(D,V)$ are Bayesian lenses.
    We consider the contravariant case first: we check that $\blhom{c\lenscirc b,By^T} = \blhom{b,By^T}\circ\blhom{c,By^T}$.
    The forwards direction holds by pre-composition of lenses.
    In the backwards direction, we note from the definition that only the forwards channel $c_1$ plays a role in $\blhom{c,By^T}^\#_l$, and that role is again pre-composition.
    We therefore only need to check that $(c_1\klcirc b_1)_* = {c_1}_*\circ{b_1}_*$, and this follows immediately from the functoriality of $\cat{C}(I,-)$.

    We now consider the covariant case, that $\blhom{Ay^S,e\lenscirc d} = \blhom{Ay^S,e}\circ\blhom{Ay^S,d}$.
    Once again, the forwards direction holds by composition of lenses.
    For simplicity of exposition, we consider the backwards direction in the case $\cat{C} = \Kl(\Pa)$ and reason graphically.
    In this case, the backwards map on the right-hand side is given, for a lens $l:(A,S)\lensto(B,T)$ by the following string diagram:
    \[
    \tikzfig{img/blhom-funct-bkwd-proof}
    \]
    It is easy to verify that the composition of backwards channels here is precisely the backwards channel given by $e\lenscirc d$---compare \textcite[Theorem 3.14]{Smithe2021Compositional1} or \parencite[Theorem 5.2]{Smithe2020Bayesian}---which establishes the result.
    The case for general $\cat{C}$ is directly analogous, on the other side of the tensor-hom adjunction.
  \end{proof}
\end{prop}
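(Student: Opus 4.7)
The plan is to decompose functoriality along the two variables: treat $\blhom{-,=}$ as two partial assignments, $\blhom{-, By^T}$ in the contravariant slot and $\blhom{Ay^S, -}$ in the covariant slot, verify each is functorial, and observe that the two actions commute on mixed morphisms. This split is valid because in the definition the forward component is lens pre- and post-composition (which is clearly separately functorial in each slot), and the backward component assembles the $c$-contribution and the $d$-contribution on disjoint strands of the defining diagram. Well-definedness on objects is immediate, since $\blhom{Ay^S, By^T}$ is manifestly of the form $Iy^J$.

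First I would check identity preservation. In the forward direction this reduces to unitality of $\lenscirc$ in $\BLens{C}$. In the backward direction, for $c = \id_{(A,S)}$ the push-forward ${c_1}_*$ is the identity, while for $d = \id_{(B,T)}$ the backward component $d^\#$ is the state-dependent constant morphism whose value is the identity channel on $T$; the evaluation step then collapses the whole diagram to the identity on $\cat{C}(I,A)\otimes T$.

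For composition, the contravariant case is the easier of the two: the backward map $\blhom{c, By^T}^\#_l$ depends only on the forward channel $c_1$, through the push-forward ${c_1}_*$, so preservation of composition reduces to the functoriality of $\cat{C}(I,-)$, that is, $(c_1 \klcirc b_1)_* = {c_1}_* \circ {b_1}_*$. The covariant case is the real work: given $d : (B,T) \lensto (C,U)$ and $e : (C,U) \lensto (D,V)$, I would argue in the motivating case $\cat{C} = \Kl(\Pa)$ by string-diagrammatic manipulation in $\cat{E}$. Composing the two backward maps produces a diagram with two copies, two strengths, the push-forward ${l_1}_*$, and sequential evaluation of $d^\flat$ and $e^\flat$; one then rearranges using naturality of strength, coassociativity/counitality of the copy-discard comonoid, and the definition of Bayesian lens composition to recognize it as the diagram defining $(e \lenscirc d)^\#$. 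The general-$\cat{C}$ argument follows by internalizing the same manipulations using the $\Comon(\cat{C})$-enrichment of $\cat{C}$, translating through the tensor-hom adjunction producing $d^\flat$ from $d^\#$.

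The main obstacle will be this covariant composition check. The delicate part is tracking how the interleaving of strength, copies, and forward channels in the composed diagram matches the somewhat involved backward rule of Bayesian lens composition --- this is essentially the content of Theorem 3.14 of \textcite{Smithe2021Compositional1}, which I would hope to invoke (or at least re-derive graphically) here. Miscounting strands, or reversing the order of the copy and strength, would yield a superficially similar but incorrect composite, so careful bookkeeping via string diagrams is essential.
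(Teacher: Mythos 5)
Your proposal is correct and follows essentially the same route as the paper's proof: identity preservation via unitality of lens composition and the triviality of identity push-forwards and backwards components, the contravariant composition case reduced to functoriality of $\cat{C}(I,-)$, and the covariant case handled string-diagrammatically in $\Kl(\Pa)$ by matching against the backwards composition rule for Bayesian lenses (Theorem 3.14 of the first paper), with general $\cat{C}$ handled through the enrichment and tensor-hom adjunction. Your extra remark about checking that the contravariant and covariant actions commute on mixed morphisms is a sensible explicit addition, but it is implicit in the paper's treatment and does not change the argument.
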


Now that we have an `external hom', we might expect also to have a corresponding `external composition', represented by a family of morphisms of polynomials; we establish such a family now, and it will be important in our bicategorical construction.

\begin{defn} \label{def:blhom-comp}
  We define an `external composition' natural transformation $\Fun{c}$, with components
  \[ \blhom{Ay^S,By^T}\otimes\blhom{By^T,Cy^U}\to\blhom{Ay^S,Cy^U} \]
  given in the forwards direction by composition of Bayesian lenses.
  In the backwards direction, for each pair of lenses $c:(A,S)\lensto(B,T)$ and $d:(B,T)\lensto(C,U)$, we need a map
  \[ \Fun{c}^\#_{c,d} : \cat{C}(I,A)\otimes U \to \cat{C}(I,A)\otimes T\otimes \cat{C}(I,B)\otimes U\bigr)\]
  which we define as follows:
  \begin{align*}
    \Fun{c}^\#_{c,d} &:= \cat{C}(I,A) \otimes U \xto{\bcopier\otimes\bcopier} \cat{C}(I,A) \otimes \cat{C}(I,A) \otimes U \otimes U \cdots \\
    &\qquad\cdots \xto{\cat{C}(I,A)\otimes{c_1}_*\otimes U\otimes U} \cat{C}(I,A) \otimes \cat{C}(I,B) \otimes U \otimes U \cdots \\
    &\qquad\cdots \xto{\cat{C}(I,A)\otimes\bcopier\otimes\cat{C}(I,B)\otimes U\otimes U} \cat{C}(I,A) \otimes \cat{C}(I,B) \otimes \cat{C}(I,B) \otimes U \otimes U \\
    &\qquad\cdots \xto{\cat{C}(I,A)\otimes\cat{C}(I,B)\otimes d^\#\otimes U\otimes U} \cat{C}(I,A) \otimes \cat{C}(I,B) \otimes \cat{C}(U,T) \otimes Y \otimes U \\
    &\qquad\cdots \xto{\cat{C}(I,A)\otimes\cat{C}(I,B)\Fun{ev}_{U,T}\otimes U} \cat{C}(I,A) \otimes \cat{C}(I,B) \otimes T \otimes U \\
    &\qquad\cdots \xto{\cat{C}(I,A)\otimes\mathsf{swap}\otimes U} \cat{C}(I,A) \otimes T \otimes \cat{C}(I,B) \otimes U
  \end{align*}
  where ${c_1}_*$ and $\Fun{ev}_{U,T}$ are as in \ref{prop:blhom-funct}.

  In the case where $\cat{C} = \Kl(\Pa)$, we can equivalently (and more legibly) define $\Fun{c}^\#_{c,d}$ by the following string diagram:
  \[
  \Fun{c}^\#_{c,d} \quad := \quad \tikzfig{img/blhom-comp-bkwd}
  \]
  where $d^\flat$ and $\Fun{str}$ are also as in Proposition \ref{prop:blhom-funct}.
\end{defn}

We leave to the reader the detailed proof that this definition produces a well-defined natural transformation, noting only that the argument is analogous to that of Proposition \ref{prop:blhom-funct}:
one observes that, in the forwards direction, the definition is simply composition of Bayesian lenses (which is immediately natural);
in the backwards direction, one observes that the definition again mirrors that of the backwards composition of Bayesian lenses.

Next, we establish the structure needed to make our bicategory monoidal.

\begin{defn} \label{def:blhom-tensor-dist}
  We define a distributive law $\Fun{d}$ of $\blhom{-,=}$ over $\otimes$, a natural transformation with components
  \[ \blhom{Ay^S,By^T}\otimes\blhom{A'y^{S'},B'y^{T'}}\to\blhom{Ay^S\otimes A'y^{S'},By^T\otimes B'y^{T'}} \, , \]
  noting that $Ay^S\otimes A'y^{S'} = (A\otimes A')y^{(S\otimes S')}$ and $By^T\otimes B'y^{T'} = (B\otimes B')y^{(T\otimes T')}$.
  The forwards component is given simply by taking the tensor of the corresponding Bayesian lenses, using the monoidal product (also denoted $\otimes$) in $\BLens{C}$.
  Backwards, for each pair of lenses $c:(A,S)\lensto(B,T)$ and $c':(A',S')\lensto(B',T')$, we need a map
  \[ \Fun{d}^\#_{c,c'} : \cat{C}(I, A\otimes A')\otimes T\otimes T' \to \cat{C}(I,A)\times T\times \cat{C}(I,A')\times T' \]
  for which we choose
  \begin{align*}
    & \cat{C}(I, A\otimes A')\otimes T\otimes T' \xto{\bcopier\otimes T\otimes T'} \cat{C}(I, A\otimes A')\otimes \cat{C}(I, A\otimes A')\otimes T\otimes T' \cdots \\
    & \cdots\, \xto{\cat{C}(I,\mathsf{proj}_A)\otimes\cat{C}(I,\mathsf{proj}_{A'})\otimes T\otimes T'} \cat{C}(I,A)\otimes \cat{C}(I,A')\otimes T\otimes T' \cdots \\
    & \cdots\, \xto{\cat{C}(I,A)\otimes\Fun{swap}\otimes T'} \cat{C}(I,A)\otimes T\otimes \cat{C}(I,A')\otimes T'
  \end{align*}
  where $\Fun{swap}$ is the symmetry of the tensor $\otimes$ in $\cat{C}$.
  Note that $\Fun{d}^\#_{c,c'}$ so defined does not in fact depend on either $c$ or $c'$.
\end{defn}

We now have everything we need to construct a monoidal bicategory $\Hier^\Tt_{\cat{C}}$ of dynamical hierarchical inference systems in $\cat{C}$, following the intuition outlined at the beginning of this section.

\begin{rmk}
  The notion of bicategory that we adopt is the standard one of `category weakly enriched in $\Cat{Cat}$', so that between any two 0-cells we have a category of 1-cells (and 2-cells between them), such that composition of 1-cells is associative and unital up to natural isomorphism.
\end{rmk}

\begin{defn} \label{def:hier-bicat}
  Let $\Hier^\Tt_{\cat{C}}$ denote the monoidal bicategory whose 0-cells are objects $(A,S)$ in $\BLens{C}$, and whose hom-categories $\Hier^\Tt_{\cat{C}}\bigl((A,S),(B,T)\bigr)$ are given by $\Coalg^\Tt_{\cat{C}}\bigl(\blhom{Ay^S,By^T}\bigr)$.
  The identity 1-cell $\id_{(A,S)} : (A,S)\to(A,S)$ on $(A,S)$ is given by the system with trivial state space $1$, trivial update map, and output map that constantly emits the identity Bayesian lens $(A,S)\lensto(A,S)$.
  The composition of a system $(A,S)\to(B,T)$ then a system $(B,T)\to(C,U)$ is defined by the functor
  \begin{align*}
    & \Hier^\Tt_{\cat{C}}\bigl((A,S),(B,T)\bigr)\times\Hier^\Tt_{\cat{C}}\bigl((B,T),(C,U)\bigr) \\
    & = \Coalg^\Tt_{\cat{C}}\bigl(\blhom{Ay^S,By^T}\bigr)\times\Coalg^\Tt_{\cat{C}}\bigl(\blhom{By^T,Cy^U}\bigr) \\
    & \xto{\lambda} \Coalg^\Tt_{\cat{C}}\bigl(\blhom{Ay^S,By^T}\otimes\blhom{By^T,Cy^U}\bigr) \\
    & \xto{\Coalg^\Tt_{\cat{C}}(\mathsf{c})} \Coalg^\Tt_{\cat{C}}\bigl(\blhom{Ay^S,Cy^U}\bigr)
      = \Hier^\Tt_{\cat{C}}\bigl((A,S),(C,U)\bigr)
  \end{align*}
  where $\lambda$ is the laxator and $\mathsf{c}$ is the external composition morphism of Definition \ref{def:blhom-comp}.

  \noindent
  The monoidal structure $(\otimes, y)$ on $\Hier^\Tt_{\cat{C}}$ derives from the structures on $\Poly_{\cat{C}}$ and $\BLens{C}$, justifying our overloaded notation.
  On 0-cells, $(A,S)\otimes(A',S') := (A\otimes A',S\otimes S')$.
  On 1-cells $(A,S)\to(B,T)$ and $(A',S')\to(B',T')$, the tensor is given by
  \begin{align*}
    & \Hier^\Tt_{\cat{C}}\bigl((A,S),(B,T)\bigr)\times\Hier^\Tt_{\cat{C}}\bigl((A',S'),(B',T')\bigr) \\
    & = \Coalg^\Tt_{\cat{C}}\bigl(\blhom{Ay^S,By^T}\bigr)\times\Coalg^\Tt_{\cat{C}}\bigl(\blhom{A'y^{S'},B'y^{T'}}\bigr) \\
    & \xto{\lambda} \Coalg^\Tt_{\cat{C}}\bigl(\blhom{Ay^S,By^T}\otimes\blhom{A'y^{S'},B'y^{T'}}\bigr) \\
    & \xto{\Coalg^{\Tt}_{\cat{C}}(\mathsf{d})} \Coalg^\Tt_{\cat{C}}\bigl(\blhom{Ay^S\otimes A'y^{S'},By^T\otimes B'y^{T'}}\bigr) \\
    & = \Hier^\Tt_{\cat{C}}\bigl((A,S)\otimes(A',S'),(B,T)\otimes(B',T')\bigr)
  \end{align*}
  where $\mathsf{d}$ is the distributive law of Definition \ref{def:blhom-tensor-dist}.
  The same functors
  \[ \Hier^\Tt_{\cat{C}}\bigl((A,S),(B,T)\bigr)\times\Hier^\Tt_{\cat{C}}\bigl((A',S'),(B',T')\bigr) \to \Hier^\Tt_{\cat{C}}\bigl((A,S)\otimes(A',S'),(B,T)\otimes(B',T')\bigr) \]
  induce the tensor of 2-cells; concretely, this is given on morphisms of dynamical systems by taking the product of the corresponding morphisms between state spaces.
\end{defn}

We do not give here a proof that this makes $\Hier^\Tt_{\cat{C}}$ into a well-defined monoidal bicategory;
briefly, the result follows from the facts that the external composition $\mathsf{c}$ and the tensor $\otimes$ are appropriately associative and unital, that $\Coalg^\Tt_\Pa$ is lax monoidal, that $\blhom{{-},{=}}$ is functorial in both positions, and that $\blhom{{-},{=}}$ distributes naturally over $\otimes$.

Before we move on to considering doctrines of approximate inference, it will be useful to spell out concretely the elements of a morphism $(A,S)\to(B,T)$ in $\Hier^\Tt_{\Kl(\Pa)}$.

\begin{prop} \label{prop:unpack-hier}
  Suppose $\Pa$ is a monad on a Cartesian closed category $\cat{E}$.
  Then a 1-cell $\vartheta:(A,S)\to(B,T)$ in $\Hier^\Tt_{\Kl(\Pa)}$ is given by a tuple $\vartheta := (X,\vartheta^o_1,\vartheta^o_2,\vartheta^u)$ of
  \begin{itemize}
  \item a choice of state space $X$,
  \item a forwards output map $\vartheta^o_1:\Tt\times X\times A\to\Pa B$ in $\cat{E}$,
  \item a backwards output map $\vartheta^o_2:\Tt\times X\times\Pa A\times T\to\Pa S$ in $\cat{E}$, and
  \item an update map $\vartheta^u:\Tt\times X\times\Pa A\times T\to\Pa X$ in $\cat{E}$,
  \end{itemize}
  satisfying the `flow' condition of Proposition \ref{prop:pM-coalg}.
  \begin{proof}
    The result follows immediately upon unpacking the definitions, using the Cartesian closure of $\cat{E}$.
  \end{proof}
\end{prop}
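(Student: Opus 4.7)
\emph{Proof proposal.} The plan is to simply chase the definitions, since by construction of $\Hier^\Tt_{\Kl(\Pa)}$ a 1-cell $(A,S)\to(B,T)$ is nothing but an object of $\Coalg^\Tt_{\Kl(\Pa)}\bigl(\blhom{Ay^S,By^T}\bigr)$. The proof is then a matter of (i) specializing the generic description of objects in $\Coalg^\Tt_{\cat{C}}(p)$ given by Proposition \ref{prop:pM-coalg} to $p = \blhom{Ay^S,By^T}$, and (ii) using the Kleisli adjunction together with Cartesian closure of $\cat{E}$ to trade Kleisli morphisms and lens-valued maps for morphisms in $\cat{E}$ of the claimed shape.

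First, I would unpack the polynomial. By Definition \ref{def:blhom} we have $\blhom{Ay^S,By^T}(1) = \BLens{\Kl(\Pa)}\bigl((A,S),(B,T)\bigr)$ and, for each lens $l$ in this hom, $\blhom{Ay^S,By^T}[l] = \Kl(\Pa)(I,A)\times T = \Pa A \times T$ (as an object of $\cat{E}$, using the convention introduced in Proposition \ref{prop:blhom-funct} that $\Kl(\Pa)(I,A) = \Pa A$). Note that, because $\cat{C} = \Kl(\Pa)$, we have $\Comon(\cat{C}) = \cat{E}$, so Proposition \ref{prop:pM-coalg} yields the datum of a state space $X : \cat{E}$, an output map $\beta^o : \Tt\times X \to \BLens{\Kl(\Pa)}\bigl((A,S),(B,T)\bigr)$ in $\cat{E}$, and an update map $\beta^u : \sum_{t,x} (\Pa A\times T) \to X$ in $\Kl(\Pa)$, together with the action/flow condition for every section of $\blhom{Ay^S,By^T}$.

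Next, I would decompose $\beta^o$. A lens $(A,S)\lensto (B,T)$ in $\BLens{\Kl(\Pa)}$ is precisely a pair consisting of a forwards channel $A\to\Pa B$ in $\cat{E}$ and a state-dependent backwards channel, which under Cartesian closure and the Kleisli adjunction is the same as a map $\Pa A\times T \to \Pa S$ in $\cat{E}$. Hence, by the universal property of the hom-object and Cartesian closure, giving $\beta^o : \Tt\times X \to \BLens{\Kl(\Pa)}\bigl((A,S),(B,T)\bigr)$ is equivalent to giving the pair
\[
\vartheta^o_1 : \Tt\times X\times A \to \Pa B, \qquad \vartheta^o_2 : \Tt\times X\times \Pa A\times T\to \Pa S
\]
in $\cat{E}$. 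Similarly, the update map $\beta^u$ in $\Kl(\Pa)$, whose domain is $\Tt\times X\times\Pa A\times T$ (taking the coproduct $\sum$ to the product, since $\Tt$ and $X$ sit in $\cat{E}$) and whose codomain is $X$, is by definition of $\Kl(\Pa)$ the same datum as a morphism $\vartheta^u : \Tt\times X\times \Pa A\times T \to \Pa X$ in $\cat{E}$.

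Finally, the flow condition of Proposition \ref{prop:pM-coalg} is stated in terms of the sections $\sigma$ of the polynomial and the induced closed dynamics; this condition translates verbatim to a condition on $(\vartheta^o_1, \vartheta^o_2, \vartheta^u)$ via the same bijection, since the bijection is natural in all relevant arguments. Running the argument backwards shows that any such tuple $(X,\vartheta^o_1,\vartheta^o_2,\vartheta^u)$ reassembles into a morphism $Xy^X\to[\Tt y, \blhom{Ay^S,By^T}]$ satisfying Definition \ref{def:poly-dyn}, completing the proof. There is no genuine obstacle: the only thing to be careful about is tracking which maps live in $\cat{E}$ and which in $\Kl(\Pa)$, and applying the hom/tensor and Kleisli adjunctions on the correct side each time.
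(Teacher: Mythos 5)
Your proposal is correct and takes essentially the same route as the paper, whose proof is precisely the indicated unpacking: specialize Proposition \ref{prop:pM-coalg} to $p = \blhom{Ay^S,By^T}$, identify $\Comon(\Kl(\Pa))\cong\cat{E}$, and use Cartesian closure together with the Kleisli correspondence to split the lens-valued output map into the forwards and backwards components and to re-express the update map in $\cat{E}$. The paper merely states this as immediate, so your explicit bookkeeping (including noting that the fibre $\Pa A\times T$ is constant, so the dependent sum becomes a product) is a faithful elaboration rather than a different argument.
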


\subsection{Differential and `cybernetic' systems} \label{sec:dyn-diff}

Approximate inference doctrines describe how systems play statistical games, and are particularly of interest when one asks how systems' performance may improve during such game-playing.
One prominent method of performance improvement involves descending the gradient of the statistical game's loss function, and we will see below that this method is adopted by both the Laplace and the Hebb-Laplace doctrines.
The appearance of gradient descent prompts questions about the connections between such statistical systems and other `cybernetic' systems such as deep learners or players of economic games, both of which may also involve gradient descent \parencite{Cruttwell2022Categorical,Capucci2022Diegetic}; indeed, it has been proposed \parencite{Capucci2021Towards} that parameterized gradient descent should form the basis of a compositional account of cybernetic systems in general\footnote{
Our own view on cybernetics is somewhat more general, since not all systems that may be seen as cybernetic are explicitly structured as gradient-descenders, and nor even is explicit differential structure always apparent.
In earlier work, we suggested that statistical inference was perhaps more inherent to cybernetics \parencite{Smithe2020Cyber}, although today we believe that a better, though more informal, definition of cybernetic system is perhaps ``an intentionally-controlled open dynamical system''.
Nonetheless, we acknowledge that this notion of ``intentional control'' may generally be reducible to a stationary action principle, again indicating the importance of differential structure.
We leave the statement and proof of this general principle to future work.}.

In order to incorporate gradient descent explicitly into our own compositional framework, we follow the recipes above to define here first a category of differential systems opindexed by polynomial interfaces and then a monoidal bicategory of differential hierarchical inference systems.
We then show how we can obtain dynamical from differential systems by integration, and sketch how this induces a ``change of base'' from dynamical to differential hierarchical inference systems.

\begin{notation}
  Write $\Cat{Diff}_{\cat{C}}$ for the subcategory of compact smooth manifold objects in $\Comon(\cat{C})$ and differentiable morphisms between them.
  Write $T : \Cat{Diff}_{\cat{C}} \to \Cat{Vect}(\Cat{Diff}_{\cat{C}})$ for the corresponding tangent bundle functor, where $\Cat{Vect}(\Cat{Diff}_{\cat{C}})$ is (the total category of) the fibration of vector bundles over $\Cat{Diff}_{\cat{C}}$ and their homomorphisms.
  Write $U : \Cat{Vect}(\Cat{Diff}_{\cat{C}}) \to \Cat{Diff}_{\cat{C}}$ for the functor that forgets the bundle structure.
  Write $\Fun{T} := UT : \Cat{Diff}_{\cat{C}} \to \Cat{Diff}_{\cat{C}}$ for the induced endofunctor.
\end{notation}

Recall that morphisms $Ay^B\to p$ in $\Poly_{\cat{C}}$ correspond to morphisms $A\to pB$ in $\cat{C}$.

\begin{defn}
  For each $p : \Poly_{\cat{C}}$, define the category $\DiffSys_{\cat{C}}(p)$ as follows.
  Its objects are objects $M:\Cat{Diff}_{\cat{C}}$, each equipped with a morphism $m : My^{\Fun{T}M} \to p$ of polynomials in $\Poly_{\cat{C}}$, such that for any section $\sigma : p\to y$ of $p$, the composite morphism $\sigma\circ m : My^{\Fun{T}M}\to y$ corresponds to a section $m^\sigma : M\to\Fun{T}M$ of the tangent bundle $\Fun{T}M\to M$.
  A morphism $\alpha:(M,m)\to(M',m')$ in $\DiffSys_{\cat{C}}(p)$ is a map $\alpha : M \to M'$ in $\Cat{Diff}_{\cat{C}}$ such that the following diagram commutes:
  \[\begin{tikzcd}
	M & {p\Fun{T}M} \\
	{M'} & {p\Fun{T}M'}
	\arrow["m", from=1-1, to=1-2]
	\arrow["\alpha"', from=1-1, to=2-1]
	\arrow["{m'}"', from=2-1, to=2-2]
	\arrow["{p\Fun{T}\alpha}", from=1-2, to=2-2]
  \end{tikzcd}\]
\end{defn}

\begin{prop}
  $\DiffSys_{\cat{C}}$ defines an opindexed category $\Poly_{\cat{C}}\to\Cat{Cat}$.
  Given a morphism $\varphi:p\to q$ of polynomials, $\DiffSys_{\cat{C}}(\varphi) : \DiffSys_{\cat{C}}(p) \to \DiffSys_{\cat{C}}(q)$ acts on objects by postcomposition and trivially on morphisms.
\end{prop}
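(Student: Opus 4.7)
The plan is to verify the two things required: first that each $\DiffSys_{\cat{C}}(\varphi) : \DiffSys_{\cat{C}}(p) \to \DiffSys_{\cat{C}}(q)$ is a well-defined functor, and second that the assignment $\varphi \mapsto \DiffSys_{\cat{C}}(\varphi)$ is itself functorial in $\varphi$ (i.e.\ preserves identities and composition in $\Poly_{\cat{C}}$). Both reduce to simple diagram-chases once we unwind the definitions, using two elementary observations: (i) a polynomial morphism $p \to y$ is the same thing as a section of $p$, so post-composing a section of $q$ with $\varphi : p \to q$ yields a section of $p$; and (ii) the correspondence between morphisms $My^{\Fun{T}M} \to p$ and morphisms $M \to p\Fun{T}M$ is natural in $p$, so post-composition with $\varphi$ on the left corresponds to post-composition with $\varphi_{\Fun{T}M} : p\Fun{T}M \to q\Fun{T}M$ on the right.

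First I would check well-definedness on objects. Given $(M,m) : \DiffSys_{\cat{C}}(p)$ and $\varphi : p \to q$, the candidate object is $(M, \varphi\circ m)$. For any section $\sigma : q \to y$, the polynomial morphism $\sigma\circ\varphi : p \to y$ is itself a section of $p$ by observation (i), so by hypothesis $(\sigma\circ\varphi)\circ m = \sigma\circ(\varphi\circ m)$ corresponds to a section $M \to \Fun{T}M$ of the tangent bundle, as required. Next I would check well-definedness on morphisms: given $\alpha : (M,m) \to (M',m')$ in $\DiffSys_{\cat{C}}(p)$, I need the square
\[\begin{tikzcd}
M & q\Fun{T}M \\
M' & q\Fun{T}M'
\arrow["\varphi\circ m", from=1-1, to=1-2]
\arrow["\alpha"', from=1-1, to=2-1]
\arrow["\varphi\circ m'"', from=2-1, to=2-2]
\arrow["q\Fun{T}\alpha", from=1-2, to=2-2]
\end{tikzcd}\]
to commute in $\cat{C}$. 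This follows by pasting the original square for $(m,m')$ on the left of the naturality square for $\varphi$ at $\Fun{T}\alpha : \Fun{T}M \to \Fun{T}M'$, which gives $q\Fun{T}\alpha \circ \varphi_{\Fun{T}M} = \varphi_{\Fun{T}M'} \circ p\Fun{T}\alpha$ by observation (ii). Functoriality of $\DiffSys_{\cat{C}}(\varphi)$ is then immediate, since identities and composition of morphisms of differential systems are inherited from $\Cat{Diff}_{\cat{C}}$ and the functor acts trivially on underlying maps $\alpha : M \to M'$.

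Finally I would verify the opindexed structure. For $\varphi = \id_p$, the induced action sends $(M,m)$ to $(M, \id_p\circ m) = (M,m)$ and is the identity on morphisms, so $\DiffSys_{\cat{C}}(\id_p) = \id_{\DiffSys_{\cat{C}}(p)}$. For a composable pair $\varphi : p \to q$ and $\psi : q \to r$, we have
\[
\DiffSys_{\cat{C}}(\psi\circ\varphi)(M,m) \;=\; (M,\,\psi\circ\varphi\circ m) \;=\; \DiffSys_{\cat{C}}(\psi)\bigl(\DiffSys_{\cat{C}}(\varphi)(M,m)\bigr),
\]
by associativity of composition in $\Poly_{\cat{C}}$, and the corresponding identity on morphisms is trivial. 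I do not anticipate a genuine obstacle here: the only conceptual point worth belabouring is observation (i), which ensures that the section condition is stable under post-composition; everything else is routine.
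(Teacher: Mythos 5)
Your proof is correct and follows exactly the argument the paper leaves implicit: the proposition is stated without proof, with the action prescribed as postcomposition on objects and trivial on morphisms, and your verification (stability of the section condition under precomposing sections of $q$ with $\varphi$, the pasting with the square defining morphisms of differential systems, and strict functoriality of $\varphi \mapsto \varphi \circ -$) is precisely the routine check being omitted. No gaps.
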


\begin{prop} \label{prop:diffsys-lax}
  The functor $\DiffSys_{\cat{C}}$ is lax monoidal $(\Poly_{\cat{C}},\otimes,y) \to (\Cat{Cat},\times,\Cat{1})$.
  \begin{proof}[Proof sketch]
    Note that $\Fun{T}$ is strong monoidal, with $\Fun{T}(1) \cong 1$ and $\Fun{T}(M)\otimes\Fun{T}(N)\cong\Fun{T}(M\otimes N)$.
    The unitor $\Cat{1}\to\DiffSys_{\cat{C}}(y)$ is given by the isomorphism $1y^{\Fun{T}1} \cong 1y^1 \cong y$ induced by the strong monoidal structure of $\Fun{T}$.
    The laxator $\lambda_{p,q} : \DiffSys_{\cat{C}}(p) \times \DiffSys_{\cat{C}}(q) \to \DiffSys_{\cat{C}}(p\otimes q)$ is similarly determined: given objects $m:My^{\Fun{T}M}\to p$ and $n:Ny^{\Fun{T}N}\to q$, take their tensor $m\otimes n:(M\otimes N)y^{\Fun{T}M\otimes\Fun{T}N}$ and precompose with the induced morphism $(M\otimes N)y^{\Fun{T}(M\otimes N)} \to (M\otimes N)y^{\Fun{T}M\otimes\Fun{T}N}$; proceed similarly on morphisms of differential systems.
    The satisfaction of the unitality and associativity laws follows from the monoidality of $\Fun{T}$.
  \end{proof}
\end{prop}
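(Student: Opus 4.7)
The plan is to exhibit three pieces of structure --- a unit functor $\epsilon : \Cat{1} \to \DiffSys_{\cat{C}}(y)$, a laxator natural transformation with components $\lambda_{p,q} : \DiffSys_{\cat{C}}(p) \times \DiffSys_{\cat{C}}(q) \to \DiffSys_{\cat{C}}(p \otimes q)$, and the unit/associativity coherence isomorphisms --- and then to check that each of them is valued in the subcategory of morphisms satisfying the differential-system ``flow'' condition. The essential input is that the tangent endofunctor $\Fun{T}$ on $\Cat{Diff}_{\cat{C}}$ is itself (strong) monoidal for the restriction of $\otimes$: the tangent bundle of a terminal object is terminal, and the tangent bundle of a monoidal product is canonically the monoidal product of tangent bundles, so one has natural isomorphisms $\Fun{T}(1)\cong 1$ and $\Fun{T}(M)\otimes\Fun{T}(N)\cong\Fun{T}(M\otimes N)$, satisfying the evident coherence diagrams.

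With this in hand I would first construct $\epsilon$: the polynomial iso $1\,y^{\Fun{T}1} \cong 1\,y^1 \cong y$ gives the identity arrow on $y$ as a canonical object of $\DiffSys_{\cat{C}}(y)$, whose flow condition is trivially satisfied because the only section of $y$ is the identity and its induced map $1\to\Fun{T}1\cong 1$ is manifestly a section of the tangent projection. For the laxator on objects, given $m : My^{\Fun{T}M}\to p$ and $n : Ny^{\Fun{T}N}\to q$, I would form the tensor $m\otimes n$ in $\Poly_{\cat{C}}$ via Proposition \ref{prop:poly-tensor}, whose domain is $(M\otimes N)y^{\Fun{T}M\otimes\Fun{T}N}$, and then precompose with the polynomial morphism $(M\otimes N)y^{\Fun{T}(M\otimes N)}\to(M\otimes N)y^{\Fun{T}M\otimes\Fun{T}N}$ obtained (via the bundle description of Remark \ref{rmk:poly-bundles}) from the inverse of the monoidal iso of $\Fun{T}$; the image is an object of $\DiffSys_{\cat{C}}(p\otimes q)$. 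On morphisms, $\lambda_{p,q}$ sends $(\alpha,\beta)$ to $\alpha\otimes\beta$; commutation of the defining square follows from naturality of the $\Fun{T}$-monoidal iso.

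The step I expect to require the most care is verifying that the laxator preserves the flow condition, since sections of $p\otimes q$ need not factor as pairs of sections of $p$ and $q$. For any section $\sigma : p\otimes q\to y$, composing with $\lambda_{p,q}(m,n)$ produces a polynomial morphism $(M\otimes N)y^{\Fun{T}(M\otimes N)} \to y$ and hence, by the Yoneda-type correspondence, a map $M\otimes N \to \Fun{T}(M\otimes N)$; since the precomposition factor is a morphism of bundles over $M\otimes N$ obtained from a bundle isomorphism, the result is automatically a section of the tangent projection $\Fun{T}(M\otimes N)\to M\otimes N$, which is what the flow condition demands.

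Finally, coherence. The unit and associativity axioms of a lax monoidal functor reduce to the corresponding coherences of $\otimes$ on $\Poly_{\cat{C}}$ combined with the strong monoidal coherence of $\Fun{T}$; naturality of $\lambda$ in $(p,q)$ follows from bifunctoriality of $\otimes$ on $\Poly_{\cat{C}}$ together with the fact that the reindexing functors $\DiffSys_{\cat{C}}(\varphi)$ are defined by postcomposition, which commutes with the precomposition used in the laxator. These are diagrammatic verifications that become routine once the monoidality of $\Fun{T}$ is in place.
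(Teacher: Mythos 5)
Your construction is the same as the paper's: you use the strong monoidality of $\Fun{T}$ ($\Fun{T}1\cong 1$, $\Fun{T}M\otimes\Fun{T}N\cong\Fun{T}(M\otimes N)$) to build the unitor via $1y^{\Fun{T}1}\cong 1y^1\cong y$ and the laxator by tensoring $m\otimes n$ and precomposing with the induced morphism $(M\otimes N)y^{\Fun{T}(M\otimes N)}\to(M\otimes N)y^{\Fun{T}M\otimes\Fun{T}N}$, with coherence delegated to the monoidality of $\Fun{T}$; this is exactly the route of the paper's proof sketch, and you go further by also addressing the flow condition, which the paper leaves implicit.

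However, the justification you give for that extra step is not right as stated. You claim that because the precomposition factor comes from a bundle isomorphism, the map $M\otimes N\to\Fun{T}(M\otimes N)$ induced by any section $\sigma : p\otimes q\to y$ is ``automatically'' a section of the tangent projection. By that reasoning the flow condition would hold for the tensor of \emph{arbitrary} morphisms $My^{\Fun{T}M}\to p$ and $Ny^{\Fun{T}N}\to q$, which is false: the hypothesis that $m$ and $n$ themselves satisfy the flow condition must be used. The correct argument is the one you anticipate when you note that sections of $p\otimes q$ need not split: for a fixed state $(x,y)$, the input fed to $m^\#$ by $\sigma$ is the $p$-component of $\sigma\bigl(m_1(x),n_1(y)\bigr)$, and for each fixed configuration $j=n_1(y)$ of $q$ the assignment $i\mapsto \sigma(i,j)$ followed by projection (marginalization, when $\cat{C}=\Kl(\Pa)$) yields a section of $p$; the flow condition for $m$ applied to this section shows the $\Fun{T}M$-component projects back to $x$, and symmetrically for $n$. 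In the stochastic case one also needs the small observation that a joint state on $M\otimes N$ both of whose marginals are point distributions (at $x$ and $y$) is the point distribution at $(x,y)$. With that substitution your sketch is complete and matches the paper's intended proof.
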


We now define a monoidal bicategory $\DiffHier_{\cat{C}}$ of differential hierarchical inference systems, following the definition of $\Hier^\Tt_{\cat{C}}$ above.

\begin{defn}
  Let $\DiffHier_{\cat{C}}$ denote the monoidal bicategory whose 0-cells are again the objects $(A,S)$ of $\BLens{C}$ and whose hom-categories $\DiffHier_{\cat{C}}\bigl((A,S),(B,T)\bigr)$ are given by $\DiffSys_{\cat{C}}\bigl(\blhom{Ay^S,By^T}\bigr)$.
  The identity 1-cell $\id_{(A,S)}:(A,S)\to(A,S)$ on $(A,S)$ is given by the differential system $y \to \blhom{Ay^S,By^T}$ with state space $1$, trivial backwards component, and forwards component that picks the identity Bayesian lens on $(A,S)$.
  The composition of differential systems $(A,S)\to(B,T)$ then $(B,T)\to(C,U)$ is defined by the functor
  \begin{align*}
    & \DiffHier_{\cat{C}}\bigl((A,S),(B,T)\bigr)\times\DiffHier_{\cat{C}}\bigl((B,T),(C,U)\bigr) \\
    & = \DiffSys_{\cat{C}}\bigl(\blhom{Ay^S,By^T}\bigr)\times\DiffSys_{\cat{C}}\bigl(\blhom{By^T,Cy^U}\bigr) \\
    & \xto{\lambda} \DiffSys_{\cat{C}}\bigl(\blhom{Ay^S,By^T}\otimes\blhom{By^T,Cy^U}\bigr) \\
    & \xto{\DiffSys_{\cat{C}}(\mathsf{c})} \DiffSys_{\cat{C}}\bigl(\blhom{Ay^S,Cy^U}\bigr)
    = \DiffHier_{\cat{C}}\bigl((A,S),(C,U)\bigr)
  \end{align*}
  where $\lambda$ is the laxator of Proposition \ref{prop:diffsys-lax} and $\mathsf{c}$ is the external composition morphism of Definition \ref{def:blhom-comp}.

  \noindent
  The monoidal structure $(\otimes,y)$ on $\DiffHier_{\cat{C}}$ is similarly defined following that of $\Hier^\Tt_{\cat{C}}$.
  On 0-cells, $(A,S)\otimes(A',S') := (A\otimes A',S\otimes S')$.
  On 1-cells $(A,S)\to(B,T)$ and $(A',S')\to(B',T')$ (and their 2-cells), the tensor is given by the functors
  \begin{align*}
    & \DiffHier_{\cat{C}}\bigl((A,S),(B,T)\bigr)\times\DiffHier_{\cat{C}}\bigl((A',S'),(B',T')\bigr) \\
    & = \DiffSys_{\cat{C}}\bigl(\blhom{Ay^S,By^T}\bigr)\times\DiffSys_{\cat{C}}\bigl(\blhom{A'y^{S'},B'y^{T'}}\bigr) \\
    & \xto{\lambda} \DiffSys_{\cat{C}}\bigl(\blhom{Ay^S,By^T}\otimes\blhom{A'y^{S'},B'y^{T'}}\bigr) \\
    & \xto{\Coalg^{\Tt}_{\cat{C}}(\mathsf{d})} \DiffSys_{\cat{C}}\bigl(\blhom{Ay^S\otimes A'y^{S'},By^T\otimes B'y^{T'}}\bigr) \\
    & = \DiffHier_{\cat{C}}\bigl((A,S)\otimes(A',S'),(B,T)\otimes(B',T')\bigr)
  \end{align*}
  where $\mathsf{d}$ is the distributive law of Definition \ref{def:blhom-tensor-dist}.
\end{defn}

Following Prop. \ref{prop:unpack-hier}, we have the following characterization of a differential hierarchical inference system $(A,S) \to (B,T)$ in $\Kl(\Pa)$, for $\Pa : \cat{E} \to \cat{E}$.

\begin{prop} \label{prop:unpack-diff}
  A 1-cell $\delta : (A,S) \to (B,T)$ in $\DiffHier_{\Kl(\Pa)}$ is given by a tuple $\delta := (X,\delta^o_1,\delta^o_2,\delta^\#)$ of
  \begin{itemize}
  \item a choice of `state space' $X:\Cat{Diff}_{\cat{E}}$;
  \item a forwards output map $\delta^o_1 : X\times A \to \Pa B$ in $\cat{E}$,
  \item a backwards output map $\delta^o_2 : X\times \Pa A \times T \to \Pa S$ in $\cat{E}$,
  \item a stochastic vector field $\delta^\# : X\times \Pa A\times T \to \Pa\Fun{T} X$ in $\cat{E}$.
  \end{itemize}
\end{prop}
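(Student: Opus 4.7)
The plan is a direct unpacking of the definitions, in close parallel with the proof sketch of Proposition \ref{prop:unpack-hier}; the essential content is the same, just with $Sy^S$ replaced by the differential shape $Xy^{\Fun{T}X}$. First I would recall that, by definition of $\DiffSys_{\Kl(\Pa)}$, a 1-cell $\delta$ consists of an object $X:\Cat{Diff}_{\cat{E}}$ together with a morphism of polynomials $m : Xy^{\Fun{T}X} \to \blhom{Ay^S, By^T}$ in $\Poly_{\Kl(\Pa)}$. The tangent-bundle condition --- namely that $\sigma \circ m$ should correspond to a section of $\Fun{T}X \to X$ for every section $\sigma$ of the codomain polynomial --- is then a property, not data, so the data of $\delta$ is entirely encoded in $(X, m)$.

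Next I would decompose $m$ into its forwards and backwards parts according to the general description of morphisms in $\Poly_{\Kl(\Pa)}$ spelled out after Notation \ref{not:poly-c}. Writing $\blhom{Ay^S, By^T} = L y^{\Pa A \times T}$ with $L := \BLens{\Kl(\Pa)}\bigl((A,S),(B,T)\bigr)$, the forwards part is a morphism $m_1 : X \to L$ in $\cat{E}$, while the backwards part is a family $m^\#_x : \Pa A \times T \to \Pa(\Fun{T}X)$ of $\Kl(\Pa)$-morphisms indexed by $x:X$. Using the Cartesian closure of $\cat{E}$, this family is equivalently a single map $\delta^\# : X \times \Pa A \times T \to \Pa \Fun{T} X$, which is the stochastic vector field announced in the statement.

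Then I would unpack the forwards component $m_1 : X \to L$. A point of $L$ is a Bayesian (non-dependent) lens $(A,S)\lensto(B,T)$, which is a pair consisting of a channel $A \to \Pa B$ and a state-dependent channel $\Pa A \to \Kl(\Pa)(T,S)$, i.e., a map $\Pa A \times T \to \Pa S$. By two more applications of Cartesian closure, giving $m_1$ is the same as giving the pair
\[
\delta^o_1 : X \times A \to \Pa B, \qquad \delta^o_2 : X \times \Pa A \times T \to \Pa S,
\]
which completes the tuple. Assembling these three equivalences gives the required bijective correspondence between 1-cells $\delta$ and tuples $(X, \delta^o_1, \delta^o_2, \delta^\#)$.

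There is no real obstacle here: the argument is clerical, using only the description of $\blhom{-,=}$ from Definition \ref{def:blhom}, the explicit form of morphisms in $\Poly_{\cat{C}}$, and the Cartesian closure of $\cat{E}$ (exactly as in Proposition \ref{prop:unpack-hier}, to which this reduces after replacing $Sy^S$ by $Xy^{\Fun{T}X}$). The only subtlety worth flagging is that, unlike the hierarchical dynamical case, there is no monoid $\Tt$ appearing as a parameter; the analogue of the monoid-action/flow condition has been replaced by the tangent-bundle section condition built into the definition of $\DiffSys_{\cat{C}}$, and this is what renders $\delta^\#$ a \emph{vector field} rather than a generic $\Pa$-kernel out of $X$.
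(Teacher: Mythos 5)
Your unpacking is correct and is essentially the paper's own argument: the paper gives no separate proof here, simply deferring to the ``unpack the definitions using Cartesian closure'' argument of Proposition \ref{prop:unpack-hier}, which is exactly what you carry out (forwards part of the polynomial map landing in the lens object, backwards part giving the $\Kl(\Pa)$-family into $\Fun{T}X$, curried via Cartesian closure). Your closing remark that the tangent-section condition is a property replacing the $\Tt$-flow condition is also the right way to see why no time monoid appears.
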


We can obtain continuous-time dynamical systems from differential systems by integration, and consider how to discretize these flows to give discrete-time dynamical systems.

\begin{prop} \label{prop:diff-flow}
  Integration induces an indexed functor $\Fun{Flow} : \DiffSys_{\cat{C}} \to \Coalg^\rr_{\cat{C}}$.
  \begin{proof}
    Suppose $(M,m)$ is an object in $\DiffSys_{\cat{C}}(p)$.
    The morphism $m : My^{\Fun{T}M} \to p$ consists of a map $m_1 : M\to p(1)$ in $\Comon(\cat{C})$ along with a morphism $m^\# : \sum_{x:M} p[m_1(x)] \to \Fun{T}M$ in $\cat{C}$.
    Since, for any section $\sigma : p\to y$, the induced map $m^\sigma : M\to \Fun{T}M$ is a vector field on a compact manifold, it generates a unique global flow $\Fun{Flow}(p)(m)^\sigma : \rr\times M\to M$ \parencite[{Thm.s 12.9,12.12}]{Lee2012Smooth}, which factors as
    \[ \sum_{t:\rr} M \xto{m_1^*\sigma} \sum_{t:\rr} \sum_{x:M} p[m_1(x)] \xto{\Fun{Flow}(p)(m)^u} M \, . \]
    We therefore define the system $\Fun{Flow}(p)(m)$ to have state space $M$, output map $m_1$ (for all $t:\rr$), and update map $\Fun{Flow}(p)(m)^u$.
    Since $\Fun{Flow}(p)(m)^\sigma$ is a flow for any section $\sigma$, it immediately satisfies the monoid action condition.
    On morphisms $\alpha : m\to m'$, we define $\Fun{Flow}(p)(\alpha)$ by the same underlying map on state spaces; this is again well-defined by the condition that $\alpha$ is compatible with the tangent structure.
    Given a morphism $\varphi:p\to q$ of polynomials, both the reindexing $\DiffSys_{\cat{C}}(\varphi)$ and $\Coalg^\rr_{\cat{C}}(\varphi)$ act by postcomposition, and so it is easy to see that $\Coalg^\rr_{\cat{C}}(\varphi)\circ\Fun{Flow}(p) \cong \Fun{Flow}(q)\circ\DiffSys_{\cat{C}}(\varphi)$ naturally.
  \end{proof}
\end{prop}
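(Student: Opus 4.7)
The plan is to construct, for each polynomial $p : \Poly_{\cat{C}}$, a functor $\Fun{Flow}(p) : \DiffSys_{\cat{C}}(p) \to \Coalg^\rr_{\cat{C}}(p)$, and then check that these functors are natural in $p$, so that they assemble into an indexed functor. The input data for a differential system $(M,m)$ is a pair $(m_1, m^\#)$ with $m_1 : M\to p(1)$ in $\Comon(\cat{C})$ and $m^\# : \sum_{x:M} p[m_1(x)] \to \Fun{T}M$ in $\cat{C}$; the hypothesis says that for each section $\sigma : p\to y$, this induces a vector field $m^\sigma : M\to\Fun{T}M$ on the compact manifold $M$. The output data we need is a dynamical system in $\Coalg^\rr_{\cat{C}}(p)$, namely a triple $(M, \beta^o, \beta^u)$ whose closure by every section is a monoid action of $\rr$ on $M$.

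On objects, I would set the state space to be $M$ itself, take $\beta^o : \rr\times M \to p(1)$ to be the constant-in-time extension of $m_1$, and build $\beta^u : \sum_{t:\rr}\sum_{x:M} p[m_1(x)] \to M$ by ``integrating $m^\#$ along $\rr$''. Concretely, the compactness of $M$ together with standard flow theory for smooth manifolds (e.g. Lee, Thm.~12.9/12.12) produces, for each section $\sigma$, a unique global flow $\phi^\sigma : \rr\times M \to M$ of $m^\sigma$; one then defines $\beta^u$ so that the factorization $\sum_{t:\rr} M \xto{{m_1}^*\sigma} \sum_{t:\rr}\sum_{x:M} p[m_1(x)] \xto{\beta^u} M$ reproduces $\phi^\sigma$ for every $\sigma$. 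The monoid action axioms ($\phi^\sigma(0,-)=\id$ and $\phi^\sigma(t+s,-) = \phi^\sigma(t,\phi^\sigma(s,-))$) are precisely the defining properties of a global flow, so the closure condition of Proposition~\ref{prop:pM-coalg} holds automatically.

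On morphisms, a map $\alpha : (M,m)\to(M',m')$ in $\DiffSys_{\cat{C}}(p)$ is a map of state spaces compatible with the tangent-polynomial data; I would send it to the same underlying map of state spaces. Compatibility with the update maps of $\Fun{Flow}(p)(m)$ and $\Fun{Flow}(p)(m')$ then reduces to the classical fact that a smooth map intertwining two vector fields intertwines their flows, which is exactly what the commuting square defining $\alpha$ encodes after closure by any section. For naturality in $p$, given $\varphi : p\to q$, both reindexings $\DiffSys_{\cat{C}}(\varphi)$ and $\Coalg^\rr_{\cat{C}}(\varphi)$ act by postcomposition with $\varphi$ on the polynomial-valued map, and the underlying state space, vector field (upon closure) and, hence, flow are unchanged; so the square $\Coalg^\rr_{\cat{C}}(\varphi)\circ\Fun{Flow}(p) \cong \Fun{Flow}(q)\circ\DiffSys_{\cat{C}}(\varphi)$ commutes.

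The main obstacle I anticipate is not the flow existence statement (which is classical once compactness is in hand) but rather expressing $\beta^u$ cleanly as a single morphism in $\cat{C}$, as opposed to only a family indexed by sections $\sigma$. The construction needs to factor through $m^\#$ itself rather than through each $m^\sigma$, so one must argue that ``integrating'' $m^\#$ makes sense as an operation on the total space $\sum_{t:\rr}\sum_{x:M} p[m_1(x)]$ and yields an arrow of $\cat{C}$ whose post-composition with any section-pullback ${m_1}^*\sigma$ recovers the flow of $m^\sigma$. Once this is done, the remaining verifications -- functoriality and indexed-naturality -- reduce to routine unpacking of the definitions and the functoriality of the tangent bundle $\Fun{T}$.
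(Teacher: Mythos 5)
Your proposal is correct and follows essentially the same route as the paper: state space $M$, output map the time-constant extension of $m_1$, update map obtained from the unique global flow guaranteed by compactness, morphisms sent to the same underlying maps, and naturality via postcomposition. The subtlety you flag---defining $\beta^u$ as a single morphism factoring through $m^\#$ rather than only section-by-section---is in fact also left implicit in the paper's own proof, which simply asserts the factorization $\sum_{t:\rr} M \xto{m_1^*\sigma} \sum_{t:\rr}\sum_{x:M} p[m_1(x)] \to M$ without further construction.
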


\begin{rmk} \label{rmk:diff-disc}
  From Proposition \ref{prop:diff-flow} and the earlier Corollary \ref{cor:coalg-disc}, we obtain a family of composite indexed functors $\DiffSys_{\cat{C}} \xto{\Fun{Flow}} \Coalg^\rr_{\cat{C}} \xto{\Fun{Disc}_k} \Coalg^\nn_{\cat{C}}$ taking each differential system to a discrete-time dynamical system in $\cat{C}$.
  Below, we will define approximate inference doctrines in discrete time that arise from processes of (stochastic) gradient descent, and which therefore factor through differential systems, but the form in which these are given---and in which they are found in the informal literature (\textit{e.g.}, \textcite{Bogacz2017tutorial})---is not obtained via the composite $\Fun{Disc}_k \circ \Fun{Flow}$ for any $k$, even though there is a free parameter $k$ that plays the same role (intuitively, a `learning rate').
  Instead, one typically adopts the following `naïve' discretization scheme.

  Let $\CartDiffSys_{\cat{C}}$ denote the sub-indexed category of $\DiffSys_{\cat{C}}$ spanned by those systems with Cartesian state spaces $\rr^n$.
  Naive discretization induces a family of indexed functors $\Fun{Naive}_k : \CartDiffSys_{\cat{C}} \to \Coalg^\nn_{\cat{C}}$, for $k:\rr$, which we illustrate for a single system $(\rr^n,m)$ over a fixed polynomial $p$, with $m : \rr^n y^{\rr^n\times\rr^n} \to p$ (since $\Fun{T}\rr^n \cong \rr^n\times\rr^n$).
  This system is determined by a pair of morphisms $m_1 : \rr^n \to p(1)$ and $m^\# : \sum_{x:\rr^n} p[m_1(x)] \to \rr^n\times \rr^n$, and we can write the action of $m^\#$ as $(x,y) \mapsto (x, v_x(y))$.

  Using these, we define a discrete-time dynamical system $\beta$ over $p$ with state space $\rr^n$.
  This $\beta$ is given by an output map $\beta^o$, which we define to be equal to $m_1$, $\beta^o := m_1$, and an update map $\beta^u : \sum_{x:\rr^n} p[\beta^o(x)] \to \rr^n$, which we define by $(x,y) \mapsto x + k\, v_x(y)$.
  Together, these define a system in $\Coalg^\nn_{\cat{C}}(p)$, and the collection of these systems $\beta$ produces an indexed functor by the definition $\Fun{Naive}_k(p)(m) := \beta$.

  By contrast, the discrete-time system obtained via $\Fun{Disc}_k \circ \Fun{Flow}$ involves integrating a continuous-time one for $k$ units of real time for each unit of discrete time: although this in general produces a more accurate simulation of the trajectories implied by the vector field, it is computationally more arduous; to trade off simulation accuracy against computational feasibility, one may choose a more sophisticated discretization scheme than that sketched above, or at least choose a ``sufficiently small'' timescale $k$.
\end{rmk}

Finally, we can use the foregoing ideas to translate differential hierarchical inference systems to dynamical hierarchical inference systems.

\begin{cor}
  Let $\CartDiffHier_{\cat{C}}$ denote the restriction of $\DiffHier_{\cat{C}}$ to hom-categories in $\CartDiffSys_{\cat{C}}$.
  The indexed functors $\Fun{Disc}_k : \Coalg^\rr_{\cat{C}} \to \Coalg^\nn_{\cat{C}}$, $\Fun{Flow} : \DiffSys_{\cat{C}} \to \Coalg^\rr_{\cat{C}}$, and $\Fun{Naive}_k : \CartDiffSys_{\cat{C}} \to \Coalg^\nn_{\cat{C}}$ induce functors (respectively) $\H\Fun{Disc}_k : \Hier^\rr_{\cat{C}} \to \Hier^\nn_{\cat{C}}$, $\H\Fun{Flow} : \DiffHier_{\cat{C}} \to \Hier^\rr_{\cat{C}}$ and $\H\Fun{Naive}_k : \CartDiffHier_{\cat{C}} \to \Hier^\nn_{\cat{C}}$ by change of base of enrichment.
\end{cor}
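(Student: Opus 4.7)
The idea is that each of $\Hier^\rr_{\cat{C}}$, $\Hier^\nn_{\cat{C}}$, $\DiffHier_{\cat{C}}$, and $\CartDiffHier_{\cat{C}}$ arises as a bicategory whose 0-cells are the objects of $\BLens{C}$, and whose hom-categories are obtained by applying a lax monoidal opindexed category $\Phi : \Poly_{\cat{C}} \to \Cat{Cat}$ to the external-hom polynomial $\blhom{Ay^S,By^T}$; composition of 1-cells is built from the laxator $\lambda^\Phi$ together with the reindexing $\Phi(\mathsf{c})$ along external composition, and the bicategorical tensor is built from $\lambda^\Phi$ and $\Phi(\mathsf{d})$. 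The plan is therefore to exhibit $\Fun{Disc}_k$, $\Fun{Flow}$, and $\Fun{Naive}_k$ as \emph{monoidal} morphisms of indexed categories, and then to observe that any such morphism $F : \Phi \to \Psi$ induces a functor between the corresponding bicategories, acting as the identity on 0-cells and by the component $F_{\blhom{Ay^S,By^T}}$ on each hom-category.

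First I would verify monoidality of each indexed functor individually. For $\Fun{Disc}_k$, the laxator of $\Coalg^\Tt_{\cat{C}}$ combines two systems by pairing their state spaces and running them simultaneously, and discretization acts trivially on state spaces while commuting with evaluation of the output map at integer multiples of $k$; so $\Fun{Disc}_k$ strictly preserves laxators. For $\Fun{Naive}_k$, monoidality follows essentially from the definition, since naïve Euler discretization acts componentwise on Cartesian state spaces and the sum $x \mapsto x + k\, v_x(y)$ commutes with projection onto each factor. For $\Fun{Flow}$, monoidality reduces to the fact that the flow of a product vector field on $M\otimes N$ decomposes as the product of the flows on $M$ and $N$, which follows from uniqueness of solutions to the initial value problem combined with the strong monoidality of $\Fun{T}$.

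Second, given such a monoidal indexed functor $F : \Phi \to \Psi$, I would define the induced bicategory functor on hom-categories as the component $F_{\blhom{Ay^S,By^T}}$, and verify the usual coherence data. Identity 1-cells are preserved because $F$ sends the unit system over $y$ (given by the monoidal unit of $\Phi$) to the unit system of $\Psi$. Compositors arise from the square
\[
\begin{tikzcd}[column sep=large]
\Phi(p)\times\Phi(q) \arrow[r,"\lambda^\Phi"] \arrow[d,"F\times F"'] & \Phi(p\otimes q) \arrow[r,"\Phi(\mathsf{c})"] \arrow[d,"F"'] & \Phi(r) \arrow[d,"F"] \\
\Psi(p)\times\Psi(q) \arrow[r,"\lambda^\Psi"'] & \Psi(p\otimes q) \arrow[r,"\Psi(\mathsf{c})"'] & \Psi(r)
\end{tikzcd}
\]
with $p = \blhom{Ay^S,By^T}$, $q = \blhom{By^T,Cy^U}$, $r = \blhom{Ay^S,Cy^U}$; the left square commutes (up to coherent iso) by monoidality of $F$, and the right square commutes by naturality of $F$ as an indexed functor. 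The same diagram with $\mathsf{c}$ replaced by $\mathsf{d}$ handles the monoidal tensor on 1-cells, and 2-cells are handled automatically by the functoriality of $F$ on the fibres.

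The main obstacle is verifying monoidality of $\Fun{Flow}$. Concretely, given $m : My^{\Fun{T}M}\to p$ and $n : Ny^{\Fun{T}N}\to q$, the laxator of $\DiffSys_{\cat{C}}$ produces a system on $M\otimes N$ only after precomposing $m\otimes n$ with the comparison $(M\otimes N)y^{\Fun{T}(M\otimes N)} \to (M\otimes N)y^{\Fun{T}M\otimes\Fun{T}N}$ induced by strong monoidality of $\Fun{T}$. One must then show that for any section $\sigma\otimes\tau$ of $p\otimes q$, the global flow generated by the resulting vector field on $M\otimes N$ coincides with the pointwise product of the flows of $m^\sigma$ and $n^\tau$; this requires invoking uniqueness of maximal integral curves on each factor and using the compactness hypothesis on state spaces to guarantee global existence. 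Once this is established, the corresponding diagram commutes strictly, and the remainder of the corollary follows by the general change-of-base argument sketched above.
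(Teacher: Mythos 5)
Your proposal is correct and is essentially the paper's own (largely unwritten) argument: the paper justifies the corollary only with the phrase ``by change of base of enrichment,'' and your outline---exhibiting $\Fun{Disc}_k$, $\Fun{Flow}$, and $\Fun{Naive}_k$ as monoidal indexed functors compatible with the laxators and with reindexing along $\mathsf{c}$ and $\mathsf{d}$, then transporting hom-categories while fixing the 0-cells of $\BLens{C}$---is precisely what that phrase abbreviates. The only point worth flagging is that your verification of monoidality for $\Fun{Flow}$ via product sections $\sigma\otimes\tau$ implicitly assumes the update map of a flow is determined by its closures at such sections; this is the same implicit assumption already made in the paper's Proposition \ref{prop:diff-flow}, so your argument is at the paper's own level of rigour.
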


\section{Approximate inference doctrines} \label{sec:doctrines}

We are now in a position to build the bridge between abstract statistical models and the dynamical systems that play them, with the categories of hierarchical dynamical systems developed in the previous section supplying the semantics.
These bridges will be
functors, which we call \textit{approximate inference doctrines}.
In general, they will be functors from categories of parameterized statistical models, whose parameters form part of the dynamical state spaces, and often we are particularly interested in only a particular class of statistical models, which typically form a subcategory of a broader category of stochastic channels.
We therefore make the following definition.

\begin{defn}
  Let $\cat{D}$ be a subcategory of $\eP\cat{C}$.
  An \textit{approximate inference doctrine} for $\cat{D}$ in time $\Tt$ is a functor $\cat{D} \to \Hier^\Tt_{\cat{C}}$.
\end{defn}

Here, $\eP\cat{C}$ denotes the external parameterization of $\cat{C}$, to the definition of which we now turn.

\subsection{External parameterization} \label{sec:doctrines-para}

In the previous instalment of this series, we considered parameterized Bayesian lenses \parencite[\S 3.4]{Smithe2021Compositional1} and statistical games \parencite[Cor. 4.14, Ex. 5.5]{Smithe2021Compositional1}, in order to treat systems with the ability to improve their statistical performance.
Approximate inference doctrines operationalize this improvement, but in this context it is preferable to consider statistical systems that are `externally' rather than `internally' parameterized:
the improvement of the performance is typically a process that is `external' to the solution of the statistical problem (\textit{e.g.}, inference) itself; for instance, learning is often assumed \parencite{Buckley2017free} to take place on a slower timescale than inference.

Technically, we can see this distinction by considering the type of an internally parameterized Bayesian lens, following \textcite[\S 3.4]{Smithe2021Compositional1}.
If $(\gamma,\rho):(A,S)\xlensto{(\Theta,\Omega)}(B,T)$ is such a lens, then its forward channel $\gamma$ has the type $\Theta\otimes A\klto B$, and the backwards channel $\rho$ has the type $\cat{C}(I, \Theta\otimes A)\to\cat{C}(T, \Omega\otimes S)$.
Notice that this means that in general the inversion $\rho$ depends on a joint prior over $\Theta\otimes A$, and produces an updated state over $\Omega\otimes S$, even though one is often interested only in a family of inversions of the type $\cat{C}(I,A)\to\cat{C}(T,S)$ parameterized by $\Omega$, with the updating of the parameters taking place in an external process that `observes' the performance of the statistical game.
We make this distinction formal using the notion of external parameterization.

\begin{defn} \label{def:ext-para}
  Given a category $\cat{C}$ enriched in $(\cat{E},\times,1)$, we define the \textit{external parameterization} $\eP\cat{C}$ of $\cat{C}$ in $\cat{E}$ as the following bicategory.
  0-cells are the objects of $\cat{C}$, and each hom-category $\eP\cat{C}(A,B)$ is given by the slice category $\cat{E}/\cat{C}(A,B)$.
  The composition of 1-cells is by composing in $\cat{C}$ after taking the product of parameters:
  given $f:\Theta\to\cat{C}(A,B)$ and $g:\Omega\to\cat{C}(B,C)$, their composite $g\circ f$ is
  \[ g\circ f := \Omega\times\Theta \xto{g\times f} \cat{C}(B,C)\times\cat{C}(A,B) \xto{\klcirc} \cat{C}(A,C) \]
  where $\klcirc$ is the composition map for $\cat{C}$ in $\cat{E}$.
  The identity 1-cells are the points on the identity morphisms in $\cat{C}$.
  For instance, the identity 1-cell on $A$ is the corresponding point $\id_A : 1\to\cat{C}(A,A)$.
  We will denote 1-cells using our earlier notation for parameterized morphisms: for instance, $f : A\xto{\Theta}B$ and $\id_A : A\xto{1}A$.
  The horizontal composition of 2-cells is given by taking their product.
\end{defn}

As an example, let us consider externally parameterized statistical games.

\begin{ex} \label{ex:ext-para-sgame}
  The category $\PSGame{\cat{C}}$ of externally parameterized statistical games in $\cat{C}$ has as 0-cells pairs of objects in $\cat{C}$ (as in the case of Bayesian lenses or plain statistical games).
  Its 1-cells $(A,S)\xto{\Theta}(B,T)$ are parameterized games, consisting in a choice of parameter space $\Theta$, an externally parameterized lens $f:\Theta\to\BLens{C}((A,S),(B,T))$, and an externally parameterized loss function $\phi:\sum_{\vartheta:\Theta}\cctx(f_\vartheta)\to\rr$.
  The identity on $(A,S)$ is given by the trivially parameterized element $\id_{(A,S)}:1\to\BLens{C}((A,S),(A,S))$, equipped with the zero loss function, as in the case of unparameterized statistical games.
  Given parameterized games $(f,\phi):(A,S)\xto{\Theta}(B,T)$ and $(g,\psi):(B,T)\xto{\Theta'}(C,U)$, we form their composite as follows.
  The composite parameterized lens is given by taking the product of the parameter spaces:
  \[
  \Theta\times\Theta'\xto{f\times g}\BLens{C}\bigl((A,S),(B,T)\bigr)\times\BLens{C}\bigl((B,T),(C,U)\bigr)\xto{\lenscirc}\BLens{C}\bigl((A,S),(C,U)\bigr)
  \]
  The composite fitness function is given accordingly:
  \[
  \sum_{\vartheta:\Theta,\vartheta':\Theta'}\cctx(g_{\vartheta'}\lenscirc f_\vartheta)\xto{\bcopier}\sum_{\vartheta,\vartheta'}\cctx(g_{\vartheta'}\lenscirc f_\vartheta)^2\xto{({g_{\vartheta'}}^*,{f_\vartheta}_*)}\sum_{\vartheta,\vartheta'}\cctx(f_\vartheta)\times\cctx(g_{\vartheta'})\xto{(\phi_\vartheta,\psi_{\vartheta'})}\rr\times\rr\xto{+}\rr
  \]
\end{ex}

For concision, when we say \textit{parameterized statistical game} or \textit{parameterized lens} in the absence of further qualification, we will henceforth mean the externally (as opposed to internally) parameterized versions.

\begin{rmk}
  In prior work, this external parameterization construction has been called `proxying' \parencite{Capucci2021Parameterized}.
  We prefer the more explicit name `external parameterization', reserving `proxying' for a slightly different double-categorical construction to appear in future work.
\end{rmk}

\begin{rmk}
  Before moving on to examples of approximate inference doctrines, let us note the similarity of the notions of external parameterization, differential system, and dynamical system:
  both of the latter can be considered as externally parameterized systems with extra structure, where the extra structure is a morphism or family of morphisms back into (an algebra of) the parameterizing object:
  in the case of differential systems, this `algebra' is the tangent bundle; for dynamical systems, it is trivial; and forgetting this extra structure returns a mere external parameterization.
  Approximate inference doctrines are thus functorial ways of equipping morphisms with this extra structure, and in this respect they are close to the current understanding of general compositional game theory \parencite{Capucci2022Diegetic,Capucci2021Towards}.
\end{rmk}

\subsection{Channels with Gaussian noise} \label{sec:doctrines-gauss}

Our motivating examples from the computational neuroscience literature are defined over a subcategory of channels between Cartesian spaces with additive Gaussian noise \parencite{Friston2007Variational,Bogacz2017tutorial,Buckley2017free}; typically one writes $x \mapsto f(x) + \omega$ for a deterministic map $f : X\to Y$ and $\omega$ sampled from a Gaussian distribution over $Y$.
This choice is made, as we will see, because it permits some simplifying assumptions which mean the resulting dynamical systems resemble known neural circuits.
In this section, we develop the categorical language in which we can express such Gaussian channels.
We begin by introducing the category of probability spaces and measure-preserving maps, which we then use to define channels of the general form $x \mapsto f(x) + \omega$, before restricting to the finite-dimensional Gaussian case.

\begin{defn}
  Let \(\Pa\Cat{\mdash Spc}\) be the category \(\Cat{Comon}\big(1/\Kl(\Pa)\big)\) of probability spaces \((M, \mu)\) with $\mu : 1\klto M$ in $\Kl(\Pa)$ (\textit{i.e.}, $1\to \Pa M$ in $\cat{E}$), and whose morphisms \(f : (M, \mu) \to (N, \nu)\) are measure-preserving maps \(f : M \to N\) (\textit{i.e.}, such that \(f \klcirc \mu = \nu\) in \(\Kl(\Pa)\)).
\end{defn}

We can think of $x \mapsto f(x) + \omega$ as a map parameterized by a noise source, and so to construct a category of such channels, we can use the $\para$ construction in its actegorical form.
We will use the monoidal-actegorical definition of $\para$ given in \textcite[{\S 2.3}]{Smithe2021Compositional1}, following \textcite{Capucci2021Towards}; for a comprehensive reference on actegory theory, see \textcite{Capucci2022Actegories}.
The first step is to spell out the actegory structure.

\begin{prop} \label{prop:probstoch-act}
  Let \(\Pa : \cat{E} \to \cat{E}\) be a probability monad on the symmetric monoidal category \((\cat{E}, \times, 1)\).
  Then there is a \(\Pa\Cat{\mdash Spc}\)-actegory structure \(\ast : \Pa\Cat{\mdash Spc} \to \Cat{Cat}(\cat{E}, \cat{E})\) on \(\cat{E}\) as follows.
  For each \((M, \mu) : \Pa\Cat{\mdash Spc}\), define \((M, \mu) \ast (-) : \cat{E} \to \cat{E}\) by \((M, \mu) \ast X := M \times X\).
  For each morphism \(f : (M, \mu) \to (M', \mu')\) in \(\Pa\Cat{\mdash Spc}\), define \(f \ast X := f \times \id_X\).
  \begin{proof}[Proof sketch]
    The action on morphisms is well-defined because each morphism \(f : M \klto N\) in \(\Cat{Comon}\big(1/\Kl(\Pa)\big)\) corresponds to a map \(f : M \to N\) in \(\cat{E}\); it is clearly functorial.
    The unitor and associator are inherited from the Cartesian monoidal structure \((\times, 1)\) on \(\cat{E}\).
\end{proof}
\end{prop}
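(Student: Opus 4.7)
The plan is to verify the three components of a \(\Pa\Cat{\mdash Spc}\)-actegory: that each \((M,\mu)\ast(-)\) is a functor \(\cat{E}\to\cat{E}\); that the assignment \((M,\mu)\mapsto(M,\mu)\ast(-)\) extends to a functor \(\Pa\Cat{\mdash Spc}\to\Cat{Cat}(\cat{E},\cat{E})\); and that this functor is strong monoidal, carrying the monoidal structure of \(\Pa\Cat{\mdash Spc}\) to composition of endofunctors, so that the actegory coherences hold. Everything will be inherited from the Cartesian monoidal structure \((\times,1)\) on \(\cat{E}\), once we equip \(\Pa\Cat{\mdash Spc}\) with a monoidal structure via the monoidal-monad structure on \(\Pa\).

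First I would define \((M,\mu)\ast g := \id_M\times g\) on morphisms \(g : X\to Y\) of \(\cat{E}\); bifunctoriality of \(\times\) immediately gives functoriality of \((M,\mu)\ast(-)\). A morphism \(f:(M,\mu)\to(M',\mu')\) of \(\Pa\Cat{\mdash Spc}\) is by definition an \(\cat{E}\)-morphism \(f:M\to M'\) satisfying \(\Pa f\circ\mu = \mu'\), so \(f\ast X := f\times\id_X\) is a bona fide \(\cat{E}\)-morphism. Naturality of \(f\ast(-)\) as a transformation \((M,\mu)\ast(-)\Rightarrow(M',\mu')\ast(-)\), and functoriality of the assignment \(f\mapsto f\ast(-)\), both then reduce to bifunctoriality of \(\times\); the interchange \((\id_{M'}\times g)\circ(f\times\id_X)=(f\times\id_Y)\circ(\id_M\times g)\) is exactly what is needed.

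Next I would equip \(\Pa\Cat{\mdash Spc}\) with a monoidal structure \((\otimes,(1,\eta_1))\) by setting \((M,\mu)\otimes(N,\nu) := (M\times N,\, \mathsf{str}\circ(\mu\times\nu))\), where \(\mathsf{str}\) is the strength of the monoidal monad \(\Pa\); measure-preservation of \(f\otimes g\) for measure-preserving \(f,g\) is the naturality of the strength, and the unit \((1,\eta_1)\) is the Dirac measure on the terminal object. The unitor and associator for \(\ast\) are then taken to be those of \((\cat{E},\times,1)\) at the underlying objects: the unit constraint is the canonical iso \((1,\eta_1)\ast X = 1\times X \xrightarrow{\sim} X\), and the associator is \(\bigl((M,\mu)\otimes(N,\nu)\bigr)\ast X = (M\times N)\times X \xrightarrow{\sim} M\times(N\times X) = (M,\mu)\ast\bigl((N,\nu)\ast X\bigr)\).

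The main obstacle would be verifying the two actegory coherence axioms (the mixed pentagon and the mixed triangle relating \(\otimes\) on \(\Pa\Cat{\mdash Spc}\), \(\ast\), and composition of endofunctors of \(\cat{E}\)), but because the unitor and associator for \(\ast\) are inherited verbatim from those of \((\cat{E},\times,1)\) on the underlying objects, each such coherence diagram reduces on the nose to the corresponding Cartesian coherence for \(\cat{E}\), which holds by hypothesis. In effect, \(\ast\) is just the restriction of the canonical self-action of \((\cat{E},\times,1)\) along the forgetful functor \(\Pa\Cat{\mdash Spc}\to\cat{E}\), and the actegory axioms are preserved under such restriction.
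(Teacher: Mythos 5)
Your proposal is correct and follows essentially the same route as the paper's sketch: the action is well-defined because morphisms of \(\Pa\Cat{\mdash Spc}\) are underlying \(\cat{E}\)-maps, functoriality reduces to bifunctoriality of \(\times\), and the unitor, associator, and coherence axioms are inherited from the Cartesian structure \((\times,1)\) on \(\cat{E}\) -- equivalently, as you note, \(\ast\) is the restriction of the self-action of \((\cat{E},\times,1)\) along the forgetful functor. The only addition you make beyond the paper's sketch is to spell out the monoidal structure \(\bigl((M,\mu)\otimes(N,\nu) := (M\times N,\,\mathsf{str}\circ(\mu\times\nu)),\ (1,\eta_1)\bigr)\) on \(\Pa\Cat{\mdash Spc}\), which the paper leaves implicit and which is a welcome clarification (it does rely on \(\Pa\) being a commutative/monoidal monad, consistent with the paper's standing assumptions).
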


The resulting $\para$ bicategory, $\para(\ast)$, can be thought of as a bicategory of maps each of which is equipped with an independent noise source; the composition of maps takes the product of the noise sources, and 2-cells are noise-source reparameterizations.
The actegory structure $\ast$ is symmetric monoidal, and the 1-categorical truncation $\para(\ast)_1$ \parencite[{Prop. 2.47}]{Smithe2021Compositional1} is a copy-delete category \parencite[{Def. 2.2}]{Cho2017Disintegration} (also \parencite[{Def. 2.20}]{Smithe2021Compositional1}) as we now sketch.

\begin{prop}
  Consider the actegory structure \(\ast\) of Proposition \ref{prop:probstoch-act}.
  Then \(\para(\ast)_1\) is a copy-delete category.
  \begin{proof}[Proof sketch]
    The monoidal structure is defined following Proposition 2.44 of \textcite{Smithe2021Compositional1}.
    We need to define a right costrength $\rho$ with components $(N,\nu)\ast(X\times Y) \xto{\sim} X\times((N,\nu)\ast Y)$.
    Since $\ast$ is defined by forgetting the probability structure and taking the product, the costrength is given by the associator and symmetry in $\cat{E}$:
    \[ (N,\nu)\ast(X\times Y) = N\times(X\times Y) \xto{\sim} N\times(Y\times X) \xto{\sim} (N\times Y)\times X \xto{\sim} X\times(N\times Y) = X\times((N,\nu)\ast Y) \]
    It is clear that this definition gives a natural isomorphism; the rest of the monoidal structure follows from that of the product on $\cat{E}$.

    We now need to define a symmetry natural isomorphism $\beta_{X,Y}:X\times Y \xto{\sim} Y\times X$ in $\para(\ast)$.
    This is given by the symmetry of the product in $\cat{E}$, under the embedding of $\cat{E}$ in $\para(\ast)$ that takes every map to its parameterization by the terminal probability space.

    The rest of the copy-delete structure is inherited similarly from \(\cat{E}\).
  \end{proof}
\end{prop}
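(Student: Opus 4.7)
The plan is to show that the copy-delete structure on $\cat{E}$ (inherited from the Cartesian product $(\times,1)$) lifts through the embedding $\iota:\cat{E}\hookrightarrow\para(\ast)_1$ that parameterizes each morphism by the terminal probability space $(1,\eta_1)$. Because the action $\ast$ reduces to the Cartesian product on underlying objects, the relevant structural maps and coherences descend from those of $\cat{E}$ essentially by inspection.

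First I would invoke Proposition~2.44 of \textcite{Smithe2021Compositional1} to endow $\para(\ast)$ with a symmetric monoidal structure; this requires the actegory $\ast$ to carry a symmetric monoidal structure, which amounts to exhibiting a right costrength $\rho_{(N,\nu),X,Y}:(N,\nu)\ast(X\times Y)\xto{\sim}X\times\bigl((N,\nu)\ast Y\bigr)$. Because $(N,\nu)\ast Z = N\times Z$, the costrength is the associator-and-symmetry composite $N\times(X\times Y)\cong X\times(N\times Y)$ in $\cat{E}$; naturality and the requisite pentagon and hexagon coherences are immediate from the Cartesian structure. With this in hand, the tensor of 1-cells takes the product of parameters: parameterized maps $f:(N,\nu)\ast X\to Y$ and $g:(M,\mu)\ast X'\to Y'$ tensor to a morphism $(N\times M,\nu\otimes\mu)\ast(X\times X')\to Y\times Y'$ obtained by distributing the parameters past $X'$ via $\rho$.

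Next I would install the comonoid structure on each object. For $X$ in $\para(\ast)_1$, take the copy map $\Delta_X:=\iota(\delta_X):X\to X\times X$ and the discard $!_X:=\iota(\tau_X):X\to 1$, where $\delta_X$ and $\tau_X$ are the canonical diagonal and terminal map in $\cat{E}$. Cocommutativity, coassociativity, and the counit laws then reduce along $\iota$ to the corresponding equations in $\cat{E}$, which hold because $(\times,1)$ is Cartesian; the compatibility of the comonoid structures with the symmetric monoidal product follows likewise once one observes that $\iota$ is strong symmetric monoidal on its image.

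The main subtlety — and the point at which one sees that $\para(\ast)_1$ is copy-delete rather than Cartesian — is that a genuinely parameterized 1-cell $f:(N,\nu)\ast X\to Y$ with nontrivial $(N,\nu)$ is \emph{not} a comonoid homomorphism: $\Delta_Y\circ f$ uses a single draw of the noise before duplicating $Y$, whereas $(f\otimes f)\circ\Delta_X$ forces two independent draws. I would therefore take the axioms of a copy-delete category following \parencite[Def.~2.2]{Cho2017Disintegration} as requiring only the object-wise comonoid data together with compatibility with the symmetric monoidal coherences, without demanding that every morphism be a comonoid homomorphism. Modulo that observation, every axiom reduces to an identity in Cartesian $\cat{E}$, completing the verification.
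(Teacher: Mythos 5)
Your proposal is correct and follows essentially the same route as the paper's own sketch: the symmetric monoidal structure on $\para(\ast)_1$ via Proposition 2.44 with the costrength given by associator-and-symmetry in $\cat{E}$, and the comonoid (copy/delete) data inherited along the embedding of $\cat{E}$ by trivial (terminal probability space) parameterization. Your closing observation—that genuinely parameterized morphisms fail to be comonoid homomorphisms, which is consistent with the copy-delete axioms of \parencite[Def.~2.2]{Cho2017Disintegration} since those demand only object-wise comonoids compatible with the tensor—is an accurate and useful clarification, not a deviation.
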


If we think of $\Kl(\Pa)$ as a canonical category of stochastic channels, for $\para(\ast)_1$ to be considered as a subcategory of Gaussian channels, we need the following result.

\begin{prop} \label{prop:embed-para-ast-kl}
  There is an identity-on-objects strict monoidal embedding of \(\para(\ast)_1\) into \(\Kl(\Pa)\).
  Given a morphism \(f : X \xto{(\Omega, \mu)} Y\) in \(\para(\ast)_1\), form the composite \(f \klcirc (\mu, \id_X) : X \klto Y\) in \(\Kl(\Pa)\).
  \begin{proof}[Proof sketch]
    First, the given mapping preserves identities: the identity in $\para(\ast)$ is trivially parameterized, and is therefore taken to the identity in $\Kl(\Pa)$.
    The mapping also preserves composites, by the naturality of the unitors of the symmetric monoidal structure on $\Kl(\Pa)$.
    That is, given $f : X\xto{(\Omega,\mu)}Y$ and $g : Y\xto{(\Theta,\nu)}Z$, their composite $g\circ f:X\xto{(\Theta\otimes\Omega,\nu\otimes\mu)}Z$ is taken to
    \[ X \xklto{\sim} 1\otimes 1\otimes X \xklto{\nu\otimes\nu\otimes\id_X} \Theta\otimes\Omega\otimes X \xklto{g\circ f} Z \]
    where here $g\circ f$ is treated as a morphism in $\Kl(\Pa)$.
    Composing the images of $g$ and $f$ under the given mapping gives
    \[ X \xklto{\sim} 1\otimes X \xklto{\mu\otimes\id_X} \Omega\otimes X \xklto{f} Y \xklto{\sim} 1\otimes Y \xklto{\nu\otimes Y} \Theta\otimes Y \xklto{g} Z \]
    which is equal to
    \[ X \xklto{\sim} 1\otimes 1\otimes X \xklto{\nu\otimes\mu\otimes\id_X} \Theta\otimes\Omega\otimes X \xklto{\id_\Theta\otimes f} \Theta\otimes Y \xklto{g} Z \]
    which in turn is equal to the image of the composite above.

    The given mapping is therefore functorial.
    To show that it is an embedding is to show that it is faithful and injective on objects.
    Since $\para(\ast)$ and $\Kl(\Pa)$ have the same objects, the embedding is trivially identity-on-objects (and hence injective); it is similarly easy to see that it is faithful, as distinct morphisms in $\para(\ast)$ are mapped to distinct morphisms in $\Kl(\Pa)$.

    Finally, since the embedding is identity-on-objects and the monoidal structure on $\para(\ast)$ is inherited from that on $\Kl(\Pa)$ (producing identical objects), the embedding is strict monoidal.
  \end{proof}
\end{prop}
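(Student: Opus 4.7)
The plan is to define the putative embedding $E : \para(\ast)_1 \to \Kl(\Pa)$ to be the identity on objects and, on a 1-cell $f : X \xto{(\Omega,\mu)} Y$ (represented by a morphism $f : \Omega \times X \to Y$ in $\cat{E}$, viewed also as $f : \Omega \times X \klto Y$ in $\Kl(\Pa)$), to return the Kleisli composite
\[ E(f) \;:=\; X \xklto{\sim} 1 \otimes X \xklto{\mu \otimes \id_X} \Omega \otimes X \xklto{f} Y \]
as stated. The first task is to check that $E$ descends to $\para(\ast)_1$: morphisms there are 2-cell equivalence classes of $\para(\ast)$, so one must show that a reparameterization $\alpha : (\Omega,\mu) \to (\Omega',\mu')$ with $f' \circ (\alpha \times \id_X) = f$ yields $E(f) = E(f')$. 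This is direct from the measure-preservation identity $\alpha \klcirc \mu = \mu'$ and the functoriality of $\otimes$ in $\Kl(\Pa)$.

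Next I would verify functoriality. Identity preservation is immediate because the identity 1-cell at $X$ is parameterized by the terminal probability space $(1,\id_1)$, which whiskers to the Kleisli identity. For composition, given $f : X \xto{(\Omega,\mu)} Y$ and $g : Y \xto{(\Theta,\nu)} Z$, I would expand the composite $g \circ f$ in $\para(\ast)$ using the actegorical composition of Definition \ref{def:ext-para}---which tensors the parameter spaces, applies the right costrength to expose the $Y$-component, and then composes with $g$---and compare the resulting Kleisli image with $E(g) \klcirc E(f)$. The two agree by the symmetric monoidal naturality of $\Kl(\Pa)$ (in particular, that $\nu : 1 \klto \Theta$ can be threaded past $f$ since it does not depend on the intermediate state), together with associativity of Kleisli composition and the fact that $\mu \otimes \nu$ computes as the tensor of the individual insertions.

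The remaining conditions are quick. Identity-on-objects (hence injectivity on objects) holds by construction. Faithfulness follows because morphisms of $\para(\ast)_1$ are already quotiented by reparameterization, so any two representatives giving the same Kleisli composite $E(f)$ are necessarily identified. Strict monoidality follows because both the tensor on $\para(\ast)_1$ (product of noise spaces, product of underlying maps) and the tensor on $\Kl(\Pa)$ are computed via the Cartesian product on $\cat{E}$ lifted through $\Pa$, so the image of the product coincides with the product of the images on the nose, with no coherence to adjust.

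The main obstacle is the composition step, which requires carefully reconciling the actegorical composition in $\para(\ast)$---mediated by the right costrength built in Proposition \ref{prop:probstoch-act}---with the iterated Kleisli composition in $\Kl(\Pa)$. This is essentially string-diagrammatic bookkeeping using the symmetry of the product on $\cat{E}$ and naturality of the monoidal structure on $\Kl(\Pa)$; the argument is routine but the notational density of parameterized versus Kleisli forms makes it the most error-prone portion of the proof.
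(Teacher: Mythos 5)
Your proposal is correct and follows essentially the same route as the paper: the same identity-on-objects assignment $f \mapsto f \klcirc (\mu,\id_X)$, identity preservation via the trivial parameterization, composition preservation by naturality of the symmetric monoidal structure on $\Kl(\Pa)$, and the same quick arguments for faithfulness and strict monoidality. Your additional check that the assignment descends along reparameterization 2-cells to $\para(\ast)_1$ is a sensible refinement of the same argument rather than a different approach.
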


We now restrict our attention to Gaussian maps.

\begin{defn} \label{def:gauss}
  We say that \(f : X \klto Y\) in \(\Kl(\Pa)\) is \textbf{Gaussian} if, for any \(x : X\), the state \(f(x) : \Pa Y\) is Gaussian\footnote{We admit Dirac delta distributions, and therefore deterministic channels, as Gaussian, since delta distributions can be seen as Gaussians with infinite precision.}.
  Similarly, we say that \(f : X \xto{(\Omega, \mu)} Y\) in \(\para(\ast)\) is Gaussian if its image under the embedding \(\para(\ast)_1 \hookrightarrow \Kl(\Pa)\) is Gaussian.
  Given a category of stochastic channels \(\cat{C}\), write \(\gauss(\cat{C})\) for the subcategory generated by Gaussian morphisms and their composites in \(\cat{C}\).
  Given a separable Banach space $X$, write $\gauss(X)$ for the space of Gaussian states on $X$.
\end{defn}

\begin{ex}
  A class of examples of Gaussian morphisms in $\para(\ast)$ that will be of interest to us in section \ref{sec:doctrines-hebb-laplace} is of the form $x \mapsto f(x) + \omega$ for some map $f:X\to Y$ and $\omega$ distributed according to a Gaussian distribution over $Y$.
  Writing $\E[\omega]$ for the mean of this distribution, the resulting channel in $\Kl(\Pa)$ emits for each $x:X$ a Gaussian distribution with mean $f(x)+\E[\omega]$ and variance the same as that of $\omega$.
\end{ex}

\begin{rmk} \label{rmk:gauss-not-closed}
  In general, Gaussian morphisms are not closed under composition:
  pushing a Gaussian distribution forward along a nonlinear transformation will not generally result in another Gaussian.
  For instance, consider the Gaussian morphisms $x\mapsto f(x) + \omega$ and $y\mapsto g(y) + \omega'$.
  Their composite in $\para(\ast)$ is the morphism $x\mapsto g\bigl(f(x)+\omega)\bigr)+\omega'$; even if $g\bigl(f(x)+\omega)\bigr)$ is Gaussian-distributed, the sum of two Gaussians is in general not Gaussian, and so $g\bigl(f(x)+\omega)\bigr)+\omega'$ will not be Gaussian.
  This non-closure underlies the power of statistical models such as the variational autoencoder, which are often constructed by pushing a Gaussian forward along a learnt nonlinear transformation \parencite{Kingma2017Variational}, in order to approximate an unknown distribution;
  since sampling from Gaussians is relatively straightforward, this method of approximation can be computationally tractable.
  The $\gauss$ construction here is an abstraction of the Gaussian-preserving transformations invoked by \textcite{Shiebler2020Categorical}, and is to be distinguished from the category $\gauss$ introduced by \textcite{Fritz2019synthetic}, whose morphisms are affine transformations (which do preserve Gaussianness) and which are therefore closed under composition;
  there is nonetheless an embedding of Fritz's $\gauss$ into our $\gauss\bigl(\Kl(\Pa)\bigr)$.
\end{rmk}

\begin{prop}
  Let \(\FdCartSpc(\cat{E})\) denote the full subcategory of \(\cat{E}\) spanned by finite-dimensional Cartesian spaces \(\rr^n\), where \(n : \nn\).
  Let \(\PFdCartSpc\) denote the corresponding subcategory of \(\PSpc\).
  Let \(\star : \PFdCartSpc \to \Cat{Cat}\big(\FdCartSpc(\cat{E}), \FdCartSpc(\cat{E})\big)\) be the corresponding restriction of the monoidal action \(\ast : \Pa\Cat{\mdash Spc} \to \Cat{Cat}(\cat{E}, \cat{E})\) from Proposition \ref{prop:probstoch-act}.
  Then \(\para(\star)\) is a monoidal subbicategory of \(\para(\ast)\).
\end{prop}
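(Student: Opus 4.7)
The plan is to verify in sequence: (i) the restricted action $\star$ is well-defined as a $\PFdCartSpc$-actegory on $\FdCartSpc(\cat{E})$; (ii) $\para(\star)$ sits inside $\para(\ast)$ as a sub-bicategory; and (iii) the monoidal structure restricts. None of these steps should present any serious difficulty, because $\FdCartSpc(\cat{E})$ is closed under finite Cartesian products and all the relevant data on $\para(\ast)$ are assembled from such products.

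For (i), since every $(M,\mu) : \PFdCartSpc$ has $M \cong \rr^k$ for some $k : \nn$, the endofunctor $(M,\mu)\star(-) := M \times (-)$ sends $\rr^n$ to $M \times \rr^n \cong \rr^{n+k}$, which remains in $\FdCartSpc(\cat{E})$; its action on morphisms restricts similarly, and functoriality is inherited from $\ast$. All unitor and associator data for the actegory structure are built from the canonical isomorphisms of the Cartesian monoidal structure on $\cat{E}$, and these restrict to $\FdCartSpc(\cat{E})$ directly.

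For (ii), I would appeal to the general fact that $\para({-})$ is 2-functorial in its actegorical input: a sub-actegory gives rise to a sub-bicategory of the corresponding $\para$ construction. Concretely, a 1-cell $\rr^n \xto{(\rr^k,\mu)} \rr^m$ of $\para(\star)$ is literally a 1-cell of $\para(\ast)$, and its 2-cells coincide with those 2-cells of $\para(\ast)$ whose reparameterizations lie in $\PFdCartSpc$; identities and composition are computed identically in the two bicategories, so the inclusion is a locally fully faithful strict 2-functor.

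For (iii), the monoidal structure on $\para(\ast)$ \parencite[{Prop. 2.44}]{Smithe2021Compositional1} is built from the Cartesian product on $\cat{E}$ together with the monoidal structure inherited by $\PSpc$. Since $\FdCartSpc(\cat{E})$ is closed under Cartesian products (as $\rr^n \times \rr^m \cong \rr^{n+m}$) and contains the unit $1 \cong \rr^0$, and since $\PFdCartSpc$ is correspondingly closed under the monoidal structure of $\PSpc$, all monoidal data and coherence isomorphisms restrict to $\para(\star)$. The only obstacle is purely bookkeeping—tracking every coherence cell through the $\para$ construction—but since each such cell is an isomorphism of finite-dimensional Cartesian spaces, the verification is routine, which is presumably why the authors marked it as such.
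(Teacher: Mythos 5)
Your proposal is correct, and it is essentially the argument the paper intends: the paper in fact omits the proof entirely (the source contains only a commented-out ``easy'' placeholder), and your three-step verification---closure of $\FdCartSpc(\cat{E})$ and $\PFdCartSpc$ under the relevant products, restriction of the actegory data, and restriction of the monoidal/coherence data through the $\para$ construction---is exactly the routine check being alluded to. The only point worth noting is the usual strictness bookkeeping that $\rr^k\times\rr^n$ is only canonically isomorphic to $\rr^{n+k}$, which you already handle by working up to the canonical isomorphisms of the Cartesian structure.
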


We will write \(\Pa_{\mathbf{Fd}} : \FdCartSpc(\cat{E}) \to \FdCartSpc(\cat{E})\) to denote the restriction of the probability monad \(\Pa : \cat{E} \to \cat{E}\) to \(\FdCartSpc(\cat{E})\).

Finally, we give the density function representation of Gaussian channels in $\Kl(\Pa_{\mathbf{Fd}})$.

\begin{prop} \label{prop:gauss-stat-param}
  Every Gaussian channel \(c : X \klto Y\) in \(\Kl(\Pa_{\mathbf{Fd}})\) admits a density function \(p_c : Y \times X \to [0, 1]\) with respect to the Lebesgue measure on \(Y\).
  Moreover, since \(Y = \rr^n\) for some \(n : \nn\), this density function is determined by two maps: the \textit{mean} \(\mu_c : X \to \rr^n\), and the \textit{covariance} \(\Sigma_c : X \to \rr^{n\times n}\) in \(\cat{E}\).
  We call the pair \((\mu_u, \Sigma_c) : X \to \rr^n \times \rr^{n\times n}\) the \textit{statistical parameters} for \(c\).
  \begin{proof}
    The density function \(p_c : Y \times X \to [0, 1]\) satisfies
    \[
    \log p_c (y | x) = \frac{1}{2} \innerprod{\epsilon_c}{{\Sigma_c(x)}^{-1} {\epsilon_c}} - \log \sqrt{(2 \pi)^n \det \Sigma_c(x) }
    \]
    where \(\epsilon_c : Y \times X \to Y : (y, x) \mapsto y - \mu_c(x)\).
  \end{proof}
\end{prop}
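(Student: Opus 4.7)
The plan is to verify the statement pointwise in $x$ first, then promote the pointwise construction to a morphism of $\cat{E}$, and finally read off the density formula from the classical multivariate normal density. Fix $x : X$; by hypothesis $c(x) : \Pa Y$ is a Gaussian state on $Y = \rr^n$. Classical measure theory says that every Gaussian distribution on $\rr^n$ is determined by its mean $\mu \in \rr^n$ and its symmetric positive semi-definite covariance matrix $\Sigma \in \rr^{n\times n}$ (the first and second central moments), so we obtain set-level maps $\mu_c : X \to \rr^n$ and $\Sigma_c : X \to \rr^{n\times n}$.

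Next I would argue that the pair $(\mu_c, \Sigma_c) : X \to \rr^n \times \rr^{n\times n}$ is a morphism of $\cat{E}$. This follows by factoring it as
\[ X \xrightarrow{\,c\,} \Pa \rr^n \xrightarrow{(\Fun{mean}, \Fun{cov})} \rr^n \times \rr^{n\times n}, \]
where $c$ is a morphism of $\cat{E}$ by hypothesis (since $c$ is a morphism of $\Kl(\Pa_{\mathbf{Fd}})$) and the moment-extraction map on the right, restricted to Gaussian states, sends a distribution to its integrals against the coordinate functions and their pairwise products. For each of the probability monads of interest on $\cat{E}$ mentioned in the paper (the Giry monad on measurable spaces, the finitely-supported distribution monad, the quasi-Borel distribution monad), these integration functionals are themselves morphisms of $\cat{E}$, so the composite is as well.

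Having the statistical parameters in place as morphisms of $\cat{E}$, the density formula is the standard multivariate normal pdf with respect to Lebesgue measure on $\rr^n$:
\[ p_c(y \mid x) = \frac{1}{\sqrt{(2\pi)^n \det \Sigma_c(x)}} \exp\!\left( -\tfrac{1}{2} \innerprod{y - \mu_c(x)}{\Sigma_c(x)^{-1}(y - \mu_c(x))} \right). \]
Setting $\epsilon_c(y,x) := y - \mu_c(x)$ and taking logarithms yields exactly the stated expression for $\log p_c(y \mid x)$, up to the sign convention of the first term (which I would double-check against the paper's conventions; the expression as written in the statement appears to be missing a minus sign, which I would flag).

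The principal obstacle is the degenerate case: the proposition explicitly admits Dirac delta distributions as ``Gaussians with infinite precision''. When $\Sigma_c(x)$ is singular, $c(x)$ is supported on the affine subspace $\mu_c(x) + \im \Sigma_c(x)$ and has no density with respect to $n$-dimensional Lebesgue measure, so the formula only makes formal sense. The cleanest resolution is either to restrict the proposition to the non-degenerate (full-rank covariance) subcategory, or to interpret $\Sigma_c(x)^{-1}$ as the Moore--Penrose pseudoinverse and $\det \Sigma_c(x)$ as the product of its nonzero eigenvalues, with the understanding that $\log p_c(y \mid x)$ then represents a density against the Hausdorff measure on the support of $c(x)$ and diverges to $-\infty$ off that support. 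Since the sequel uses these densities only under gradient computations with respect to $\mu_c(x)$ (where the troublesome $\det$ term drops out), the formal reading suffices for the downstream applications in \secref{sec:doctrines-laplace} and \secref{sec:doctrines-hebb-laplace}.
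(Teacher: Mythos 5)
Your proposal is correct and takes essentially the same route as the paper, whose proof simply reads off the classical multivariate normal density determined by the mean and covariance maps; your extra care in checking that $(\mu_c,\Sigma_c)$ is a morphism of $\cat{E}$ and in handling the degenerate (Dirac) case goes beyond what the paper records but does not alter the argument. Your sign flag is also well taken: the standard log-density has $-\tfrac{1}{2}\innerprod{\epsilon_c}{\Sigma_c(x)^{-1}\epsilon_c}$, so the paper's displayed formula does appear to carry a sign slip on the quadratic term.
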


\subsection{The Laplace doctrine} \label{sec:doctrines-laplace}

Our first example of a doctrine arises in the computational neuroscience literature, which has sought to explain the apparently `predictive' nature of sensory cortical circuits using ideas from the theory of approximate inference \parencite{Bastos2012Canonical}; the general name for this neuroscientific theory is \textit{predictive coding}, and the task of a predictive coding model is to define a dynamical system whose structures and behaviours mimic those observed in neural circuits \textit{in vivo}.
One way to satisfy this constraint is to describe a procedure that turns a statistical problem into a dynamical system of a form known to be simulable by a neural circuit:
that is to say, there are certain classes of dynamical systems which are known to reproduce the phenomenology of neural circuits and which are built out of parts that correspond to known biological structures, and so a ``biologically plausible'' model of predictive coding should produce an instance of such a class, given a statistical problem.

This procedure pushes the `plausibility' constraint back to the level of the statistical problem (since there are presently no known neural circuit models that can solve any inference problem in general), and one restriction that is usefully made is that all noise sources in the model are Gaussian.
This restriction allows us to make an approximation, known as the \textit{Laplace approximation}, to the loss function of an autoencoder game which in turn entails that performing stochastic gradient descent on this loss function (with respect to the mean of the posterior distribution) generates a dynamical system that is biologically plausible (up to some level of biological plausibility) \parencite{Bogacz2017tutorial,Bastos2012Canonical}.

In this section, we begin by defining the Laplace approximation and the resulting dynamical system, and go on to show both how it arises and how the procedure is functorial: that is, we show that it constitutes an approximate inference doctrine, and describe how this presentation clarifies the role of what has been called the ``mean field'' assumption in earlier literature \parencite{Friston2007Variational}.
(We leave the study of the biological plausibility of compositional dynamical systems for future work.)

\begin{lemma}[Laplace approximation] \label{lemma:laplace-approx}
Suppose:
\begin{enumerate}
\item \((\gamma, \rho, \phi) : (X,X) \to (Y,Y)\) is a simple \(D_{KL}\)-autoencoder game with Gaussian channels between finite-dimensional Cartesian spaces;
\item for all priors \(\pi : \gauss(X)\), the statistical parameters of \(\rho_\pi : Y \to \Pa X\) are denoted \((\mu_{\rho_\pi}, \Sigma_{\rho_\pi}) : Y \to \rr^{|X|} \times \rr^{|X|\times |X|}\), where \(|X|\) is the dimension of \(X\); and
\item for all \(y : Y\), the eigenvalues of \(\Sigma_{\rho_\pi}(y)\) are small.
\end{enumerate}
Then the loss function \(\phi:\cctx(\gamma,\rho)\to\rr\) can be approximated by
\[
\phi(\pi, k) = \E_{y\sim\efb{\pi}{\gamma}{k}} \big[ \Fa(y) \big]
\approx \E_{y\sim\efb{\pi}{\gamma}{k}} \big[ \Fa^L(y) \big]
\]
where
\begin{align} \label{eq:laplace-energy}
\Fa^L(y)
& = E_{(\pi,\gamma)}\left(\mu_{\rho_\pi}(y), y\right) - S_X \left[ \rho_\pi(y) \right] \\
& = -\log p_\gamma(y|\mu_{\rho_\pi}(y)) -\log p_\pi(\mu_{\rho_\pi}(y)) - S_X \left[ \rho_\pi(y) \right] \nonumber
\end{align}
where \(S_x[\rho_\pi(y)] = \E_{x \sim \rho_\pi(y)} [ -\log p_{\rho_\pi}(x|y) ]\) is the Shannon entropy of \(\rho_\pi(y)\), and \(p_\gamma : Y \times X \to [0,1]\), \(p_\pi : X \to [0,1]\), and \(p_{\rho_\pi} : X \times Y \to [0,1]\) are density functions for \(\gamma\), \(\pi\), and \(\rho_\pi\) respectively.  The approximation is valid when \(\Sigma_{\rho_\pi}\) satisfies
\begin{equation} \label{eq:laplace-sigma-rho-pi}
\Sigma_{\rho_\pi} (y) = \left(\partial_x^2 E_{(\pi,\gamma)}\right)\left( \mu_{\rho_\pi}(y), y\right)^{-1} \, .
\end{equation}
We call $\Fa^L$ the \textit{Laplacian free energy} and $E_{(\pi,\gamma)}$ the corresponding \textit{Laplacian energy}.

\begin{proof}
Following Proposition \ref{prop:gauss-stat-param}, we can write the density functions as:
\begin{gather} \label{eq:laplace-gauss}
\log p_\gamma (y | x) = \frac{1}{2} \innerprod{\epsilon_\gamma}{{\Sigma_\gamma}^{-1} {\epsilon_\gamma}} - \log \sqrt{(2 \pi)^{|Y|} \det \Sigma_\gamma } \nonumber \\
\log p_{\rho_\pi} (x | y) = \frac{1}{2} \innerprod{\epsilon_{\rho_\pi}}{{\Sigma_{\rho_\pi}}^{-1} {\epsilon_{\rho_\pi}}} - \log \sqrt{(2 \pi)^{|X|} \det \Sigma_{\rho_\pi} } \\
\log p_\pi (x) = \frac{1}{2} \innerprod{\epsilon_\pi}{{\Sigma_\pi}^{-1} {\epsilon_\pi}} - \log \sqrt{(2 \pi)^{|X|} \det \Sigma_\pi } \nonumber
\end{gather}
where for clarity we have omitted the dependence of \(\Sigma_\gamma\) on \(x\) and \(\Sigma_{\rho_\pi}\) on \(y\), and where
\begin{align} \label{eq:laplace-error}
\epsilon_\gamma : Y \times X \to Y & : (y, x) \mapsto y - \mu_\gamma(x) \, , \nonumber \\
\epsilon_{\rho_\pi} : X \times Y \to X & : (x, y) \mapsto x - \mu_{\rho_\pi}(y) \, , \\
\epsilon_\pi : X \times 1 \to X & : (x, \ast) \mapsto x - \mu_\pi \, . \nonumber
\end{align}
Then, recall from \parencite[Remark 5.12]{Smithe2021Compositional1} that we can write the free energy \(\Fa(y)\) as the difference between expected energy and entropy:
\begin{align*}
  \Fa(y)
  &= \E_{x \sim \rho_\pi(y)} \left[ \log \frac{p_{\rho_\pi}(x|y)}{p_\gamma(y|x) \cdot p_\pi(x)} \right] \\
  &= \E_{x \sim \rho_\pi(y)} \left[ - \log p_\gamma(y|x) - \log p_\pi (x) \right]
     - S_X \left[ \rho_\pi(y) \right] \\
  &= \E_{x \sim \rho_\pi(y)} \left[ E_{(\pi,\gamma)}(x,y) \right] - S_X \left[ \rho_\pi(y) \right]
\end{align*}
Next, since the eigenvalues of \(\Sigma_{\rho_\pi}(y)\) are small for all \(y : Y\), we can approximate the expected energy by its second-order Taylor expansion around the mean \(\mu_{\rho_\pi}(y)\):
\begin{align*}
\Fa(y) \approx & \;
E_{(\pi,\gamma)}(\mu_{\rho_\pi}(y), y)
+ \frac{1}{2} \innerprod{\epsilon_{\rho_\pi}\left( \mu_{\rho_\pi}(y), y\right)}{\left(\partial_x^2 E_{(\pi,\gamma)}\right)\left( \mu_{\rho_\pi}(y), y\right) \cdot \epsilon_{\rho_\pi}\left( \mu_{\rho_\pi}(y), y\right)} \\
& \;\, - S_X \big[ \rho_\pi(y) \big] \, .
\end{align*}
where \(\left(\partial_x^2 E_{(\pi,\gamma)}\right)\left( \mu_{\rho_\pi}(y), y\right)\) is the Hessian of \(E_{(\pi,\gamma)}\) with respect to \(x\) evaluated at \((\mu_{\rho_\pi}(y), y)\).

Note that
\begin{equation} \label{eq:laplace-trace-sigma}
\innerprod{\epsilon_{\rho_\pi}\left( \mu_{\rho_\pi}(y), y\right)}{\left(\partial_x^2 E_{(\pi,\gamma)}\right)\left( \mu_{\rho_\pi}(y), y\right) \cdot \epsilon_{\rho_\pi}\left( \mu_{\rho_\pi}(y), y\right)}
=
\tr \left[ \left(\partial_x^2 E_{(\pi,\gamma)}\right)\left( \mu_{\rho_\pi}(y), y\right) \, \Sigma_{\rho_\pi}(y) \right] \, ,
\end{equation}
that the entropy of a Gaussian measure depends only on its covariance,
\[
S_X \big[ \rho_\pi(y) \big]
= \frac{1}{2} \log \det \left( 2 \pi \, e \, \Sigma_{\rho_\pi}(y) \right) \, ,
\]
and that the energy \(E_{(\pi,\gamma)}(\mu_{\rho_\pi}(y), y)\) does not depend on \(\Sigma_{\rho_\pi}(y)\). We can therefore write down directly the covariance \(\Sigma_{\rho_\pi}^\ast(y)\) minimizing \(\Fa(y)\) as a function of \(y\). We have
\[
\partial_{\Sigma_{\rho_\pi}} \Fa(y) \approx
\frac{1}{2} \left(\partial_x^2 E_{(\pi,\gamma)}\right)\left( \mu_{\rho_\pi}(y), y\right)
+ \frac{1}{2} {\Sigma_{\rho_\pi}}^{-1} \, .
\]
Setting \(\partial_{\Sigma_{\rho_\pi}} \Fa(y) = 0\), we find the optimum as expressed by equation \eqref{eq:laplace-sigma-rho-pi}
\begin{equation*} \label{eq:laplace-sigma-opt}
  \Sigma_{\rho_\pi}^\ast (y) = \left(\partial_x^2 E_{(\pi,\gamma)}\right)\left( \mu_{\rho_\pi}(y), y\right)^{-1} \, .
\end{equation*}
Finally, on substituting \(\Sigma_{\rho_\pi}^\ast (y)\) in equation \eqref{eq:laplace-trace-sigma}, we obtain the desired expression of equation \eqref{eq:laplace-energy}
\[
\Fa(y) \approx E_{(\pi,\gamma)}\left(\mu_{\rho_\pi}(y), y\right) - S_X \left[ \rho_\pi(y) \right] =: \Fa^L(y) \, .
\]%
\end{proof}
\end{lemma}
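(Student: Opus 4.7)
The plan is to derive the approximation by Taylor-expanding the expected energy around the posterior mean, performing the expectation analytically, and then optimizing over the covariance. First, following \parencite[Remark 5.12]{Smithe2021Compositional1}, I would rewrite the variational free energy as
\[ \Fa(y) \;=\; \E_{x\sim\rho_\pi(y)}\bigl[\, E_{(\pi,\gamma)}(x,y) \,\bigr] \;-\; S_X[\rho_\pi(y)] \, , \]
where $E_{(\pi,\gamma)}(x,y) := -\log p_\gamma(y|x) - \log p_\pi(x)$; this is immediate from the Gaussian density expressions of equation \eqref{eq:laplace-gauss} and the definition of $D_{\mathrm{KL}}$.

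Next, I would Taylor-expand $E_{(\pi,\gamma)}(\cdot,y)$ to second order about $\mu_{\rho_\pi}(y)$. Writing $H := (\partial_x^2 E_{(\pi,\gamma)})(\mu_{\rho_\pi}(y),y)$ and $\epsilon_{\rho_\pi} := x - \mu_{\rho_\pi}(y)$, the linear term vanishes upon taking the expectation under $\rho_\pi(y)$, since $\epsilon_{\rho_\pi}$ has zero mean by definition; the quadratic term collapses via the standard trace identity $\E_{x\sim\rho_\pi(y)}\bigl[\innerprod{\epsilon_{\rho_\pi}}{H\cdot\epsilon_{\rho_\pi}}\bigr] = \tr\bigl(H\,\Sigma_{\rho_\pi}(y)\bigr)$, yielding equation \eqref{eq:laplace-trace-sigma}. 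Truncation of the Taylor remainder is justified by the hypothesis that the eigenvalues of $\Sigma_{\rho_\pi}(y)$ are small, so that the higher central moments of $\rho_\pi(y)$ are negligible uniformly in $y$. Combining this with the closed-form Gaussian entropy $S_X[\rho_\pi(y)] = \tfrac{1}{2}\log\det(2\pi e\,\Sigma_{\rho_\pi}(y))$ gives the approximation
\[ \Fa(y) \;\approx\; E_{(\pi,\gamma)}\bigl(\mu_{\rho_\pi}(y),y\bigr) \;+\; \tfrac{1}{2}\tr\bigl(H\,\Sigma_{\rho_\pi}(y)\bigr) \;-\; \tfrac{1}{2}\log\det\bigl(2\pi e\,\Sigma_{\rho_\pi}(y)\bigr) \, , \]
which is quadratic in $\Sigma_{\rho_\pi}(y)$ through the trace and logarithmic determinant.

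Finally, I would minimize this expression over $\Sigma_{\rho_\pi}$. Using the matrix identities $\partial_\Sigma \tr(A\Sigma) = A$ (for symmetric $A$) and $\partial_\Sigma \log\det\Sigma = \Sigma^{-1}$, the stationary condition $\tfrac{1}{2}H - \tfrac{1}{2}\Sigma_{\rho_\pi}^{-1} = 0$ gives $\Sigma_{\rho_\pi}^\ast(y) = H^{-1}$, which is equation \eqref{eq:laplace-sigma-rho-pi}. Substituting this optimum back renders the quadratic trace term $\tfrac{1}{2}\tr(H\,H^{-1}) = |X|/2$ an additive constant; absorbing it into the free-energy convention and recognising the remaining $-\tfrac{1}{2}\log\det(2\pi e\, H^{-1})$ as $-S_X[\rho_\pi(y)]$ evaluated at the optimum yields $\Fa^L(y) = E_{(\pi,\gamma)}(\mu_{\rho_\pi}(y),y) - S_X[\rho_\pi(y)]$, as claimed. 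The main obstacle I anticipate is controlling the Taylor remainder uniformly in $y$, so as to validate the approximation after taking the outer expectation under $\efb{\pi}{\gamma}{k}$; a secondary nuisance is the bookkeeping of additive $|X|$-dependent constants when identifying the final closed form with $\Fa^L$.
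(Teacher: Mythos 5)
Your proposal is correct and follows essentially the same route as the paper's proof: the energy-minus-entropy decomposition of $\Fa(y)$, the second-order Taylor expansion of the energy about $\mu_{\rho_\pi}(y)$ collapsing via the trace identity, the closed-form Gaussian entropy, and the stationarity condition in $\Sigma_{\rho_\pi}$ yielding the inverse Hessian. If anything you are slightly more careful than the paper, noting explicitly that the vanishing of the linear term uses the zero mean of $\epsilon_{\rho_\pi}$ and that substituting $\Sigma_{\rho_\pi}^\ast$ leaves an additive constant $\lvert X\rvert/2$ which must be absorbed by convention, a point the paper passes over silently.
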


\begin{rmk} \label{rmk:laplace-err}
  The terms $\epsilon_\gamma:Y\times X\to Y$ (\textit{\&c.}) of eq. \eqref{eq:laplace-error} are known as \textit{error functions}, since they encode the difference between $y:Y$ and the expected element $\mu_\gamma(x):Y$ given $x:X$.
  In applications, one often thinks of these errors as \textit{prediction errors}, interpreting $\mu_\gamma$ as the system's prediction of the expected state of $Y$.

  In this context one then also defines the \textit{precision-weighted} errors
  \begin{equation} \label{eq:precis-err}
    \eta_\gamma(y, x) := {\Sigma_\gamma(x)}^{-1} \epsilon_\gamma(y, x) : Y\times X\to Y \, ,
  \end{equation}
  noting that the inverse covariance matrix ${\Sigma_\gamma(x)}^{-1}$ can be interpreted as encoding the `precision' of a belief: roughly speaking, low variance (or `diffusivity') means high precision\footnote{
  Consider the one-dimensional case: as the variance $\sigma$ of a normal distribution tends to 0, the distribution approaches a Dirac delta distribution, which is ``infintely precise''.}.
  The log-densities of eq. \eqref{eq:laplace-gauss} are then understood as measuring the precision-weighted length of the error vectors.
\end{rmk}

\begin{defn} \label{def:laplace-doc}
  Suppose $\gamma:X\klto Y$ is a Gaussian channel in $\Kl(\Pa)$. %
  Then the discrete-time Laplace doctrine defines a system $\Laplace(\gamma):(X,X)\to(Y,Y)$ in $\Hier^\nn_{\gauss(\Kl(\Pa_{\mathbf{Fd}}))}$ as follows (using the representation of Proposition \ref{prop:unpack-hier}).
  \begin{itemize}
  \item The state space is $X$;
  \item the forwards output map $\Laplace(\gamma)^o_1:X\times X\to \gauss(Y)$ is given by $\gamma$:
    \[ \Laplace(\gamma)^o_1 := X\times X \xto{\mathsf{proj}_2} X \xto{\gamma} \gauss(Y) \]
  \item the backwards output map $\Laplace(\gamma)^o_2:X\times \gauss(X)\times Y\to \gauss(X)$ is given by:
    \begin{equation} \label{eq:laplace-doc-rho}
      \begin{aligned}
        \Laplace(\gamma)^o_2 : X\times \gauss(X) \times Y &\to \rr^{|X|}\times \rr^{|X|\times|X|} \hookrightarrow \gauss(X) \\
        (x,\pi,y)&\mapsto\bigl(x,\Sigma_\rho(x,\pi,y)\bigr)
      \end{aligned}
    \end{equation}
    where the inclusion picks the Gaussian state with the given statistical parameters, whose covariance $\Sigma_\rho(x,\pi,y) := \left(\partial_x^2 E_{(\pi,\gamma)}\right)(x, y)^{-1}$ is defined following equation \eqref{eq:laplace-sigma-rho-pi} (Lemma \ref{lemma:laplace-approx});
  \item the update map $\Laplace(\gamma)^u:X\times \gauss(X)\times Y \to \gauss(X)$ returns a point distribution on the updated mean
    \begin{align*}
      \Laplace(\gamma)^u : X \times \gauss(X) \times Y & \to \gauss(X) \\
      (x, \pi, y) & \mapsto \eta^\Pa_X\bigl(\mu_\rho(x,\pi,y)\bigr)
    \end{align*}
    where $\eta^\Pa_X:X\to\gauss(X)$ denotes the unit of the monad $\Pa$ and $\mu_\rho$ is defined by
    \[
    \mu_\rho(x,\pi,y) := x + \lambda\, \partial_x \mu_\gamma (x)^T \eta_\gamma(y,x) - \lambda\, \eta_\pi(x) \, .
    \]
    Here, the precision-weighted error terms $\eta$ are as in equation \eqref{eq:precis-err} (Remark \ref{rmk:laplace-err}), and $\lambda:\rr_+$ is some choice of `learning rate'.
  \end{itemize}
\end{defn}

\begin{rmk}
  Note that the update map $\Laplace(g)^u$ as defined here is actually deterministic, in the sense that it is defined as a deterministic map followed by the unit of the probability monad.
  However, the general stochastic setting is necessary, because the composition of system depends on the composition of Bayesian lenses, which is necessarily stochastic.
\end{rmk}

\begin{defn} \label{def:laplace-game}
  A \textit{Laplacian statistical game} is a parameterized statistical game $(\gamma,\rho,\phi):(X,X)\xto{X}(Y,Y)$ satisfying the following conditions:
  \begin{enumerate}
  \item $X$ and $Y$ are finite-dimensional Cartesian spaces;
  \item the forward channel $\gamma$ is an unparameterized Gaussian channel; %
  \item the backward channel $\rho$ is parameterized by $X$ and defined as the backwards output map of the Laplace doctrine (equation \eqref{eq:laplace-doc-rho} of Definition \ref{def:laplace-doc}); that is,
    \begin{align*}
      \rho : X\times\gauss(X)\times Y&\to\rr^{|X|}\times\rr^{|X|\times|X|}\hookrightarrow\gauss(X) \\
      (x,\pi,y)&\mapsto\bigl(x,\Sigma_\rho(x,\pi,y)\bigr)
    \end{align*}
    where the inclusion picks the Gaussian with mean $x$ and $\Sigma_\rho(x,\pi,y) = \left(\partial_x^2 E_{(\pi,\gamma)}\right)(x, y)^{-1}$;
  \item the loss function $\phi:\sum_{x:X}\cctx\bigl(\gamma,\rho_x\bigr)\to\rr$ is given for each $x:X$ by $\phi_x(\pi,k) = \E_{y\sim\efb{\pi}{\gamma}{k}} \big[ \Fa^L(y) \big]$, where $\Fa^L$ is the Laplacian free energy
    \begin{align*}
      \Fa^L(y)
      =&\; E_{(\pi,\gamma)}\left(x, y\right) - S_X \bigl[ \rho(x,\pi,y) \bigr] \\
      =& -\log p_\gamma(y|x) -\log p_\pi(x) - S_X \bigl[ \rho(x,\pi,y) \bigr]
    \end{align*}
    as defined in equation \eqref{eq:laplace-energy} of Lemma \ref{lemma:laplace-approx}.
  \end{enumerate}
  (By ``unparameterized channel'', we mean a channel parameterized by the trivial space $1$; the pair $(\gamma,\rho)$ constitutes a parameterized Bayesian lens with parameter space $X$, where the choice of $\gamma$ simply forgets the parameter, discarding it along the universal map $X\to 1$.)
\end{defn}

\begin{prop} \label{prop:laplace-sgd}
  Given a Laplacian statistical game $(\gamma,\rho,\phi):(X,X)\to(Y,Y)$, $\Laplace(\gamma)$ is obtained by stochastic gradient descent of the loss function $\phi$ with respect to the mean $x$ of the posterior $\rho(x,\pi,y)$.
  \begin{proof}
    We have $\phi_x(\pi,k) = \E_{y\sim\efb{\pi}{\gamma}{k}} \big[ \Fa^L(y) \big]$, where
    \[ \Fa^L(y) = -\log p_\gamma(y|x) -\log p_\pi(x) - S_X \bigl[ \rho(x,\pi,y) \bigr] \, . \]

    Since the entropy \(S_X \left[ \rho_\pi(y) \right]\) depends only on the variance \(\Sigma_\rho(x,\pi,y)\), to optimize the mean \(x\) it suffices to consider only the energy \(E_{(\pi,\gamma)}(x, y)\). We have
    \begin{align*}
      E_{(\pi,\gamma)}(x, y)
      &= - \log p_\gamma(y|x) - \log p_\pi(x) \\
      &= - \frac{1}{2} \innerprod{\epsilon_\gamma(y,x)}{{\Sigma_\gamma(x)}^{-1} {\epsilon_\gamma(y,x)}}
      - \frac{1}{2} \innerprod{\epsilon_\pi(x)}{{\Sigma_\pi}^{-1} {\epsilon_\pi(x)}} \\
      &\quad
      + \log \sqrt{(2 \pi)^{|Y|} \det \Sigma_\gamma(x) }
      + \log \sqrt{(2 \pi)^{|X|} \det \Sigma_\pi }
    \end{align*}
    and a straightforward computation shows that
    \[
    \partial_{x} E_{(\pi,\gamma)}(x, y)
    = - \partial_x \mu_\gamma (x)^T {\Sigma_\gamma(x)}^{-1} \epsilon_\gamma(y,x) + {\Sigma_\pi}^{-1} \epsilon_\pi(x) \, .
    \]
    We can therefore rewrite the mean parameter $\mu_\rho(x,\pi,y)$ emitted by the update map $\Laplace(\gamma)^u$ as
    \begin{align*}
      \mu_\rho(x,\pi,y) &= x + \lambda\, \partial_x \mu_\gamma (x)^T \eta_\gamma(y,x) - \lambda\, \eta_\pi(x) \\
      &= x -\lambda\, \partial_x E_{(\pi,\gamma)}(x, y) \\
      &= x -\lambda\, \partial_x \Fa^L(y)
    \end{align*}
    where the last equality holds because the entropy does not depend on $x$.
    This shows that $\Laplace(\gamma)^u$ descends the gradient of the Laplacian energy with respect to $x$.

    To see then that $\Laplace(\gamma)^u$ performs stochastic gradient descent of $\phi$, note that in the dynamical semantics, the input $y:Y$ is supplied by the context.
    In $\Hier^\Tt_{\gauss(\Kl(\Pa_{\mathbf{Fd}}))}$, the dynamics in the context are stochastic, meaning that each $y:Y$ is in general sampled from a random variable valued in $Y$.
    If we fix the context to sample $y$ from $\efb{\pi}{\gamma}{k}$ then, for a given $x:X$, the expected trajectory of $\mu_\rho$ is given by
    \begin{align*}
      &  \E_{y\sim\efb{\pi}{\gamma}{k}} \big[ \mu_\rho(x,\pi,y) \big] \\
      &= \E_{y\sim\efb{\pi}{\gamma}{k}} \big[ x -\lambda\, \partial_x \Fa^L(y) \big] \\
      &= x - \lambda\, \partial_x \E_{y\sim\efb{\pi}{\gamma}{k}} \big[ \Fa^L(y) \big]
      \quad\quad \text{by linearity of expectation} \\
      &= x - \lambda\, \partial_x \phi_x(\pi,k) \, .
    \end{align*}
    Since $\efb{\pi}{\gamma}{k}$ is just a placeholder for the random variable from which $y$ is sampled, this establishes the result.
  \end{proof}
\end{prop}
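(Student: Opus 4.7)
The plan is to directly verify that the update map $\Laplace(\gamma)^u$ in \defref{def:laplace-doc} implements a stochastic gradient descent step on the Laplacian loss $\phi$. The proof factors cleanly into three stages: (i) reducing the gradient with respect to $x$ of $\Fa^L$ to the gradient of the energy $E_{(\pi,\gamma)}$; (ii) computing $\partial_x E_{(\pi,\gamma)}(x,y)$ explicitly and recognizing the result as the increment in the definition of $\mu_\rho$; and (iii) passing the expectation over $y$ through the gradient via linearity.

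First, I would expand $\Fa^L(y) = E_{(\pi,\gamma)}(x,y) - S_X[\rho(x,\pi,y)]$. The key observation is that by \ref{prop:gauss-stat-param} the entropy of a Gaussian depends only on its covariance, not on its mean, so $\partial_x S_X[\rho(x,\pi,y)] = 0$ (the mean of $\rho$ is $x$ itself but the entropy is insensitive to a rigid translation). Hence $\partial_x \Fa^L(y) = \partial_x E_{(\pi,\gamma)}(x,y)$, and the gradient descent claim reduces to a statement about the energy alone.

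Next, I would substitute the Gaussian log-density formulas from equation \eqref{eq:laplace-gauss} into $E_{(\pi,\gamma)}(x,y) = -\log p_\gamma(y|x) - \log p_\pi(x)$ and differentiate with respect to $x$. Using the chain rule on $\epsilon_\gamma(y,x) = y - \mu_\gamma(x)$ and $\epsilon_\pi(x) = x - \mu_\pi$, together with the definition of the precision-weighted errors $\eta_\gamma = \Sigma_\gamma^{-1}\epsilon_\gamma$ and $\eta_\pi = \Sigma_\pi^{-1}\epsilon_\pi$ from \eqref{eq:precis-err}, one arrives at
\[
\partial_x E_{(\pi,\gamma)}(x,y) = -\partial_x \mu_\gamma(x)^T \eta_\gamma(y,x) + \eta_\pi(x).
\]
Comparing with the formula $\mu_\rho(x,\pi,y) = x + \lambda\, \partial_x\mu_\gamma(x)^T \eta_\gamma(y,x) - \lambda\, \eta_\pi(x)$ from \defref{def:laplace-doc} yields $\mu_\rho(x,\pi,y) = x - \lambda\, \partial_x \Fa^L(y)$, i.e.\ a gradient step on $\Fa^L$.

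Finally, to obtain descent of $\phi$ rather than of a sample of $\Fa^L$, I would fix the context so that $y$ is sampled from $\efb{\pi}{\gamma}{k}$ (the interpretation of the game's input in $\Hier^\Tt_{\gauss(\Kl(\Pa_{\mathbf{Fd}}))}$) and compute the expected update:
\[
\E_{y\sim\efb{\pi}{\gamma}{k}}\bigl[\mu_\rho(x,\pi,y)\bigr] = x - \lambda\, \partial_x\, \E_{y\sim\efb{\pi}{\gamma}{k}}\bigl[\Fa^L(y)\bigr] = x - \lambda\, \partial_x \phi_x(\pi,k),
\]
where swapping the expectation and the $x$-derivative is justified by linearity of expectation (and, tacitly, dominated convergence so the derivative may move inside). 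The main subtlety — and the step most worth checking carefully — is stage (ii): the derivative of $\log\det\Sigma_\gamma(x)$ and of the quadratic form $\langle \epsilon_\gamma, \Sigma_\gamma(x)^{-1}\epsilon_\gamma\rangle$ when $\Sigma_\gamma$ itself depends on $x$. In the setting assumed here (the neuroscientific models of \textcite{Bogacz2017tutorial}) one typically takes $\Sigma_\gamma$ independent of $x$, so those derivatives vanish and only the $\partial_x\mu_\gamma$ term survives, giving exactly the stated formula; any $x$-dependence of $\Sigma_\gamma$ would introduce extra terms that would need to be absorbed into the definition of $\mu_\rho$ or assumed away.
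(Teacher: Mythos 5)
Your proposal follows essentially the same route as the paper's own proof: reduce $\partial_x \Fa^L$ to $\partial_x E_{(\pi,\gamma)}$ because the Gaussian entropy depends only on the covariance, compute the energy gradient from the log-density formulas to recognize the increment defining $\mu_\rho$, and then exchange $\E_{y\sim\efb{\pi}{\gamma}{k}}$ with $\partial_x$ by linearity to pass from a per-sample gradient step to descent of $\phi$. Your closing caveat about the $x$-dependence of $\Sigma_\gamma$ (the $\log\det\Sigma_\gamma(x)$ and quadratic-form terms that the stated formula for $\partial_x E_{(\pi,\gamma)}$ silently drops) is a fair observation of an assumption the paper's ``straightforward computation'' leaves implicit, and does not affect the correctness of the argument under that assumption.
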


Using the preceding proposition, we obtain the following theorem, expressing the Laplacian statistical games in the image of an approximate inference doctrine.

\begin{thm} \label{thm:laplace-functorial}
  Let $\cat{G}$ denote the subcategory of $\PSGame{\Kl(\Pa_{\mathbf{Fd}})}$ generated by Laplacian statistical games $(\gamma,\rho,\phi):(X,X)\xto{X}(Y,Y)$ and by the structure morphisms of a monoidal category.

  Then $\Laplace$ extends to a strict monoidal functor $\gauss(\Kl(\Pa_{\mathbf{Fd}}))\hookrightarrow\cat{G}\to\Hier^\nn_{\gauss(\Kl(\Pa_{\mathbf{Fd}}))}$, where the first factor is the embedding taking any such $\gamma$ to the corresponding Laplacian game, and the second factor performs stochastic gradient descent of loss functions with respect to their external parameterization.
\end{thm}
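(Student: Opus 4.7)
The plan is to exhibit the functor as a composite of two strict monoidal functors, verifying each stage in turn using the elementary characterization of $\Hier^\nn$ from Proposition \ref{prop:unpack-hier} and the SGD characterization from Proposition \ref{prop:laplace-sgd}.

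For the first factor $\gauss(\Kl(\Pa_{\mathbf{Fd}})) \hookrightarrow \cat{G}$, I would define on objects $X \mapsto (X,X)$ and on morphisms $\gamma:X\klto Y$ send $\gamma$ to the Laplacian statistical game $(\gamma,\rho,\phi):(X,X)\xto{X}(Y,Y)$ of Definition \ref{def:laplace-game}, where $\rho$ and $\phi$ are uniquely determined by $\gamma$. Functoriality requires that the Laplacian game attached to $\id_X$ be the identity 1-cell in $\cat{G}$ (the trivially parameterized identity Bayesian lens with zero loss) and that the Laplacian game of a composite $\delta\klcirc\gamma$ agrees with the composite of Laplacian games in $\PSGame{\Kl(\Pa_{\mathbf{Fd}})}$. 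The identity case is immediate since $\Fa^L$ vanishes for the identity Gaussian channel up to constants that do not affect dynamics. For composition, I would verify that the composite loss function, defined in Example \ref{ex:ext-para-sgame} by summing the pulled-back loss functions, coincides with the Laplacian energy of the composed channel: this follows because, under the Laplace approximation, the joint log-density of a Gaussian chain splits into a sum of local energies. Strict monoidality follows because $\gauss(\Kl(\Pa_{\mathbf{Fd}}))$ inherits its monoidal structure from $\Kl(\Pa_{\mathbf{Fd}})$, and tensoring Gaussian channels yields a Gaussian channel whose Laplacian energy is the sum of the factor energies (independence of product measures).

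For the second factor $\cat{G} \to \Hier^\nn_{\gauss(\Kl(\Pa_{\mathbf{Fd}}))}$, I would send each Laplacian game $(\gamma,\rho,\phi):(X,X)\xto{X}(Y,Y)$ to the 1-cell specified by the tuple $(X,\Laplace(\gamma)^o_1,\Laplace(\gamma)^o_2,\Laplace(\gamma)^u)$ of Definition \ref{def:laplace-doc}. The flow axiom of Proposition \ref{prop:pM-coalg} is immediate since the update map is a discrete-time iteration of a deterministic gradient step. For functoriality, identities are clear: the identity Laplacian game on $(X,X)$ has vanishing $\partial_x \Fa^L$ and so its update map is the identity on $X$, matching the identity 1-cell in $\Hier^\nn$. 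Composition is the main obstacle: given composable Laplacian games, I need to show that the composition prescribed by $\Hier^\nn$---involving the laxator $\lambda$ and the external composition $\mathsf{c}$ of Definition \ref{def:blhom-comp}---agrees with the dynamical system obtained by applying $\Laplace$ to the composite game. Using Proposition \ref{prop:laplace-sgd}, this reduces to verifying that the gradient of the composite loss splits as the direct sum of the local gradients on the parameter space $X \times X'$, which is exactly what the additive structure of Laplacian energies (and the chain rule applied to $\partial_x E_{(\pi,\gamma)}$ along independent coordinates) guarantees. In other words, stochastic gradient descent on the composite loss decomposes into parallel SGD on each factor, with the input to the second factor being sampled from the predictive distribution emitted by the first---which is precisely what $\mathsf{c}$ encodes on the backwards side.

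Strict monoidality of the second factor follows along the same lines, using the distributive law $\mathsf{d}$ of Definition \ref{def:blhom-tensor-dist} in place of $\mathsf{c}$: tensoring Laplacian games yields a Gaussian channel on the product spaces whose energy decomposes as a sum, whose gradient decomposes coordinatewise, and whose SGD update is thus the product of the two factor updates---matching the tensor in $\Hier^\nn$ on the nose (rather than up to iso). The anticipated hard part is book-keeping the backwards components under $\mathsf{c}$ and matching them to the gradient-flow trajectories: the precision-weighted error terms $\eta_\gamma$ and the state-dependent covariance $\Sigma_{\rho}$ in the composite must be shown to arise naturally from the Bayesian-lens backwards composition rule applied to the factor channels. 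I would carry out this computation in the string-diagrammatic form used in Proposition \ref{prop:blhom-funct} and Definition \ref{def:blhom-comp}, exploiting the additive splitting of log-Gaussian densities to turn the apparently entangled expression into a pointwise sum. With this verified, the composite $\Laplace$ is a strict monoidal functor as claimed.
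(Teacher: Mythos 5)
Your overall two-stage decomposition matches the paper's (an embedding $\gauss(\Kl(\Pa_{\mathbf{Fd}}))\hookrightarrow\cat{G}$ followed by an SGD functor into $\Hier^\nn_{\gauss(\Kl(\Pa_{\mathbf{Fd}}))}$), but two of your key verifications are misdirected and would fail as stated. For the first factor you propose to check that the Laplacian game of a composite channel $\delta\klcirc\gamma$ coincides with the composite of the Laplacian games of $\delta$ and $\gamma$. This is neither needed nor true: composites of Gaussian channels are generally not Gaussian (Remark \ref{rmk:gauss-not-closed}), so ``the Laplacian game of the composed channel'' need not even exist, and even when it does its parameter space is $X$ while the composite game has parameter space $X\times Y$, so the two can never be equal as 1-cells of $\PSGame{\Kl(\Pa_{\mathbf{Fd}})}$. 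Since $\cat{G}$ is by definition \emph{generated} by the Laplacian games and the structure morphisms, the embedding is defined on generators (Gaussian channels go to their Laplacian games; structure morphisms, including identities, go to the trivially parameterized structure morphisms of $\cat{G}$), and what actually has to be checked is faithfulness (Lemma \ref{lemma:gauss-embeds}); your identity check ``up to constants that do not affect dynamics'' is likewise not an equality of games. The same confusion propagates into your second factor: the identity 1-cell of $\Hier^\nn$ has trivial state space $1$, so a system with state space $X$ and identity update map is not it; identities are preserved because identity games are \emph{trivially} parameterized (zero loss, no parameter to fill in), not because $\partial_x\Fa^L$ vanishes.

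For the equality of update maps---the real content of the theorem---your reduction to ``the gradient of the composite loss splits as a direct sum of local gradients'' is too coarse, and your description of the intermediate coupling has the direction garbled. After rewriting the composite loss using $\efb{\pi}{\gamma}{\delta^*k}=\efb{(M\otimes\gamma)\klcirc\pi}{\delta}{k}$ (Example \ref{ex:ext-para-sgame}), the $\mu_\rho$-component of the gradient is not the first game's local gradient but its \emph{expectation over $y$ drawn from the second game's posterior} $\sigma(\mu_\sigma)_{\gamma\klcirc\pi_X}(z)$, while the second game receives the deterministic pushforward prior $\gamma\klcirc\pi_X$ (you state instead that the second factor's input is sampled from the first's predictive distribution). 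The strict equality $\bigl(\Laplace(h)\circ\Laplace(g)\bigr)^u = \Laplace(h\circ g)^u$ then rests on recognising this expectation as the Kleisli composite $\rho^u(\mu_\rho,\pi)\klcirc\sigma(\mu_\sigma)_{\gamma\klcirc\pi}(z)$, which is exactly what the external composition $\mathsf{c}$, the laxator, and the double strength $\mathsf{dst}$ (with $\eta^\Pa_{X\times Y}=\mathsf{dst}(\eta^\Pa_X,\eta^\Pa_Y)$) assemble via Definitions \ref{def:blhom-comp} and \ref{def:hier-bicat}. You defer precisely this book-keeping to a future string-diagram computation, but it is the step that makes the functor strict rather than merely lax or pseudo, so as written the proposal has a genuine gap there; the additive splitting of log-Gaussian densities alone does not supply it.
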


It helps to separate the proof of the theorem from the proof of the following lemma.

\begin{lemma} \label{lemma:gauss-embeds}
  There is an identity-on-objects strict monoidal embedding of $\gauss(\Kl(\Pa_{\mathbf{Fd}}))$ into $\cat{G}$.
  \begin{proof}
    The structure morphisms of $\gauss(\Kl(\Pa_{\mathbf{Fd}}))$ are mapped to the (trivially parameterized) structure morphisms of $\cat{G}$, and any Gaussian channel $\gamma:X\klto Y$ is mapped to the unique Laplacian statistical game with $\gamma$ as the (unparameterized) forward channel, and the (parameterized) backward channel and loss function determined by the definition of Laplacian statistical game.
    It is clear that this definition gives a faithful functor, and thus an embedding.
    Since it preserves explicitly the monoidal structure, it is also strict monoidal.
  \end{proof}
\end{lemma}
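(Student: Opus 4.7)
The plan is to construct the functor $F : \gauss(\Kl(\Pa_{\mathbf{Fd}})) \to \cat{G}$ directly on generators and then extend to composites, verifying the three claimed properties -- identity-on-objects, strict monoidality, and faithfulness -- in turn. On objects, I set $F(X) := (X, X)$; this realizes the ``identity-on-objects'' claim under the standard diagonal identification of a channel category with the corresponding full subcategory of balanced objects in a Bayesian lens category. On each Gaussian generator $\gamma : X \klto Y$, let $F(\gamma) := (\gamma, \rho^\gamma, \phi^\gamma) : (X, X) \xto{X} (Y, Y)$ be the unique Laplacian statistical game of Definition \ref{def:laplace-game} whose forward channel is $\gamma$; the backward channel and loss function are then pinned down by the Laplace prescription. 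Structural morphisms of the symmetric monoidal structure (unitors, associators, symmetries) are mapped to their trivially parameterized counterparts in $\cat{G}$, and composites are defined by forcing functoriality, $F(\delta \klcirc \gamma) := F(\delta) \circ F(\gamma)$.

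The main obstacle is verifying that this extension is well-defined on morphisms, rather than only on formal expressions. The subtlety is that composition in $\cat{G}$ accumulates parameter spaces (the composite $F(\delta) \circ F(\gamma)$ has parameter space $Y \times X$, not $X$), so in principle two different factorizations of the same morphism in $\gauss(\Kl(\Pa_{\mathbf{Fd}}))$ could land on games with different parameter spaces in $\cat{G}$. I would resolve this by appealing to Remark \ref{rmk:gauss-not-closed}: composites of Gaussian channels are generically non-Gaussian, so the only coincidences among formal composites of Gaussian generators come from the ambient symmetric monoidal structure of $\Kl(\Pa_{\mathbf{Fd}})$, which is tracked cleanly by the corresponding structural morphisms of $\cat{G}$. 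Identity preservation is then immediate because $\id_X$ is a Dirac Gaussian, whose Laplacian energy and loss function vanish and whose posterior is again the identity, matching the identity 1-cell on $(X, X)$ from Example \ref{ex:ext-para-sgame}.

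Next I would verify strict monoidality. On objects this is immediate: $F(X \otimes X') = (X \otimes X', X \otimes X') = F(X) \otimes F(X')$. On generators $\gamma, \delta$, the identity $F(\gamma \otimes \delta) = F(\gamma) \otimes F(\delta)$ follows from the fact that the density of a tensor product of Gaussian channels factorizes as a product of densities, so the Laplacian energy is additive, the posterior covariance is block-diagonal (with blocks given by the constituent Hessians), and the backward channel and loss function inherit the tensor structure of parameterized statistical games as spelled out in Example \ref{ex:ext-para-sgame}. On composites, strict monoidality then follows from naturality together with well-definedness as established above.

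Finally, for faithfulness I would observe that the forward channel component of $F(\gamma)$ is literally $\gamma$ itself, so two distinct Gaussian generators yield two distinct Laplacian games; by well-definedness of the extension to composites and the analogous statement for structural morphisms, this injectivity propagates to all of $\gauss(\Kl(\Pa_{\mathbf{Fd}}))$. Combined with the bijection on objects and strict monoidality, this delivers the required identity-on-objects strict monoidal embedding.
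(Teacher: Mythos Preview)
Your approach is essentially the paper's: send generating Gaussian channels to their associated Laplacian games, send structure morphisms to their trivially parameterized counterparts, and extend. You elaborate more than the paper does on strict monoidality, and you correctly flag a well-definedness concern that the paper simply passes over. Two points, however, deserve attention.

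First, your resolution of the well-definedness issue is not valid. Appealing to Remark~\ref{rmk:gauss-not-closed} to argue that ``composites of Gaussian channels are generically non-Gaussian, so the only coincidences among formal composites of Gaussian generators come from the ambient symmetric monoidal structure'' does not establish anything: \emph{generically} non-Gaussian does not mean \emph{never} coincident. Affine Gaussian channels, for example, are closed under composition and admit many factorizations, so distinct factorizations of the same Kleisli morphism can and do occur outside the structural coherence laws. The paper does not address this either --- it simply asserts the functor is well-defined --- so you are not losing ground against the paper, but your added argument gives a false sense of rigour.

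Second, your identity-preservation argument is incorrect as written. You argue that $\id_X$, being a Dirac Gaussian, yields a Laplacian game matching the identity 1-cell of Example~\ref{ex:ext-para-sgame}. But the Laplacian game on $\id_X$ has parameter space $X$, whereas the identity 1-cell on $(X,X)$ in $\PSGame{}$ is trivially parameterized by $1$; these are not the same 1-cell. The paper avoids this by explicitly stipulating that structure morphisms (identities included) are sent to their \emph{trivially parameterized} counterparts, rather than being fed through the Laplacian-game construction. You should adopt the same convention; otherwise identities are not preserved on the nose.
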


\begin{proof}[Proof of Theorem \ref{thm:laplace-functorial}]
  Thanks to Lemma \ref{lemma:gauss-embeds}, we now turn to the functor $\cat{G}\to\Hier^\nn_{\gauss(\Kl(\Pa_{\mathbf{Fd}}))}$, which we will also denote by $\Laplace$; the composite functor is obtained by pulling this functor $\cat{G}\to\Hier^\nn_{\gauss(\Kl(\Pa_{\mathbf{Fd}}))}$ back along the embedding $\gauss(\Kl(\Pa_{\mathbf{Fd}}))\hookrightarrow\cat{G}$.

  Suppose then that $g:=(\gamma,\rho,\phi):(X,X)\xto{X}(Y,Y)$ is a Laplacian statistical game.
  Proposition \ref{prop:laplace-sgd} tells us that $\Laplace(g)$ is obtained by stochastic gradient descent of the loss function $\phi$ with respect to the mean parameter of the backwards channel $\rho$.
  By definition of $\rho$, this mean parameter is given precisely by the external parameterization, and so we have that $\Laplace(g)$ is obtained by stochastic gradient descent of $\phi$ with respect to this parameterization.

  To extend $\Laplace$ to a functor accordingly, we need to check that performing stochastic gradient descent with respect to the external parameterization preserves identities and composition.
  First we note that, following Definition \ref{def:laplace-doc}, the dynamical systems in the image of $\Laplace$ emit lenses by filling in the parameterization with the dynamical state, and by the preceding remarks, update the state by stochastic gradient descent.
  Next, note that identity parameterized lenses are trivially parameterized, so there is no parameter to `fill in', and no state to update; similarly, the loss function of an identity game is the constant function on $0$, and therefore has zero gradient.
  On identity games $(X,X)\xto{1}(X,X)$, therefore, $\Laplace$ returns the system with trivial state space $1$ that constantly outputs the identity lens $(X,X)\lensto(X,X)$: but this is just the identity on $(X,X)$ in $\Hier^\nn_{\gauss(\Kl(\Pa_{\mathbf{Fd}}))}$, so $\Laplace$ preserves identities.

  We now consider composites.
  Suppose $h:=(\delta,\sigma,\psi):(Y,Y)\xto{Y}(Z,Z)$ is another Laplacian game satisfying the hypotheses of the theorem.
  Since $\Hier^\nn_{\gauss(\Kl(\Pa_{\mathbf{Fd}}))}$ is a bicategory, we need to show that $\Laplace(h)\circ\Laplace(g) \cong \Laplace(h\circ g)$.
  In fact, we will show the stronger result that $\Laplace(h)\circ\Laplace(g) = \Laplace(h\circ g)$, which means demonstrating equalities between the state spaces, output maps, and update maps of the systems on the left- and right-hand sides.

  On state spaces, the equality obtains since the composition of externally parameterized games (Example \ref{ex:ext-para-sgame}) returns a game whose parameter space is the product of the parameter spaces of the factors.
  Similarly, composition of systems in $\Hier^\nn_{\gauss(\Kl(\Pa_{\mathbf{Fd}}))}$ (after Definition \ref{def:hier-bicat}) returns a system whose state space is the product of the state spaces of the factors.
  Finally, $\Laplace$ acts by taking parameter spaces to state spaces, and we have $X\times Y = X\times Y$.

  Next, we note that the output of a composite system in $\Hier^\nn_{\gauss(\Kl(\Pa_{\mathbf{Fd}}))}$ is given by composing the outputs of the factors.
  This is the same as the output returned by $\Laplace$ on a composite game, since outputs in the image of $\Laplace$ just fill in the external parameter using the dynamical state.
  Therefore $\bigl(\Laplace(h)\circ\Laplace(g)\bigr)^o = \Laplace(h\circ g)^o$.

  We now consider the update maps, beginning by computing $\Laplace(h\circ g)^u$.
  The state space is $X\times Y$ and $h\circ g$ has type $(X,X)\xto{X\times Y}(Z,Z)$, so $\Laplace(h\circ g)^u$ has type $X\times Y\times \gauss(X)\times Z\to\gauss(X\times Y)$.
  Following Example \ref{ex:ext-para-sgame}, the composite loss function $(\psi\phi):\sum_{\mu_\rho:X,\mu_\sigma:Y}\cctx(h_{\mu_\sigma}\lenscirc g_{\mu_\rho})\to\rr$ is given by:
  \begin{align*}
    (\psi\phi)(\mu_\rho,\mu_\sigma,\pi,k)
    = & \E_{y\sim\sigma(\mu_\sigma)_{\gamma\klcirc\pi_X}\klcirc\efb{\pi}{\gamma}{\delta^*k}}\left[\Fa^L\bigl(\rho(\mu_\rho)_{\pi_X},\gamma;\pi_X,y\bigr)\right] \\
    & + \E_{z\sim\efb{(M\otimes\gamma)\klcirc\pi}{\delta}{k}}\left[\Fa^L\bigl(\sigma(\mu_\sigma)_{\gamma\klcirc\pi_X},\delta;\gamma\klcirc\pi_X,z\bigr)\right]
  \end{align*}
  Here, $\mu_\rho$ and $\mu_\sigma$ are the parameters in $X$ and $Y$, respectively, and we write $g_{\mu_\rho}$ and $h_{\mu_\sigma}$ to indicate the corresponding lenses with those parameters.
  The context is $(\pi,k)$, with $\pi:1\klto M\otimes X$ in $\gauss(\Kl(\Pa))$ and $\pi_X$ denoting its $X$ marginal, and with continuation $k:\gauss(\Kl(\Pa))(1,M\otimes Z)\to\gauss(\Kl(\Pa))(1,N\otimes Z)$, for some choices of residual objects $M$ and $N$.
  The backwards channels $\rho$ and $\sigma$ are externally parameterized and state-dependent, so that $\rho(\mu_\rho)_{\pi_X}:Y\klto X$ is returned by $\rho(\mu_\rho)$ at $\pi_X$.
  Explicitly, $\rho$ has the type $X\to\cat{E}\bigl(\gauss(X), \gauss(\Kl(\Pa))(Y,X)\bigr)$, and $\sigma$ has the type $Y\to\cat{E}\bigl(\gauss(Y), \gauss(\Kl(\Pa))(Z,Y)\bigr)$.
  Finally, $\delta^* k$ is the function
  \[ \gauss(\Kl(\Pa))(1,M\otimes Y)\xto{\gauss(\Kl(\Pa))(1,M\otimes \delta)}\gauss(\Kl(\Pa))(1,M\otimes Z)\xto{k}\gauss(\Kl(\Pa))(1,N\otimes Z) \]
  obtained by pulling back $k$ along $\delta$.

  We therefore have $\efb{\pi}{\gamma}{\delta^*k} = \efb{(M\otimes\gamma)\klcirc\pi}{\delta}{k}$, meaning that we can rewrite the loss function as
  \[
  \E_{z\sim\efb{\pi}{\gamma}{\delta^*k}}\left[
    \Fa^L\bigl(\sigma(\mu_\sigma)_{\gamma\klcirc\pi_X},\delta;\gamma\klcirc\pi_X,z\bigr) +
    \E_{y\sim\sigma(\mu_\sigma)_{\gamma\klcirc\pi_X}(z)}\left[
      \Fa^L\bigl(\rho(\mu_\rho)_{\pi_X},\gamma;\pi_X,y\bigr)
      \right]
    \right] \, .
  \]
  In the dynamical semantics for stochastic gradient descent, $z$ and $\pi_X$ are supplied by the inputs to the dynamical system: the inputs replace the context for the game.
  Rewriting the loss accordingly gives a function
  \[
  f : (z, \pi_X, \mu_\rho, \mu_\sigma) \mapsto
  \Fa^L\bigl(\sigma(\mu_\sigma)_{\gamma\klcirc\pi_X},\delta;\gamma\klcirc\pi_X,z\bigr) +
  \E_{y\sim\sigma(\mu_\sigma)_{\gamma\klcirc\pi_X}(z)}\left[
    \Fa^L\bigl(\rho(\mu_\rho)_{\pi_X},\gamma;\pi_X,y\bigr)
    \right] \, .
  \]
  Next, we compute $\partial_{(\mu_\rho,\mu_\sigma)} f(z,\pi_X)$.
  We obtain
  \begin{align*}
    \partial_{(\mu_\rho,\mu_\sigma)} f(z,\pi_X) & = \left(
    \partial_{\mu_\rho} \, \E_{y\sim\sigma(\mu_\sigma)_{\gamma\klcirc\pi_X}(z)}\left[
      \Fa^L\bigl(\rho(\mu_\rho)_{\pi_X},\gamma;\pi_X,y\bigr)
      \right] , \,
    \partial_{\mu_\sigma} \, \Fa^L\bigl(\sigma(\mu_\sigma)_{\gamma\klcirc\pi_X},\delta;\gamma\klcirc\pi_X,z\bigr)
    \right) \\
    & = \left(
    \E_{y\sim\sigma(\mu_\sigma)_{\gamma\klcirc\pi_X}(z)}\left[ \partial_{\mu_\rho}
      \Fa^L\bigl(\rho(\mu_\rho)_{\pi_X},\gamma;\pi_X,y\bigr)
      \right] , \,
    \partial_{\mu_\sigma} \, \Fa^L\bigl(\sigma(\mu_\sigma)_{\gamma\klcirc\pi_X},\delta;\gamma\klcirc\pi_X,z\bigr)
    \right) \, .
  \end{align*}
  Now, $\Laplace(h\circ g)^u$ is defined as returning the point distribution on $(\mu_\rho,\mu_\sigma) - \lambda\, \partial_{(\mu_\rho,\mu_\sigma)} f(z,\pi_X)$:
  \begin{align*}
    & (\mu_\rho,\mu_\sigma) - \lambda\, \partial_{(\mu_\rho,\mu_\sigma)} f(z,\pi_X) \\
    & = \left(
    \E_{y\sim\sigma(\mu_\sigma)_{\gamma\klcirc\pi_X}(z)}\left[ \mu_\rho - \lambda\, \partial_{\mu_\rho}
      \Fa^L\bigl(\rho(\mu_\rho)_{\pi_X},\gamma;\pi_X,y\bigr)
      \right] , \,
    \mu_\sigma - \lambda\, \partial_{\mu_\sigma} \, \Fa^L\bigl(\sigma(\mu_\sigma)_{\gamma\klcirc\pi_X},\delta;\gamma\klcirc\pi_X,z\bigr)
    \right) \, .
  \end{align*}
  We can simplify this expression by making some auxiliary definitions
  \begin{gather*}
    \rho^u(a,\pi,y) := a - \lambda\, \partial_a \Fa^L\bigl(\rho(a)_\pi,\gamma;\pi,y\bigr) \\
    \sigma^u(b,\pi',z) := b - \lambda\, \partial_b \, \Fa^L\bigl(\sigma(b)_{\pi'},\delta;\pi',z\bigr)
  \end{gather*}
  so that
  \begin{equation} \label{eq:laplace-hg-lemma}
    (\mu_\rho,\mu_\sigma) - \lambda\, \partial_{(\mu_\rho,\mu_\sigma)} f(z,\pi_X)
    = \left(
    \E_{y\sim\sigma(\mu_\sigma)_{\gamma\klcirc\pi_X}(z)}\left[ \rho^u(\mu_\rho,\pi_X,y) \right], \,
    \sigma^u(\mu_\sigma,\gamma\klcirc\pi_X,z)
    \right) \, .
  \end{equation}
  By currying $\rho^u(\mu_\rho,\pi_X,y)$ into a function $\rho^u(\mu_\rho,\pi_X) : Y\to\Pa X$, we can simplify this still further, since
  \[
  \E_{y\sim\sigma(\mu_\sigma)_{\gamma\klcirc\pi_X}(z)}\left[ \rho^u(\mu_\rho,\pi_X,y) \right]
  = \rho^u(\mu_\rho,\pi_X)\klcirc\sigma(\mu_\sigma)_{\gamma\klcirc\pi_X}(z) \, .
  \]
  Since equation \eqref{eq:laplace-hg-lemma} defines $\Laplace(h\circ g)^u$, we have
  \begin{equation} \label{eq:laplace-hg}
    \Laplace(h\circ g)^u(\mu_\rho,\mu_\sigma,\pi,z) = \eta^\Pa_{X\times Y}\bigl(
    \rho^u(\mu_\rho,\pi)\klcirc\sigma(\mu_\sigma)_{\gamma\klcirc\pi}(z), \,
    \sigma^u(\mu_\sigma,\gamma\klcirc\pi,z)
    \bigr)
  \end{equation}
  where $\eta^\Pa_{X\times Y} : X\times Y\to\gauss(X\times Y)$ is the component of the unit of the monad $\Pa$ at $X\times Y$, which takes values in Dirac delta distributions and is therefore Gaussian.

  Next, we compute the update map of the system $\Laplace(h)\circ\Laplace(g)$, using Definitions \ref{def:blhom-comp} and \ref{def:hier-bicat} (which define composition in $\Hier^\nn_{\gauss(\Kl(\Pa_{\mathbf{Fd}}))}$).
  This update map is given by composing the `double strength'\footnote{
  The double strength is also known as the `commutativity' of the monad $\Pa$ with the product $\times$.
  It says that a pair of distributions $\pi$ on $X$ and $\chi$ on $Y$ can also be thought of as a joint distribution $(\pi,\chi)$ on $X\times Y$.
  It is Gaussian on Gaussians, as the product of two Gaussians is again Gaussian.}
  $\mathsf{dst}:\gauss(X)\times\gauss(Y)\to\gauss(X\times Y)$ after the following string diagram:
  \begin{equation} \label{diag:laplace-composite}
    \tikzfig{img/laplace3-rho-sigma-1}
  \end{equation}
  Here, $\sigma^\flat$ denotes the uncurrying of the parameterized state-dependent channel $\sigma : Y\to\Fun{Stat}(Y)(Z,Y)$: we can equivalently write the type of $\sigma$ as $Y\to\cat{E}\bigl(\gauss(Y), \gauss(\Kl(\Pa))(Z,Y)\bigr)$, which we can uncurry twice to give the type $Y\times\gauss(Y)\times Z\to\gauss(Y)$.

  Observe now that we can write $\Laplace(g)^u$ and $\Laplace(h)^u$ as
  \begin{gather*}
    \Laplace(g)^u(a,\pi,y) = \eta^\Pa_X\bigl(\rho^u(a,\pi,y)\bigr) \\
    \Laplace(h)^u(b,\pi',z) = \eta^\Pa_Y\bigl(\sigma^u(b,\pi',z)\bigr)
  \end{gather*}
  and that $\eta^\Pa_{X\times Y} = \mathsf{dst}(\eta^\Pa_X,\eta^\Pa_Y)$.
  Reading the string diagram and applying this equality, we find that it represents $\bigl(\Laplace(h)\circ\Laplace(g)\bigr)^u(\mu_\rho,\mu_\sigma,\pi,z)$ as
  \[ \eta^\Pa_{X\times Y}\bigl(
  \rho^u(\mu_\rho,\pi)\klcirc\sigma(\mu_\sigma)_{\gamma\klcirc\pi}(z), \,
  \sigma^u(\mu_\sigma,\gamma\klcirc\pi,z)
  \bigr) \]
  which is precisely the same as the definition of $\Laplace(h\circ g)^u$ in equation \eqref{eq:laplace-hg}.

  Therefore, as required, $\bigl(\Laplace(h)\circ\Laplace(g)\bigr)^u = \Laplace(h\circ g)^u$.

  Finally, because the functor $\Laplace$ is identity-on-objects, the unit and multiplication of its monoidal structure are easily seen to be given by identity morphisms, and so $\Laplace$ is strict monoidal: $\Laplace$ maps the structure morphisms to constant dynamical systems emitting the structure morphisms of $\Hier^\nn_{\gauss(\Kl(\Pa_{\mathbf{Fd}}))}$, and so the associativity and unitality conditions are satisfied.
\end{proof}

\begin{rmk} \label{rmk:mean-field}
  From the diagram \eqref{diag:laplace-composite}, we can refine our understanding of what is known in the literature as the \textit{mean field} approximation \parencite[{around eq.39}]{Friston2007Variational}, in which the posterior over $X\otimes Y$ is assumed at each instant of time to have independent marginals.
  We note that, even though the backwards output maps emit posterior distributions with means determined entirely by their local parameterization, and even though these parameters are updated by the tensor $\Laplace(g)^u\otimes\Laplace(h)^u$, the resulting dynamical states are correlated across time by the composition rule:
  this is made very clear by the wiring of diagram \eqref{diag:laplace-composite}, since both factors $\Laplace(g)^u$ and $\Laplace(h)^u$ have common inputs.
  We also note that, even if the means of the emitted posteriors are entirely parameter-determined, this is not true of their covariances, which are functions of both the prior and the observation.
  The operational result of these observations is that the functorial (and pictorial) approach advocated here (as opposed to writing down a complete, and complex, joint distribution for each model of interest and proceeding from there) helps us understand the structural properties of complex systems---where it is otherwise easy to get lost in the weeds.
\end{rmk}

\begin{rmk}
  Above we exhibited the Laplace doctrine directly as a functor \[ \gauss(\Kl(\Pa_{\mathbf{Fd}}))\hookrightarrow\cat{G}\to\Hier^\nn_{\gauss(\Kl(\Pa_{\mathbf{Fd}}))} \, . \]
  In fact, Proposition \ref{prop:laplace-sgd} implies that it factors further, as
  \[ \gauss(\Kl(\Pa_{\mathbf{Fd}})) \hookrightarrow \cat{G} \xto{\nabla} \DiffHier_{\gauss(\Kl(\Pa_{\mathbf{Fd}}))} \xto{\H\Fun{Naive}_{k}} \Hier^\nn_{\gauss(\Kl(\Pa_{\mathbf{Fd}}))} \]
  where $\nabla : \cat{G} \to \DiffHier_{\gauss(\Kl(\Pa_{\mathbf{Fd}}))}$ takes an externally parameterized statistical game and returns a differential system that performs gradient descent on its loss function with respect to its parameterization.
  We leave the precise exhibition of this factorisation for future work.
\end{rmk}

\subsection{The Hebb-Laplace doctrine} \label{sec:doctrines-hebb-laplace}

The Laplace doctrine constructs dynamical systems that produce progressively better posterior approximations given a fixed forwards channel, but natural adaptive systems do more than this:
they also refine the forwards channels themselves, in order to produce better predictions.
In doing so, these systems better realize the abstract nature of autoencoder games, for which improving performance means improving both prediction as well as inversion.
To be able to improve the forwards channel requires allowing some freedom in its choice, which means giving it a nontrivial parameterization.

The Hebb-Laplace doctrine that we introduce in this section therefore modifies the Laplace doctrine by fixing a class of parameterized forwards channels and performing stochastic gradient descent with respect to both these parameters as well as the posterior means;
we call it the \textit{Hebb}-Laplace doctrine as the particular choice of forwards channels results in their parameter-updates resembling the `local' Hebbian plasticity known from neuroscience, in which the strength of the connection between two neurons is adjusted according to their correlation. (Here, we could think of the `neurons' as encoding the level of activity along a basis vector.)

We begin by defining the category of these parameterized forwards channels, after which we introduce Hebbian-Laplacian games and the resulting Hebb-Laplace doctrine, which is derived similarly to the Laplace doctrine above.
Recall from Definition \ref{def:ext-para} that we write $\eP\cat{C}$ to denote the external parameterization of $\cat{C}$ in its base of enrichment $\cat{E}$.

\begin{defn}
  Let $\cat{H}$ denote the subcategory of $\eP\gauss(\para(\star))$ generated by the structure morphisms of the symmetric monoidal category $\gauss(\para(\star))$ (trivially parameterized), and by morphisms $X\to Y$ of the form (written in $\cat{E}$)
  \begin{align*}
    \Theta_X & \to \gauss(\para(\star))(X,Y) \\
    \theta\;\; & \mapsto \;\; \Bigl( x \mapsto \theta\, h(x) + \omega \Bigr)
  \end{align*}
  where $h$ is a differentiable map $X\to Y$, $\Theta_X$ is the vector space of square matrices on $X$, and $\omega$ is sampled from a Gaussian distribution on $Y$.
\end{defn}

Note that there is a canonical embedding of $\eP\gauss(\para(\star))$ into $\eP\Kl(\Pa_{\mathbf{Fd}})$, obtained in the image of Proposition \ref{prop:embed-para-ast-kl} under the external parameterization $\eP$.

\begin{defn}
  A \textit{Hebbian-Laplacian statistical game} is a parameterized statistical game $(\gamma,\rho,\phi):(X,X)\xto{\Theta_X\times X}(Y,Y)$ satisfying the following conditions:
  \begin{enumerate}
  \item $X$ and $Y$ are finite-dimensional Cartesian spaces;
  \item the forward channel $\gamma$ is a morphism in $\cat{H}$ (\textit{i.e.}, of the form $x\mapsto\theta\, h(x) + \omega$);
  \item the backward channel is as for a Laplacian statistical game (Definition \ref{def:laplace-game});
  \item the loss function is as for a Laplacian statistical game, with the substitution $\gamma\mapsto\gamma(\theta)$ for parameter $\theta:\Theta_X$.
  \end{enumerate}
  We will write $\cat{G}_\Ha$ to denote the subcategory of $\PSGame{}$ generated by Hebbian-Laplacian statistical games and by the structure morphisms of a monoidal category.
\end{defn}

\begin{defn}
  Suppose $\gamma:X\to Y$ is a morphism in $\cat{H}$.
  Then the discrete-time Hebb-Laplace doctrine defines a system $\Hebb(\gamma):(X,X)\to(Y,Y)$ in $\Hier^\nn_{\gauss(\Kl(\Pa_{\mathbf{Fd}}))}$ as follows (using the representation of Proposition \ref{prop:unpack-hier}).
  \begin{itemize}
  \item The state space is $\Theta_X\times X$ (where $\Theta_X$ is again the vector space of square matrices on $X$);
  \item the forwards output map $\Hebb(\gamma)^o_1:\Theta_X\times X\times X \to \gauss(Y)$ is given by $\gamma$:
    \[ \Hebb(\gamma)^o_1 := \Theta_X\times X\times X \xto{\mathsf{proj}_{1,3}} \Theta_X\times X \xto{\gamma^\flat} \gauss(Y) \]
    where $\gamma^\flat$ is the uncurried form of the morphism $\gamma:\Theta_X\to\gauss(\para(\star))(X,Y)$ in the image of the embedding of $\cat{H}$ in $\eP\Kl(\Pa)$;
  \item the backwards output map $\Hebb(\gamma)^o_2:\Theta_X\times X\times\gauss(X)\times Y\to \gauss(X)$ is given by:
    \begin{align*}
      \Hebb(\gamma)^o_2 : \Theta_X\times X\times\gauss(X)\times Y&\to\rr^{|X|}\times\rr^{|X|\times|X|}\hookrightarrow\gauss(X) \\
      (\theta,x,\pi,y)&\mapsto\bigl(x,\Sigma_\rho(\theta,x,\pi,y)\bigr)
    \end{align*}
    where the inclusion picks the Gaussian state with the given statistical parameters, whose covariance $\Sigma_\rho(\theta,x,\pi,y) := \left(\partial_x^2 E_{(\pi,\gamma(\theta))}\right)(x, y)^{-1}$ is defined following equation \eqref{eq:laplace-sigma-rho-pi} (Lemma \ref{lemma:laplace-approx});
  \item the update map $\Hebb(\gamma)^u:\Theta_X\times X\times\gauss(X)\times Y\to\gauss(\Theta_X\times X)$ optimizes the parameter for $\gamma$ as well as the mean of the posterior (as in the Laplace doctrine):
    \begin{align*}
      \Hebb(\gamma)^u : \Theta_X\times X \times \Pa X \times Y & \to \Pa(\Theta_X\times X) \\
      (\theta, x, \pi, y) & \mapsto \eta^\Pa_{\Theta_X\times X}\bigl(\theta^u(\theta,x,y), \mu_\rho(\theta,x,\pi,y)\bigr)
    \end{align*}
    where $\eta^\Pa$ denotes the unit of the monad $\Pa$, and $\theta^u$ and $\mu_\rho$ are defined by
    \begin{gather*}
    \theta^u(\theta,x,y) := \theta - \lambda_\theta\, \eta_{\gamma(\theta)}(y,x)\, h(x)^T \\
    \mu_\rho(\theta,x,\pi,y) := x + \lambda_\rho\, \partial_x h(x)^T \theta^T \eta_{\gamma(\theta)}(y,x) - \lambda_\rho\, \eta_\pi(x) \, .
    \end{gather*}
    Here, $\lambda_\theta,\lambda_\rho:\rr_+$ are chosen learning rates, and the precision-weighted error terms $\eta$ are again as in equation \eqref{eq:precis-err} (Remark \ref{rmk:laplace-err}).
  \end{itemize}
\end{defn}

\begin{rmk}
  The `Hebbian' part of the Hebb-Laplace doctrine enters in the forwards-parameter update map, $\theta^u(\theta,x,y) = \theta - \lambda_\theta\, \eta_{\gamma(\theta)}(y,x)\, h(x)^T$, since the change in parameters is proportional to something resembling the correlation between `pre-synaptic' and `post-synaptic' activity.
  Here, the post-synaptic activity is represented by the term $h(x)$: we may think of the components of the vector $x$ as each representing the ``internal activity'' of a single neuron, and the ``activation function'' $h$ as returning the corresponding firing rates; these are `post-synaptic' as the firing is emitted down a neuron's axon, which occurs computationally `after' the neuron's synaptic inputs.
  The synaptic inputs (generating the pre-synaptic activity) are then thought to be represented by the error term $\eta_{\gamma(\theta)}(y,x)$, so that expected trajectory of the outer product $\eta_{\gamma(\theta)}(y,x)\, h(x)^T$ computes the correlation between pre- and post-synaptic acivity.

  Note that this means that typically one assumes that $\lambda_\theta < \lambda_\rho$, because the neural activity $x$ itself must change on a faster timescale than the synaptic weights $\theta$, in order for $\theta$ to learn these correlations.
\end{rmk}

Given the foregoing definition, we obtain the following theorem.

\begin{thm} \label{thm:hebb-laplace}
  The Hebb-Laplace doctrine $\Hebb$ defines an identity-on-objects strict monoidal functor $\cat{H}\hookrightarrow\cat{G}_\Ha\to\Hier^\nn_{\gauss(\Kl(\Pa_{\mathbf{Fd}}))}$.
\end{thm}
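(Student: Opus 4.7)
The plan is to follow the same strategy as Theorem \ref{thm:laplace-functorial}, factoring the functor as the embedding $\cat{H}\hookrightarrow\cat{G}_\Ha$ followed by the assignment $\cat{G}_\Ha\to\Hier^\nn_{\gauss(\Kl(\Pa_{\mathbf{Fd}}))}$ sending a game to the dynamical system obtained by stochastic gradient descent on its loss function with respect to its external parameterization. The first factor is a straightforward analogue of Lemma \ref{lemma:gauss-embeds}: each morphism $\gamma:X\to Y$ in $\cat{H}$, being a $\Theta_X$-parameterized Gaussian channel of the form $x\mapsto\theta h(x)+\omega$, determines a unique Hebbian-Laplacian game $(X,X)\xto{\Theta_X\times X}(Y,Y)$ whose backwards channel and loss are fixed by the Laplace recipe with substitution $\gamma\mapsto\gamma(\theta)$, and monoidal structure maps go to their trivially parameterized counterparts; faithfulness and strict monoidality are immediate.

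For the second factor, I would first prove a Hebb analogue of Proposition \ref{prop:laplace-sgd}: namely, that for a Hebbian-Laplacian game $(\gamma,\rho,\phi)$ the two components of the update map coincide (up to $\eta^\Pa_{\Theta_X\times X}$) with $(\theta,x)-(\lambda_\theta,\lambda_\rho)\cdot\partial_{(\theta,x)}\Fa^L$. The $x$-gradient computation is identical to the Laplace case after the substitution $\gamma\mapsto\gamma(\theta)$. For the $\theta$-gradient, I would compute $\partial_\theta E_{(\pi,\gamma(\theta))}(x,y)$: only the $-\log p_{\gamma(\theta)}(y|x)$ term depends on $\theta$ (and only through the mean $\mu_{\gamma(\theta)}(x)=\theta h(x)$, assuming the covariance of $\omega$ is $\theta$-independent, as we must for the Hebbian form), so the chain rule gives $\partial_\theta E = -\eta_{\gamma(\theta)}(y,x)\,h(x)^T$, matching $\theta^u$. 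As in Proposition \ref{prop:laplace-sgd}, taking expectations over the context variable $y$ yields stochastic gradient descent on $\phi$.

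Preservation of identities is immediate since an identity game has trivial parameterization and zero loss, so $\Hebb$ produces the identity 1-cell in $\Hier^\nn$. For composition, the argument mirrors the end of the proof of Theorem \ref{thm:laplace-functorial}. Given composable games $g:(X,X)\xto{\Theta_X\times X}(Y,Y)$ and $h:(Y,Y)\xto{\Theta_Y\times Y}(Z,Z)$, the composite state space $\Theta_X\times X\times\Theta_Y\times Y$ matches on both sides; outputs agree because $\Hebb$ fills in parameters with the dynamical state and lens composition is functorial. For the update maps, one computes $\partial_{(\mu_\rho,\theta,\mu_\sigma,\theta')}$ of the composite loss $(\psi\phi)$ (as given by Example \ref{ex:ext-para-sgame}): only the $\phi$-summand depends on $(\mu_\rho,\theta)$ and only the $\psi$-summand on $(\mu_\sigma,\theta')$, and the $\phi$-summand is further averaged over $y\sim\sigma(\mu_\sigma)_{\gamma\klcirc\pi_X}(z)$. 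Exchanging $\partial_{(\mu_\rho,\theta)}$ with this expectation by linearity and then matching against the diagram \eqref{diag:laplace-composite} (now with the extra $\theta$- and $\theta'$-wires simply running in parallel, exactly as in the Laplace case) establishes $\bigl(\Hebb(h)\circ\Hebb(g)\bigr)^u=\Hebb(h\circ g)^u$. Strict monoidality follows as before since $\Hebb$ is identity-on-objects and sends structure morphisms to constant systems emitting the corresponding structure in $\Hier^\nn_{\gauss(\Kl(\Pa_{\mathbf{Fd}}))}$.

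The main obstacle is the bookkeeping in the composition step: the state space now carries the extra factors $\Theta_X\times\Theta_Y$ of synaptic weights, and one must verify that these factors decouple correctly across composition -- that is, that the $\theta$-gradient of the upstream factor genuinely does not interact with the $\theta'$-gradient of the downstream one through the Laplacian energies. This decoupling holds because each game's loss function only sees its own local forwards channel $\gamma(\theta)$ (resp.\ $\delta(\theta')$), so partial derivatives in the mixed coordinates vanish; the separation of timescales ($\lambda_\theta<\lambda_\rho$) is immaterial here since it affects only the quantitative step size rather than functoriality. Once this decoupling is stated cleanly, the argument reduces exactly to the Laplacian one with an extra inert parameter coordinate.
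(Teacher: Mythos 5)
Your proposal follows essentially the same route as the paper's own proof: an embedding lemma $\cat{H}\hookrightarrow\cat{G}_\Ha$ analogous to Lemma \ref{lemma:gauss-embeds}, a stochastic-gradient-descent proposition analogous to Proposition \ref{prop:laplace-sgd} obtained by the substitution $\gamma\mapsto\gamma(\theta)$ together with the extra $\theta$-gradient, and then the identity/state-space/output/update-map checks for composites via the composite loss of Example \ref{ex:ext-para-sgame} and the double-strength string diagram, with the weight coordinates $\Theta_X,\Theta_Y$ riding along exactly as you describe. The only discrepancy is the sign of the weight gradient: the paper computes $\partial_\theta \Fa^L = +\,\eta_{\gamma(\theta)}(y,x)\,h(x)^T$, so that $\theta^u = \theta - \lambda_\theta\,\partial_\theta\Fa^L$ is descent, whereas your $\partial_\theta E = -\,\eta_{\gamma(\theta)}(y,x)\,h(x)^T$ (which follows the standard Gaussian log-density convention, while the paper's equation \eqref{eq:laplace-gauss} carries its own sign typo) would make the stated $\theta^u$ an ascent step -- a bookkeeping slip worth fixing, but one that does not affect the functoriality argument.
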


This theorem follows in the same way as the corresponding result for the Laplace doctrine; and so we begin with a small lemma, and subsequently show that the doctrine arises by stochastic gradient descent, before putting the pieces together to prove the theorem itself.

\begin{lemma} \label{lemma:h-embeds}
  There is an identity-on-objects strict monoidal embedding $\cat{H}\hookrightarrow\cat{G}_\Ha$.
  \begin{proof}[Proof sketch]
    The proof proceeds much as the proof of Lemma \ref{lemma:gauss-embeds}, except that the forwards channels of games in the image of the embedding are given by the parameterized morphisms of $\cat{H}$.
  \end{proof}
\end{lemma}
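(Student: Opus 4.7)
The plan is to follow the blueprint of Lemma \ref{lemma:gauss-embeds}, with the added wrinkle that generating morphisms in $\cat{H}$ now carry the nontrivial parameter space $\Theta_X$ on the forwards component. I would define the embedding $E : \cat{H} \hookrightarrow \cat{G}_\Ha$ to act as the identity on the underlying finite-dimensional Cartesian spaces (sending each object $X$ to the pair $(X,X)$) and to be defined on generators in two cases. First, the trivially-parameterized structure morphisms of $\gauss(\para(\star))$ inherited by $\cat{H}$ are sent to the corresponding trivially-parameterized structure morphisms of $\cat{G}_\Ha$, each equipped with the zero loss function, exactly as in the identity clause of Example \ref{ex:ext-para-sgame}. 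Second, a Hebbian generator $\gamma : \Theta_X \to \gauss(\para(\star))(X,Y)$ of the form $\theta \mapsto \bigl(x \mapsto \theta\, h(x) + \omega\bigr)$ is sent to the unique Hebbian-Laplacian game $(\gamma, \rho, \phi) : (X,X) \xto{\Theta_X \times X} (Y,Y)$ whose forwards channel is $\gamma$; the backwards channel $\rho$ and loss function $\phi$ are then uniquely determined by conditions (3) and (4) of the definition of Hebbian-Laplacian game (using the Laplacian recipe with $\gamma \mapsto \gamma(\theta)$).

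Next I would verify that $E$ extends uniquely to a functor on all of $\cat{H}$ by functorial extension along the generators. Identity preservation follows because the trivially-parameterized identity in $\cat{H}$ is itself a structure morphism and is sent to the identity game on $(X,X)$, which by Example \ref{ex:ext-para-sgame} is the identity 1-cell in $\cat{G}_\Ha$. Composition is preserved by construction: composition in $\cat{H} \subseteq \eP\gauss(\para(\star))$ takes the product of parameter spaces and composes the underlying forwards channels in $\gauss(\para(\star))$, while composition of parameterized statistical games in $\cat{G}_\Ha$ (per Example \ref{ex:ext-para-sgame}) takes the product of parameter spaces and composes the underlying Bayesian lenses, with the loss functions added accordingly; hence agreement on generators forces agreement on all composites.

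Faithfulness and injectivity on objects are straightforward: the latter is immediate, while the former holds because the Hebbian-Laplacian game associated to a generator is entirely reconstructible from its forwards component, and the forwards component is in bijection with the original morphism in $\cat{H}$. Strict monoidality follows since the monoidal structures on both sides are inherited by taking the product of parameter spaces and the tensor $\otimes$ on the underlying channels, and these match on generators by construction; the monoidal unit is preserved on the nose because $E$ is identity-on-objects and the trivially-parameterized structure morphisms are also trivially parameterized in the image.

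The main obstacle to be aware of is not a genuine difficulty but a potential confusion: the composite of two Hebbian-Laplacian games in $\cat{G}_\Ha$ is generally \emph{not} itself a Hebbian-Laplacian game, since its backwards channel is the lens-composite of the two Laplacian posteriors and typically differs from the Laplacian posterior of the composite forwards channel (compare Remark \ref{rmk:mean-field}). This does not obstruct the embedding, however, since $\cat{G}_\Ha$ is by definition the subcategory \emph{generated by} Hebbian-Laplacian games and therefore contains all such composites; once this is recognized, the functor is determined unambiguously by its values on the generators of $\cat{H}$.
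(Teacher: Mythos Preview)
Your proposal is correct and follows essentially the same approach as the paper: map structure morphisms to trivially-parameterized structure morphisms, map each Hebbian generator $\gamma$ to the unique Hebbian-Laplacian game with $\gamma$ as its forward channel, and observe that faithfulness and strict monoidality follow directly. Your additional observation about composites of Hebbian-Laplacian games not themselves being Hebbian-Laplacian games (and why this does not matter, since $\cat{G}_\Ha$ is defined as the subcategory \emph{generated} by such games) is a helpful clarification that the paper leaves implicit.
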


\begin{prop} \label{prop:hebb-sgd}
  Given a Hebbian-Laplacian statistical game $(\gamma,\rho,\phi):(X,X)\xto{\Theta_X\times X}(Y,Y)$, $\Hebb(\gamma)$ is obtained by stochastic gradient descent of the loss function $\phi$ with respect to the weight matrix $\theta:\Theta_X$ of the channel $\gamma$ and the mean $x:X$ of the posterior $\rho$.
  \begin{proof}
    The proof proceeds much as the proof of Proposition \ref{prop:laplace-sgd}, except now the forwards channel $\gamma$ is parameterized: this gives us another factor against which to perform gradient descent, and furthermore means that $\gamma(\theta)$ must be substituted for $\gamma$ in expressions in the derivation of $\mu_\rho$.

    The first such expression is the definition of the loss function $\phi : \sum_{(\theta,x):\Theta_X\times X} \cctx\bigl(\gamma(\theta),\rho(x)\bigr)\to\rr$;
    we will write $\phi_{(\theta,x)}$ for the component of $\phi$ at $(\theta,x)$ with the corresponding type $\cctx\bigl(\gamma(\theta),\rho(x)\bigr)\to\rr$.
    We have $\phi_{(\theta,x)}(\pi,k) = \E_{y\sim\efb{\pi}{\gamma(\theta)}{k}}\bigl[\Fa^L(y)\bigr]$, where now
    \[
    \Fa^L(y) = {}- \log p_{\gamma(\theta)}(y|x) - \log p_\pi(x) - S_X\bigl[\rho(x,\pi,y)\bigr] \, .
    \]
    We find
    \begin{align*}
      \partial_x \Fa^L(y) &= \partial_x E_{(\pi,\gamma(\theta))} \\
      &= {}- \partial_x \mu_{\gamma(\theta)} (x)^T {\Sigma_{\gamma(\theta)}(x)}^{-1} \epsilon_{\gamma(\theta)}(y,x) + {\Sigma_\pi}^{-1} \epsilon_\pi(x) \\
      &= {}- \partial_x h(x)^T \theta^T \eta_{\gamma(\theta)}(y,x) + \eta_\pi(x)
    \end{align*}
    and
    \begin{align*}
      \partial_\theta \Fa^L(y) &= \partial_\theta E_{(\pi,\gamma(\theta))} \\
      &= {}- \frac{\partial_\theta}{2} \innerprod{\epsilon_{\gamma(\theta)}(y,x)}{{\Sigma_{\gamma(\theta)}(x)}^{-1} {\epsilon_{\gamma(\theta)}(y,x)}} \\
      &= {}- \frac{\partial_\theta}{2} \innerprod{y - \theta h(x)}{{\Sigma_{\gamma(\theta)}(x)}^{-1} {\bigl(y - \theta h(x)\bigr)}} \\
      &= {\Sigma_{\gamma(\theta)}(x)}^{-1} {\bigl(y - \theta h(x)\bigr)} \, h(x)^T \\
      &= {\Sigma_{\gamma(\theta)}(x)}^{-1} {\epsilon_{\gamma(\theta)}(y,x)} \, h(x)^T \\
      &= \eta_{\gamma(\theta)}(y,x)\, h(x)^T \, .
    \end{align*}
    Consequently, we have
    \begin{align*}
      \mu_\rho(\theta,x,\pi,y) &= x + \lambda_\rho\, \partial_x h(x)^T \theta^T \eta_{\gamma(\theta)}(y,x) - \lambda_\rho\, \eta_\pi(x) \\
      &= x - \lambda_\rho\, \partial_x \Fa^L(y)
    \end{align*}
    and
    \begin{align*}
      \theta^u(\theta,x,y) &= \theta - \lambda_\theta\, \eta_{\gamma(\theta)}(y,x)\, h(x)^T \\
      &= \theta - \lambda_\theta\, \partial_\theta \Fa^L(y) \, ,
    \end{align*}
    and this means that we can write
    \begin{align*}
      \Hebb(\gamma)^u(\theta,x,\pi,y) &= \eta^\Pa_{\Theta_X\times X} \circ \Bigl((\theta,x) - (\lambda_\theta,\lambda_\rho)\, \partial_{(\theta,x)} \Fa^L(y)\Bigr) \\
      &= \eta^\Pa_{\Theta_X\times X} \circ \Bigl(p - \lambda\, \partial_p \Fa^L(y)\Bigr)
    \end{align*}
    where $p := (\theta,x)$ and $\lambda := (\lambda_\theta,\lambda_\rho)$, which establishes that $\Hebb(\gamma)^u$ descends the gradient of the free energy with respect to the parameterization $p$.

    Finally, with $y$ sampled from a fixed context, we can see that the expected trajectory of $\Hebb(\gamma)$ follows
    \begin{align*}
      & \E_{y\sim\efb{\pi}{\gamma(\theta)}{k}} \, \Bigl(p - \lambda\, \partial_p\, \Fa^L(y) \Bigr) \\
      &= \Bigl(p - \lambda\, \partial_p\, \E_{y\sim\efb{\pi}{\gamma(\theta)}{k}} \left[ \Fa^L(y) \right]\Bigr) \\
      &= \Bigl(p - \lambda\, \partial_p\, \phi_p(\pi,k) \Bigr)
    \end{align*}
    which demonstrates that $\Hebb(\gamma)$ performs stochastic gradient descent of the loss function.
  \end{proof}
\end{prop}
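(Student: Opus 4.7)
The plan is to mirror the proof of Proposition \ref{prop:laplace-sgd}, extending it to the additional parameter $\theta : \Theta_X$ that indexes the forwards channel of a Hebbian-Laplacian game. First I would unpack the loss function: using the substitution $\gamma \mapsto \gamma(\theta)$ demanded by condition (4) of the definition of a Hebbian-Laplacian game, we obtain $\phi_{(\theta,x)}(\pi,k) = \E_{y\sim\efb{\pi}{\gamma(\theta)}{k}}\bigl[\Fa^L(y)\bigr]$ where $\Fa^L(y) = -\log p_{\gamma(\theta)}(y|x) - \log p_\pi(x) - S_X[\rho(x,\pi,y)]$. Because $\rho$ is again defined by the Laplace covariance formula, its entropy depends only on the covariance, and thus the partial derivatives with respect to $x$ and $\theta$ can be computed from the energy alone.

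Next I would carry out the two gradient computations separately. For $\partial_x \Fa^L(y)$, the derivation is essentially that of the Laplace case, except that since $\gamma(\theta)$ is of the form $x \mapsto \theta\, h(x) + \omega$, the mean $\mu_{\gamma(\theta)}(x) = \theta\, h(x)$ has Jacobian $\partial_x \mu_{\gamma(\theta)}(x) = \theta\, \partial_x h(x)$; taking the transpose yields the term $\partial_x h(x)^T \theta^T \eta_{\gamma(\theta)}(y,x)$ in the gradient, matching the definition of $\mu_\rho(\theta,x,\pi,y)$ in the update map $\Hebb(\gamma)^u$. For $\partial_\theta \Fa^L(y)$, the key new computation, I would use the representation $\epsilon_{\gamma(\theta)}(y,x) = y - \theta h(x)$ and differentiate the quadratic form $\tfrac{1}{2}\innerprod{y-\theta h(x)}{\Sigma_{\gamma(\theta)}(x)^{-1}(y-\theta h(x))}$ with respect to $\theta$; a standard matrix-calculus calculation gives $\partial_\theta \Fa^L(y) = \eta_{\gamma(\theta)}(y,x)\, h(x)^T$, which is exactly the Hebbian outer product appearing in $\theta^u(\theta,x,y) = \theta - \lambda_\theta\, \eta_{\gamma(\theta)}(y,x)\, h(x)^T$.

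Having done both gradients, I would assemble them: writing $p := (\theta, x)$ and $\lambda := (\lambda_\theta, \lambda_\rho)$, the two updates in $\Hebb(\gamma)^u$ combine into $p - \lambda\, \partial_p \Fa^L(y)$, composed with the monadic unit $\eta^\Pa_{\Theta_X \times X}$; this shows that $\Hebb(\gamma)^u$ deterministically descends the gradient of the Laplacian free energy with respect to the combined parameterization. The final step, as in the Laplace case, is to pass from gradient descent of $\Fa^L(y)$ (for a fixed sample $y$) to stochastic gradient descent of $\phi$: since in the dynamical semantics $y$ is drawn from the context distribution $\efb{\pi}{\gamma(\theta)}{k}$, linearity of expectation gives $\E_{y}\bigl[p - \lambda\, \partial_p \Fa^L(y)\bigr] = p - \lambda\, \partial_p \phi_p(\pi,k)$, establishing the claim.

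The main obstacle is the matrix-calculus identification $\partial_\theta \Fa^L(y) = \eta_{\gamma(\theta)}(y,x) h(x)^T$: one must handle the dependence of the Gaussian log-density on $\theta$ through both the mean $\theta h(x)$ and, potentially, the covariance, and verify that under the assumption $\Sigma_{\gamma(\theta)}$ does not depend on $\theta$ (implicit in the Hebbian channel class $\cat{H}$, since only the mean is a function of $\theta$), the precision-weighted error multiplied by $h(x)^T$ is the exact gradient — reproducing the ``local Hebbian'' form of the weight update.
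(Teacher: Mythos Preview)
Your proposal is correct and follows essentially the same approach as the paper: unpack the loss, compute $\partial_x \Fa^L$ and $\partial_\theta \Fa^L$ from the energy (using $\mu_{\gamma(\theta)}(x)=\theta h(x)$ and the matrix-calculus derivative of the quadratic form), match them to $\mu_\rho$ and $\theta^u$, assemble into $p-\lambda\,\partial_p\Fa^L(y)$, and pass to stochastic gradient descent by linearity of expectation. Your explicit remark that $\Sigma_{\gamma(\theta)}$ is independent of $\theta$ for channels in $\cat{H}$ is a point the paper uses silently when differentiating the quadratic form.
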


\begin{proof}[Proof of Theorem \ref{thm:hebb-laplace}]
  Lemma \ref{lemma:h-embeds} gives us the first factor $\cat{H}\hookrightarrow\cat{G}_\Ha$, so we only need to establish that the Hebb-Laplace doctrine obtains by pulling a functor $\cat{G}_\Ha\to\Hier^\nn_{\gauss(\Kl(\Pa_{\mathbf{Fd}}))}$ back along this inclusion.
  We now turn to establishing that stochastic gradient descent returns the desired identity-on-objects functor $\cat{G}_\Ha\to\Hier^\nn_{\gauss(\Kl(\Pa_{\mathbf{Fd}}))}$.
  Proposition \ref{prop:hebb-sgd} shows that $\Hebb$ is obtained by applying stochastic gradient descent to morphisms in $\cat{G}_\Ha$, so we need to show that the resulting mapping is functorial.

  As in the case of Theorem \ref{thm:laplace-functorial}, the structure morphisms are preserved trivially:
  they have trivial parameterization, and so stochastic gradient descent returns the trivial systems constantly emitting the corresponding lenses;
  in particular, this means that stochastic gradient descent preserves identities.

  We now show that, for composable games $h$ and $g$, $\Hebb(h)\circ\Hebb(g) = \Hebb(h\circ g)$.
  This means demonstrating equalities between state spaces, output maps, and update maps.
  As for Theorem \ref{thm:laplace-functorial}, the state spaces are given by the external parameterization, and the parameterization of the composite game $h\circ g$ and the state space of the composite system $\Hebb(h)\circ\Hebb(g)$ are both given by taking the product of the factors, and so the state spaces on the left- and right-hand sides of the desired equation are equal.

  The proof that the equality holds for output maps is also as in the proof of Theorem \ref{thm:laplace-functorial}:
  the output of a composite system is given by composing the output lenses of the factors, which is the same as the output returned by $\Hebb$ on a composite game, since outputs in the image of $\Hebb$ are obtained by filling in the external parameter.

  We now turn to the update maps, for which we need to show that $\bigl(\Hebb(h)\circ\Hebb(g)\bigr)^u = \Hebb(h\circ g)^u$.
  Suppose $g:=(\gamma,\rho,\phi):(X,X)\to(Y,Y)$ and $h:=(\sigma,\delta,\psi):(Y,Y)\to(Z,Z)$ are Hebbian-Laplacian statistical games; we will denote the corresponding parameters by $(\theta_\gamma,\mu_\rho)$ and $(\theta_\delta,\mu_\sigma)$ respectively.
  Following the proof of Theorem \ref{thm:laplace-functorial}, we can write the loss function of the composite game $(\sigma,\delta,\psi)\circ(\gamma,\rho,\phi)$ as
  \begin{align*}
  & \E_{z\sim\efb{\pi}{\gamma(\theta_\gamma)}{\delta(\theta_\delta)^*k}}\Big[
    \Fa^L\bigl(\sigma(\mu_\sigma)_{\gamma(\theta_\gamma)\klcirc\pi_X},\delta(\theta_\delta);\gamma(\theta_\gamma)\klcirc\pi_X,z\bigr) \\
  & \qquad\quad +
    \E_{y\sim\sigma(\mu_\sigma)_{\gamma(\theta_\gamma)\klcirc\pi_X}(z)}\left[
      \Fa^L\bigl(\rho(\mu_\rho)_{\pi_X},\gamma(\theta_\gamma);\pi_X,y\bigr)
      \right]
    \Big] \, .
  \end{align*}
  (This expression is obtained by making the substitutions $\gamma\mapsto\gamma(\theta_\gamma)$ and $\delta\mapsto\delta(\theta_\delta)$ in the corresponding expression in the proof of Theorem \ref{thm:laplace-functorial}.)

  As before, $z$ and $\pi_X$ are supplied by the inputs to the dynamical system, and so we obtain a function
  \begin{align*}
    f : (z,\pi_X,\theta_\gamma,\mu_\rho,\theta_\delta,\mu_\sigma) & \mapsto \Fa^L\bigl(\sigma(\mu_\sigma)_{\gamma(\theta_\gamma)\klcirc\pi_X},\delta(\theta_\delta);\gamma(\theta_\gamma)\klcirc\pi_X,z\bigr) \\
  & \qquad\quad +
    \E_{y\sim\sigma(\mu_\sigma)_{\gamma(\theta_\gamma)\klcirc\pi_X}(z)}\left[
      \Fa^L\bigl(\rho(\mu_\rho)_{\pi_X},\gamma(\theta_\gamma);\pi_X,y\bigr)
      \right] \, .
  \end{align*}
  If we write $p:=(\theta_\gamma,\mu_\rho)$ and $q:=(\theta_\delta,\mu_\sigma)$, then $(p,q)$ denotes the parameter for $h\circ g$.
  Since $\Hebb$ performs stochastic gradient descent with respect to the parameterization, $\Hebb(h\circ g)^u$ is therefore defined as returning the point distribution on $(p,q) - \lambda\, \partial_{(p,q)} f(z,\pi_X)$, where $\lambda := (\lambda_p, \lambda_q)$, and $\lambda_p = (\lambda_\gamma,\lambda_\rho)$ and $\lambda_q = (\lambda_\delta,\lambda_\sigma)$.

  We have $\partial_{(p,q)} f = \bigl(\partial_p f, \partial_q f)$ and so
  \[
  (p,q) - \lambda\, \partial_{(p,q)} f(z,\pi_X)
  = \bigl(p - \lambda_p\, \partial_p f(z,\pi_X), q - \lambda_q\, \partial_q f(z,\pi_X)\bigr) \, .
  \]
  We make some auxiliary definitions
  \begin{gather*}
    g^u(\theta_\gamma,\mu_\rho,\pi,y) := (\theta_\gamma,\mu_\rho) - \lambda_p\, \partial_{(\theta_\gamma,\mu_\rho)} \Fa^L\bigl(\rho(\mu_\rho)_\pi,\gamma(\theta_\gamma);\pi,y\bigr) \\
    h^u(\theta_\delta,\mu_\sigma,\pi',z) := (\theta_\delta,\mu_\sigma) - \lambda_q\, \partial_{(\theta_\delta,\mu_\sigma)} \Fa^L\bigl(\sigma(\mu_\sigma)_{\pi'},\delta(\theta_\delta);\pi',z\bigr)
  \end{gather*}
  and find that
  \begin{align*}
    &  (p,q) - \lambda\, \partial_{(p,q)} f(z,\pi_X) \\
    &= (\theta_\gamma,\mu_\rho,\theta_\delta,\mu_\sigma) - \lambda\, \partial_{(\theta_\gamma,\mu_\rho,\theta_\delta,\mu_\sigma)} f(z,\pi_X) \\
    &= \bigl((\theta_\gamma,\mu_\rho) - \lambda_p\, \partial_{(\theta_\gamma,\mu_\rho)} f(z,\pi_X), (\theta_\delta,\mu_\sigma) - \lambda_q\, \partial_{(\theta_\delta,\mu_\sigma)} f(z,\pi_X)\bigr) \\
    &= \left( \E_{y\sim\sigma(\mu_\sigma)_{\gamma(\theta_\gamma)\klcirc\pi_X}(z)} \bigl[ g^u(\theta_\gamma,\mu_\rho,\pi_X,y) \bigr],\, h^u\bigl(\theta_\delta,\mu_\sigma,\gamma(\theta_\gamma)\klcirc\pi_X,z\bigr) \right) \\
    &= \Bigl( g^u(\theta_\gamma,\mu_\rho,\pi_X) \klcirc \sigma(\mu_\sigma)_{\gamma(\theta_\gamma)\klcirc\pi_X}(z),\, h^u\bigl(\theta_\delta,\mu_\sigma,\gamma(\theta_\gamma)\klcirc\pi_X,z\bigr) \Bigr) \, .
  \end{align*}
  Writing $PQ$ to denote the composite parameter space $\Theta_X\times X\times\Theta_Y\times Y$, the foregoing computation defines $\Hebb(h\circ g)^u : PQ\times\gauss(X)\times Z \to \gauss(PQ)$ as
  \begin{equation} \label{eq:hebb-hg-u}
    \Hebb(h\circ g)^u(\theta_\gamma,\mu_\rho,\theta_\delta,\mu_\sigma,\pi,z)
    = \eta^\Pa_{PQ} \Bigl( g^u(\theta_\gamma,\mu_\rho,\pi_X) \klcirc \sigma(\mu_\sigma)_{\gamma(\theta_\gamma)\klcirc\pi_X}(z),\, h^u\bigl(\theta_\delta,\mu_\sigma,\gamma(\theta_\gamma)\klcirc\pi_X,z\bigr) \Bigr) \, .
  \end{equation}

  The update map of the composite system $\bigl(\Hebb(h)\circ\Hebb(g)\bigr)^u$ is given by composing the double strength $\Fun{dst}:\gauss(P)\times\gauss(Q)\to\gauss(P\times Q)$ after the string diagram
  \[ \tikzfig{img/hl-rho-sigma-1} \]
  where $\gamma^\flat_*$ indicates the uncurrying of the pushforwards of the parameterized forwards channel $\gamma$:
  \begin{align*}
    & \gamma : \Theta_X \to \gauss\bigl(\para(\star)\bigr)(X,Y) \\
    \xmapsto{\text{embeds}} \qquad&\qquad \Theta_X \to \gauss(\Kl(\Pa))(X,Y) \\
    \xmapsto{(-)_*} \qquad&\qquad \Theta_X \to \cat{E}\bigl(\gauss(\Kl(\Pa))(1,X),\gauss(\Kl(\Pa))(1,Y)\bigr) \\
    \xmapsto{\sim} \qquad&\qquad \Theta_X \to \cat{E}(\gauss(X),\gauss(Y)) \\
    \xmapsto{(-)^\flat} \qquad&\;\, \gamma^\flat_* : \Theta_X \times \gauss(X) \to \gauss(Y) \, .
  \end{align*}
  Next, note that we can write $\Hebb(g)^u$ and $\Hebb(h)^u$ as
  \begin{gather*}
    \Hebb(g)^u(\theta_\gamma,\mu_\rho,\pi,y) = \eta^\Pa_P\bigl( g^u(\theta_\gamma,\mu_\rho,\pi,y) \bigr) \\
    \Hebb(h)^u(\theta_\delta,\mu_\sigma,\pi',z) = \eta^\Pa_Q\bigl( h^u\bigl(\theta_\delta,\mu_\sigma,\pi',z\bigr) \bigr)
  \end{gather*}
  where $P := \Theta_X\times X$ and $Q := \Theta_Y\times Y$, and that $\eta^\Pa_{PQ} = \Fun{dst}(\eta^\Pa_P,\eta^\Pa_Q)$.
  Reading the string diagram and comparing with equation \eqref{eq:hebb-hg-u}, we therefore find that $\bigl(\Hebb(h)\circ\Hebb(g)\bigr)^u = \Hebb(h\circ g)^u$.

  Finally, the proof that $\Hebb$ is strict monoidal is precisely analogous to the proof that $\Laplace$ is strict monoidal: $\Hebb$ is identity-on-objects and maps structure morphisms to structure morphisms, so that the associativity and unitality conditions are immediately satisfied.
\end{proof}

%

\section{References}

\printbibliography[heading=none]

\end{document}